\definecolor{Gray}{gray}{0.9}
\definecolor{LightCyan}{rgb}{0.88,1,1}
\newcolumntype{g}{>{\columncolor{Gray}}c}
\newtheorem{theorem}{Theorem}
\newtheorem{corollary}{Corollary}
\newtheorem{lemma}{Lemma}
\newtheorem{remark}{Remark}
\newtheorem{definition}{Definition}
\newtheorem{assumption}{Assumption}
\newtheorem{proposition}{Proposition}
\DeclareRobustCommand{\ubar}[1]{\underaccent{\bar}{#1}}
\begin{document}

\title{Mechanism Design for Online Resource Allocation:\\ A Unified Approach\thanks{The conference version of this paper was presented at ACM SIGMETRICS 2020, and the journal version was published by Proceedings of ACM on Measurement and Analysis of Computing Systems, vol. 4, no. 2, June 2020. (\textit{Corresponding Author: Xiaoqi Tan})}
}

\author{Xiaoqi Tan\thanks{Department of Computing Science, University of Alberta. Email: {\tt xiaoqi.tan@ualberta.ca.} This work was done while the author was with the University of Toronto. }
\and Bo Sun\thanks{Hong Kong University of Science and Technology. Email: {\tt eebosun@ust.hk}} 
\and Alberto Leon-Garcia\thanks{University of Toronto. Email: {\tt alberto.leongarcia@utoronto.ca}} 
\and Yuan Wu\thanks{University of Macau. Email: {\tt yuanwu@um.edu.mo}} 
\and Danny H.K. Tsang\thanks{Hong Kong University of Science and Technology. Email: {\tt eetsang@ust.hk}} 
}

\date{}

\begin{titlepage}
\maketitle
\thispagestyle{empty}

\begin{abstract}
This paper concerns the mechanism design for online resource allocation in a strategic setting.  In this setting, a single supplier allocates capacity-limited resources to requests that arrive in a sequential and arbitrary manner. Each request is associated with an agent who may act selfishly to misreport the requirement and valuation of her request. The supplier charges payment from agents whose requests are satisfied, but incurs a load-dependent supply cost. The goal is to design an incentive compatible  online mechanism, which determines not only the resource allocation of each request, but also the payment of each agent,  so as to (approximately) maximize  the social welfare (i.e., aggregate valuations minus supply cost). We study this problem under the framework of competitive analysis. The major contribution of this paper is the development of a unified approach that achieves the best-possible competitive ratios for setups with different supply costs. Specifically, we show that when there is no supply cost or the supply cost function is linear, our model is essentially a standard 0-1 knapsack problem, for which our approach achieves logarithmic competitive ratios that match the state-of-the-art (which is optimal). For the more challenging setup when the supply cost is  strictly-convex, we provide online mechanisms, for the first time, that lead to the optimal competitive ratios as well.  To the best of our knowledge, this is the first approach that unifies the characterization of optimal competitive ratios in online resource allocation for different setups including zero, linear and strictly-convex supply costs.
\end{abstract}

\end{titlepage}


\section{Introduction}
We study the mechanism design for online resource allocation problems. A single supplier who allocates capacity-limited resources (e.g., computing cycles, network bandwidth, energy, etc.) to requests that arrive in a sequential and arbitrary manner. We consider a strategic setting where each request is owned by a self-interested agent who may deliberately misreport the resource requirement and value of the request.  A request is satisfied if the required resource is allocated to the corresponding agent. The supplier charges payment from agents whose requests are satisfied, and affords a supply cost which is a function of the total resource allocated. The goal is to design not only the resource allocation of each request, but also the payment of each agent, so that agents are well-incentivized to follow their true preferences (i.e., incentive compatible \cite{AGT}) and meanwhile, the social welfare (i.e., the aggregate value minus the supply cost ) can be approximately maximized.

A prominent application of this model is the market-based resource allocation in cloud computing \cite{value_based_scheduling}. Here, the resource may represent computing cycles that can run at different speeds. Cloud service providers charge money from  customers who purchase their services\footnote{{\color{black} In practice,  major cloud service providers such as Amazon Web Services, Microsoft Azure, and Google Cloud often use fixed pricing schemes. However, dynamic pricing is also adopted under some circumstances. For example, customers of Amazon EC2 Spot Instances (\url{https://aws.amazon.com/ec2/spot/}) are charged based on spot prices that are adjusted in real-time. Typically, the spot prices are determined based on multiple factors such as long-term trend in demand and supply of spot instance capacity.}}, but must pay a considerable amount of power and cooling cost to maintain their data centers. Moreover, such operational costs are often load-dependent and  is usually an increasing function of the total resource allocated (e.g., CPU). Another application of the investigated model may arise in the context of network routing with congestion cost \cite{network_RA, congestion_games_2005,  flow_routing_2005}. Each incoming user wants to own some pair of connections with a valuation if any feasible connection is established. Congestion cost occurs when a link is occupied by multiple users, and the cost can be modelled as a traffic-dependent increasing function \cite{network_RA, diseconomy_cost}. The target is to maximize the total valuation of  routed connections minus the congestion cost.  In reality, such congestion costs can either reflect the total energy needed to support the network, or simply a virtual cost to penalize the degradation of the quality-of-service (e.g., latency \cite{flow_routing_2005, congestion_games_2005}). The authors of \cite{diseconomy_cost} indicated that the power-rate curve for networking or computing devices (e.g., CPU, communication links, and edge routers, etc) exhibit \textit{diseconomy-of-scale} properties (i.e., a convex curve), and thus polynomial functions are commonly used to model the power requirements of the network \cite{network_RA}.

Motivated by the above applications, this paper primarily focuses on setups when the supply cost function is strictly-convex and differentiable, i.e., the marginal cost is strictly increasing. 
A basic challenge for online resource allocation in such setups is  as follows: if the limited resource is allocated too aggressively, then an excessive portion of the resource may be allocated to earlier requests with low values. This will increase the total cost for the supplier and thus increase the payment for future agents, which will consequently prevent the later agents from purchasing the resources even if their valuations are higher than the earlier ones. On the other hand, if the payment is set too high, a majority of the requests may be declined, leading to a poor performance as well. This paper aims to tackle this challenge by designing an online mechanism that leads to the best-possible performance in social welfare.  More importantly, our optimal results for strictly-convex costs extend to setups with no supply cost (i.e., zero cost) or linear supply cost,  leading to a  unified approach for online resource allocation with zero, linear or strictly-convex supply costs.

\subsection{Related Results}
\label{section_literature}
Different variants of online resource allocation problems  have been studied in both the non-strategic setting, e.g., \cite{b_matching, Adword2009, covering_packing, Buchbinder2009, knapsack2008, online_scheduling},  and the strategic setting, e.g.,  \cite{Bartal2003, Blum2011, Buchbinder2015, Huang2015, posted_price_MAMA}.

In the non-strategic setting, the main focus is to design online algorithms with a competitive fraction of optimum.  Classic problems in this stream of studies include online bipartite matching \cite{b_matching}, Adwords problems \cite{Adword2009, Mehta2013,  Adwords2007},  online covering and packing problems \cite{online_routing_FOCS, Buchbinder2009}, online knapsack problems \cite{knapsack2008, Zhang2017}, and one-way trading problems \cite{trading2001}. For example, the authors of \cite{Adwords2007} introduced the Adwords problem and  characterized the optimal competitive ratio  by assuming each bid is much smaller than the total budget (i.e., the infinitesimal assumption). Similarly, the authors of \cite{knapsack2008} proved that, when the weight of each item is much smaller than the capacity of the knapsack, and the value-to-weight ratio of each item is bounded within $ [L,U] $, no online algorithm can achieve a competitive ratio tighter than $ 1+\ln(U/L) $.  In addition to the above classic problems, new variants of online resource allocation problems have also been reported, e.g.,  \cite{covering_packing, concave_return, Lin2019, willma}. For example, a generalization of the online matching and Adwords problem was studied in \cite{concave_return}, where each incoming agent has a concave function representing the return or utility of this agent, in contrast to the budgeted linear utility function studied by almost all previous online matching problems. In particular, the authors of \cite{concave_return} derived a differential equation based on the primal-dual analysis, and then characterized the optimal competitive ratio by analyzing the differential equation via variations of calculus. Lin et al. \cite{Lin2019} recently proposed an optimal online algorithm for a generalized one-way trading problem, where the online decisions are made in each round to maximize the per-round concave revenue function under inventory (budget) constraints. Recently, an extended online matching problem was presented by \cite{willma}, in which the items can be sold at multiple feasible prices instead of a single price.  Based on this setup, Ma et al. \cite{willma} designed an online algorithm with the best-possible competitive ratios that depend on the sets of multiple prices given in advance. 

In the strategic setting, agents are self-interested and thus a careful payment rule must be designed along with the online allocation algorithms so as to guarantee the incentive compatibility. A classic problem in this setting is the  online combinatorial auction (CA) problem, e.g., \cite{AGT, CA_unlimited_supply_2008,  Bartal2003, Buchbinder2015, Blum2011, Huang2015}. Existing studies studying the online CAs focus on two cases: one is with stringent supply constraints (i.e., limited supply) and the other is unlimited supply, namely, the supplier can produce as many copies of items as possible with no cost.  For example,  the authors of \cite{Bartal2003} studied an online CA problem and proposed an  $ O(\log(v_{\max}/v_{\min})) $-competitive online algorithm when there are $\Omega(\log(v_{\max}/v_{\min}))$ copies of each item, where the customers' valuation is assumed to be in the range of $ [v_{\min},v_{\max} ] $. Considering the shortcomings of both the limited-supply case and the unlimited-supply case in modelling real-world problems, 
Blum et al. \cite{Blum2011} studied online CAs in a more general case with increasing production cost.  In this case, the supplier can produce additional number of items while paying an increasing marginal cost per copy\footnote{The limited-supply case can be considered a special case of the model in \cite{Blum2011} with a $ 0 $-to-$\infty  $ step function representing the marginal supply cost.}. Blum et al. proposed a pricing scheme called \textit{twice-the-index}, and gave constant competitive ratios for simple marginal cost functions such as linear and lower-degree polynomial. Huang et al. \cite{Huang2015} later studied a similar problem and achieved a tighter competitive ratio. In particular, for polynomial cost functions $ f(y) = y^s $, Huang et al. \cite{Huang2015} proved that an optimal $ s^{\frac{s}{s-1}} $-competitive online mechanism can be designed when each agent buys infinitesimally small units of items. However, the optimal competitive ratio in \cite{Huang2015} was obtained without considering the capacity limits, and thus the results cannot capture how the limited supply affects the optimal online mechanisms. Tan et al. \cite{posted_price_MAMA} later studied this problem with both polynomial supply costs $ f(y) = y^s $ and stringent supply constraints. When the valuation-to-demand ratio of agents (i.e., valuation divided by resource demand) are upper bounded by $ \bar{p} $, Tan et al. \cite{posted_price_MAMA} proved that the optimal competitive ratio $ s^{\frac{s}{s-1}} $ which is firstly characterized by \cite{Huang2015} becomes non-achievable if $ \bar{p} $ exceeds a certain threshold. Moreover, this threshold depends on the supply capacity. In this regard, the results in \cite{Huang2015,posted_price_MAMA} show that both the supply costs and the supply constraints can influence the optimal competitive ratio.

\subsection{Our Contributions and Techniques}
\label{section_contribution}
In this paper, we consider a \textit{significant} generalization of the online resource allocation problem by allowing \textit{arbitrary} strictly-convex supply costs and stringent supply constraints. In this setting, additional units of resources can be supplied at strictly-increasing marginal costs, and the total resource allocated must stay within the capacity limit. To the best of our knowledge, characterizing the optimal competitive ratios in such settings is an open question.  This paper fills this gap and proposes an optimal online mechanism  with the best-possible competitive ratios. Meanwhile, for some special cases when the supply costs are not strictly-convex, e.g., zero and linear  supply cost,  we prove that our design still holds (with some minor modifications) and achieves logarithmic competitive ratios which have been proven as optimal by existing literature.  Therefore, our proposed approach unifies the characterization of optimal competitive ratios in online resource allocation for different setups including zero, linear and strictly-convex supply costs.

Close to our work is the design of online primal-dual algorithms, e.g., \cite{covering_packing, online_routing_FOCS, Buchbinder2009, concave_return, Huang2017, Huang2015, willma, Gupta2013} (in particular, see  \cite{OPD2009} for a comprehensive survey). The key is to construct a dual feasible objective that is close to the primal feasible objective at each round when there is a new arrival of agent, and then use weak duality to bound the performance of the online algorithm.  In this paper, the construction of feasible primal and dual solutions is based on a universal monotone ``pricing function", denoted by $ \phi $,  which is designed by solving an ordinary differential equation (ODE) \cite{ODE1973}. 
Unlike techniques adopted by \cite{willma, Huang2015, concave_return}, our derived ODE contains two boundary conditions imposed by the supply constraints, and thus is essentially a two-point boundary value problem (BVP) \cite{uniqueness_book1993, ODE1973, Perko2001}. We prove that the existence of any $ \alpha $-competitive algorithm is equivalent to the existence of a strictly-increasing solution to the BVP, for which the uniqueness and existence conditions are rigorously proved. To the best of our knowledge, this is the first time that the design and analysis of online algorithms are based on the existence and uniqueness of solutions to BVPs parameterized by the competitive ratios. We believe that our proposed design principle sheds light on new directions for solving other online optimization problems.

\section{Problem Setup}
In this section, we present the problem setup with the resource allocation model, and then introduce the  assumptions  and definitions that will be used in our online mechanism design. 

\subsection{The Model}
\label{section_model}
We consider a single supplier who allocates a single type of resource to a set of requests $ \mathcal{N}=\{1,\cdots,N\} $.  Each request $ n\in\mathcal{N} $ is owned by an agent (which is also indexed by $ n $ for simplicity), and can be represented by a private type  
$ \bm{\theta}_n = \big(r_n,v_n\big) $, 
where  $ r_n $ represents the resource requirement and $ v_n $ denotes the valuation of agent $ n $ if the request is satisfied. In practice, we can interpret $ v_n $  as the \textit{willingness-to-pay} of agent $ n $ for obtaining the required resource. 

Let us define a binary variable $ x_n = \{0,1\} $ for each request $ n\in\mathcal{N} $, and assume $ x_n = 1 $ if request $ n $ is satisfied  and $ x_n = 0 $ otherwise.  We assume that agents have quasi-linear utilities \cite{AGT}, i.e., the utility of agent $ n $ is denoted by 
$ U_n = v_nx_n - \pi_n $, where $ \pi_n $ denotes the payment made by agent $ n $.  Intuitively, agents whose requests are not satisfied make zero payment, i.e., $ \pi_n = 0 $ if $ x_n = 0 $. The supplier collects payment $ \pi_n $ from agent $ n $, and pays a total supply cost of $ f(\sum_n r_nx_n) $, where $ \sum_n r_nx_n $ denotes the total resource allocated and $ f $ represents the supply cost function. Therefore, the utility of the supplier (i.e., the profit) can be denoted by 
$ U_s = \sum_{n\in\mathcal{N}}\pi_n - f(\sum_n r_nx_n) $. We consider limited supply with a stringent capacity limit. Without loss of generality, we normalize the capacity limit to be 1, and thus $ \sum_n r_nx_n \leq 1 $. Meanwhile, the requirements $ \{r_n\}_{\forall n} $ denote the proportions of the normalized capacity limit accordingly. Our mechanism design relies on some properties of the cost function $ f $, which is discussed later in Section \ref{section_assumptions}.

We focus on social welfare maximization in an online setting, where agents arrive one-by-one in a sequential and arbitrary manner, breaking ties arbitrarily. We denote the sequence of agent arrivals by $ \mathcal{A} \triangleq  (\bm{\theta}_1,\bm{\theta}_2,\cdots,\bm{\theta}_N)$,
where we assume without loss of generality that the arrival times  are in ascending order. In the following we refer to  $ \mathcal{A} $  as the arrival instance.
If we assume a complete knowledge of  $ \mathcal{A} $, then the social welfare maximization in the offline setting can be written as follows:
\begin{subequations}\label{SWM}
	\begin{alignat}{3}
	& \underset{\{x_n\}_{\forall n}}{\textsf{maximize}}\qquad        & &  \sum_{n\in\mathcal{N}} v_n x_n - f\left(\sum_{n\in\mathcal{N}} r_nx_n\right)\\
	& \textsf{subject to} & & \sum_{n\in\mathcal{N}} r_nx_n \leq 1,
	\label{total_load_original} \\ 
	& & &  x_n = \{0,1\},\forall n\in\mathcal{N}. \label{binary_x_agent_n}
	\end{alignat}
\end{subequations}
In Problem \eqref{SWM}, the objective is the aggregate utilities of all the agents and the supplier, where the payment terms cancel out.

\begin{remark}\label{model_remark}
	If there is no supply cost, i.e., $ f = 0 $, then Problem \eqref{SWM} is a standard 0-1 knapsack problem \cite{knapsack2008}. Meanwhile, if we consider allocation of bundles of resources with a supply cost function for each type of resource, Problem \eqref{SWM} can be extended to model the standard online combinatorial auction problem \cite{AGT, Blum2011}. Therefore, the studied model is a generalization of a variety of classic  online resource allocation problems. In this paper, we mainly focus on the basic model presented by Problem \eqref{SWM}. For discussions of extending our model to consider multiple types of resources, as well as generalizing Problem \eqref{SWM} to model online resource allocation with multiple time slots (e.g., job scheduling in cloud computing), please refer to  Section \ref{extension_multiple_resources}.
\end{remark}

\subsection{Assumptions}
\label{section_assumptions}
We make the following  assumptions throughout the paper. 

\begin{assumption}\label{Assumption_PUV}
	For each request $ n\in\mathcal{N} $, the \textit{valuation density}, defined as $ v_n/r_n $, is lower and upper bounded as follows: 
	\begin{align}
	\ubar{p} \leq \frac{v_n}{r_n} \leq  \bar{p}, \forall n\in\mathcal{N}. 
	\end{align} 
\end{assumption}
Assumption \ref{Assumption_PUV} is similar to the one in \cite{knapsack2008, willma, Bartal2003}. Specifically, one can interpret $ \ubar{p} $ as the lowest selling price per unit of resource, which is set by the supplier in advance. Thus, the request of any agent with a valuation density lower than $ \ubar{p} $ will automatically be rejected. In comparison, the upper bound $ \bar{p} $ can be considered an inherent attribute of the arrival instance with rational agents, and in reality it exists naturally. 

\begin{assumption}\label{Assumption_d_n}
	The resource requirement of each request is very small compared to the total capacity limit, i.e., $ r_n\ll 1, \forall  n\in\mathcal{N} $. 
\end{assumption}

Assumption \ref{Assumption_d_n} is common in designing online algorithms, e.g., \cite{Adwords2007, knapsack2008, Huang2015, concave_return}, since it allows us to focus on the infinitesimal nature of our problem with mathematical convenience. Meanwhile, Assumption \ref{Assumption_d_n} is reasonable in many real-world large-scale systems. 

\begin{definition}[Information Setup]
	We define all the information known to the supplier a priori as a setup $ \mathcal{S} $,  denoted by
	\begin{align}
	\mathcal{S} \triangleq \{f,\ubar{p},\bar{p}\}.
	\end{align}
\end{definition}

In the following, we say $ \mathcal{S} $ is a \textit{nice setup} if $ f $ is monotonically non-decreasing with zero startup cost (i.e, $ f(0)=0 $) and satisfies $ f'(0) <\ubar{p}\leq \bar{p} $. Meanwhile, we refer to a nice setup $ \mathcal{S} $ as a \textit{convex setup} if $ f $ is also \textit{strictly-convex and differentiable}. Given a nice setup $ \mathcal{S} $, the supplier only knows the information given  by $ \mathcal{S} $, and does not know anything about the arrival instance $ \mathcal{A} $ such as the total number of requests (i.e., $ N $). 

\subsection{Definitions}
\label{competitive_analysis}
Below we give some definitions regarding incentive compatibility, online mechanisms, and competitive ratios.

(\textbf{Incentive Compatibility}) We consider the design of direct revelation mechanisms under a
strategic setting, where each agent $ n\in\mathcal{N} $ participates by declaring the type $ \hat{\bm{\theta}}_n = (\hat{r}_n,\hat{v}_n)$ at time $ \hat{a}_n $. Since agents are self-interested, the reported preference $ \hat{\bm{\theta}}_n $ of agent $ n\in\mathcal{N} $ may or may not be her true preference $ \bm{\theta}_n $. A mechanism is \textit{incentive compatible  (IC)} if all agents achieve the best outcomes (i.e., the maximum utilities) by acting according to their true preferences.   

(\textbf{Online Mechanism}) Given a convex setup $ \mathcal{S} $, we aim to design an IC online mechanism to (approximately) maximize the social welfare. In the online setting, agents arrive one-by-one in a sequential order (where ties are broken arbitrarily).  At each round when there is a new arrival of agent $ n\in\mathcal{N} $, the supplier needs to make two types of irrevocable decisions as follows: i) the \textit{allocation rule}, namely the decisions about whether request $ n $ should be satisfied  or not (i.e., $ x_n $); and ii) the \textit{payment rule}, namely, how much money should be charged from agent $ n $ if the request is satisfied. The resource allocation rule and the payment rule constitute an \textit{online mechanism}. In the strategic setting, agents may deliberately misreport their type information to be better off, and thus the supplier needs to carefully design the payment rule such that agents are well-incentivized to follow their true preferences\footnote{In a non-strategic environment, only an allocation rule is needed,  and thus in this case an online mechanism is simply an online algorithm.}. 

(\textbf{Competitive Ratio}) The performance of an online mechanism can be quantified by the standard competitive analysis framework \cite{Borodin1998}. Given an arrival instance $ \mathcal{A} $, let us denote the optimal social welfare by  $ S_{\textsf{offline}}(\mathcal{A}) $, which is the optimal objective value of Problem \eqref{SWM} in the offline setting. Let  $ S_{\textsf{online}}(\mathcal{A}) $ denote the social welfare achieved by an online mechanism based on the knowledge of $ \mathcal{S} $ only. 
The competitive ratio of an online mechanism is defined as 
\begin{equation}\label{definition_of_alpha}
\alpha \triangleq \max_{\normalfont\text{all possible }\mathcal{A}} \frac{ S_{\normalfont\textsf{offline}}(\mathcal{A})}{S_{\normalfont\textsf{online}}(\mathcal{A})}, 
\end{equation}
where $ \alpha\geq 1 $ and the closer to 1 the better. Note that in the online setting the supplier does not known any future information except $ \mathcal{S} $, and thus the competitive ratio $ \alpha $ should depend on $ \mathcal{S} $ only. 

Our target is to design an IC online mechanism such that $ S_{\text{online}}(\mathcal{A}) $ is as close to $ S_{\textsf{offline}}(\mathcal{A}) $ as possible for all possible $ \mathcal{A} $'s, namely, an online mechanism with an competitive ratio that is as small as possible. To define how small the competitive ratio could possible be, we give the following definition regarding the optimal competitive ratio for a given nice setup $ \mathcal{S} $.  

\begin{definition}[Optimal Competitive Ratio]\label{optimal_competitive_ratio}
	A competitive ratio is optimal if no other online algorithms can achieve a smaller one under Assumption 1 and Assumption 2. In particular, given a nice setup $ \mathcal{S} $, the optimal competitive ratio $ \alpha_*(\mathcal{S}) $ is defined as 
	\begin{equation}\label{alpha_*_S}
	\alpha_*(\mathcal{S}) \triangleq \inf \max_{\normalfont\text{all possible }\mathcal{A}} \frac{ S_{\normalfont\textsf{offline}}(\mathcal{A})}{S_{\normalfont\textsf{online}}(\mathcal{A})},
	\end{equation}
	where the $\inf$ operator is taken w.r.t. all possible online algorithms.
\end{definition}

\section{Online Mechanism Design}
This section presents our proposed online mechanism. We start by giving some preliminaries and notations regarding the primal and dual of Problem \eqref{SWM}, and then describe the design principle and design procedures of our online mechanism. 

\begin{table}[htb]
	\caption{Notations}
	\centering
	\begin{tabular}{c|c}
		\hline
		Symbol  & Description \\
		\hline
		
		\rowcolor{Gray}
		$ v_n $ & value of agent $ n $ \\
		
		$ x_n $ & binary decision variable of agent $ n $\\
		
		\rowcolor{Gray}
		$ r_n $ & resource requirement of agent $ n $ \\
		
		$ f $   & supply cost function
		\\
		
		\rowcolor{Gray}
		
		$ \ubar{p} $ ($ \bar{p} $)    & lower (upper) bound of $ v_n/r_n $ \\
	
		$ \ubar{c} $ ($ \bar{c} $) & minimum (maximum) marginal cost\\
		
		\rowcolor{Gray}
		$ \ubar{\rho} $ ($ \bar{\rho} $)   & maximum resource utilization when $ \frac{v_n}{r_n} = \ubar{p} $ ($ \bar{p} $), $\forall n $\\
		
		$ \phi(y) $  & pricing function with three segments\\
		
		\rowcolor{Gray}
		$ \varphi(y) $  & increasing segment of $ \phi(y) $ when $ y\in [\omega,\bar{\rho}] $\\

		$ \omega $ & resource utilization threshold so that $ \phi(\omega) = \ubar{c} $\\
		
		\rowcolor{Gray}
		$ u $   &  resource utilization threshold so that $ \phi(u) = \bar{c} $\\
		
		$ \hat{\pi}_n $  & final payment made by agent $ n $\\
		
		\rowcolor{Gray}
		$ \hat{p}_n $ & posted price for agent $ n+1 $\\
		
		$ \hat{y}_n $ & total resource utilization after agent $ n $\\
		\hline
	\end{tabular}
	\label{table}
\end{table}

\subsection{Preliminaries}
\label{primal_dual_offline_setting}
We first introduce some notations to help derive the dual of Problem \eqref{SWM}. Given a convex setup $ \mathcal{S} $, we define $ \ubar{c} $ and $ \bar{c} $ as follows:
\begin{equation}\label{c_0_c_1_rho_0_rho_1}
\ubar{c} \triangleq f'(0), \qquad  \bar{c} \triangleq f'(1),
\end{equation}
where $ \ubar{c} $  and $ \bar{c} $ denote the minimum and maximum marginal cost, respectively. Based on the capacity limit, we define the \textit{extended cost function} $ \bar{f}(y) $ as follows:
\begin{equation}\label{cost_function}
\begin{aligned}
\bar{f}(y) = 
\begin{cases}
f(y) & \text{if } y\in [0, 1],\\
+\infty  &  \text{if } y\in (1,+\infty),
\end{cases}
\end{aligned}
\end{equation}
which extends the domain of $ f(y) $ to all $ y\in [0,+\infty) $.
Based on the extended cost function $ \bar{f} $, we define $ F_p(y) $ as follows:
\begin{equation}\label{profit_function}
F_p(y) \triangleq py - \bar{f}(y), y\in [0,+\infty),
\end{equation}
which represents the profit of selling $ y $ units of resources at price $ p $, where $ py $ represents the revenue and $ \bar{f}(y) $ represents the supply cost. In the following, $ F_p(y) $ is referred to as the \textit{profit function}.  

(\textbf{Relaxed Primal Problem}) Our online mechanism is designed based on a principled primal-dual analysis of Problem \eqref{SWM}. We consider the following relaxed welfare maximization problem
\begin{subequations}\label{primal_problem}
	\begin{alignat}{3}
	& \underset{\bm{x},y}{\text{maximize}}\qquad         & &  \sum_{n\in\mathcal{N}} v_n x_n - \bar{f}(y) \\
	&\text{subject to} & &   \sum_{n\in\mathcal{N}}  r_nx_n \leq  y, & & \qquad  (p)\label{tota_load_relaxation}\\
	& & &  x_n \leq 1,\forall n\in\mathcal{N}, & & \qquad  (\gamma_n)\\
	& & & \bm{x} \geq \bm{0}, y\geq 0, \label{lower_box_constraint_agent_n_relaxation}
	\end{alignat}
\end{subequations}
where $ \bm{x} = \{x_n\}_{\forall n} $. In Problem \eqref{primal_problem}, $ (p,\gamma_n)$ are the Lagrange multipliers associated with the corresponding constraint. 

When $ f $ is non-decreasing, Problem \eqref{primal_problem} is equivalent to Problem \eqref{SWM} except the relaxation of $ \{x_n\}_{\forall n} $. The reason is as follows. First, compared to the original cost function $ f $ with the capacity limit constraint \eqref{total_load_original}, the introduction of $ \bar{f} $ in Problem \eqref{primal_problem} is an equivalent transformation. Second, the relaxation  of the equality constraint  \eqref{total_load_original} to an inequality one  in Eq. \eqref{tota_load_relaxation} is lossless since the inequality will always be binding when  $ f $ is non-decreasing. 

(\textbf{Dual Problem}) Problem \eqref{primal_problem} is a continuous optimization problem, whose dual can be expressed  as follows:
\begin{subequations}\label{dual_problem}
	\begin{alignat}{3}\label{objective_dual}
	& \underset{p, \bm{\gamma}}{\textsf{minimize}}\qquad      & & \sum_{n\in\mathcal{N}}\gamma_n +  h\left(p\right)\\
	&\textsf{subject to} & &\ \gamma_n\geq  v_n - pr_n, \forall n\in\mathcal{N}, \label{dual_constraint_gamma}\\
	&    & &  p\geq 0, \boldsymbol{\gamma}\geq \mathbf{0},\label{dual_constraint_variables}
	\end{alignat}
\end{subequations}
where $ \bm{\gamma} = \{\gamma_n\}_{\forall n}$. We interpret $ p $ as the price per unit of resource, and $ \gamma_n $ as the utility of agent $ n $.  In Eq. \eqref{objective_dual},  $ h(p) $ is given by
\begin{equation}\label{def_h_p}
h(p) = \max_{y \geq 0}\ p y    - \bar{f}(y) = \max_{y\geq 0}\ F_p(y),
\end{equation}
which can be written as follows:
\begin{equation}\label{dual_bar_f_star}
\begin{aligned}
h(p) 
= 
\begin{cases}
F_p\left(f'^{-1}(p)\right) & \text{if } p\in [\ubar{c},\bar{c}],\\
F_p\big(1\big) & \text{if } p\in (\bar{c},+\infty).
\end{cases}
\end{aligned}
\end{equation}

The above function $ h(p) $ is known as the convex conjugate of $ \bar{f} $ \cite{conjugate_book_2006}.  An economic interpretation of Eq. \eqref{def_h_p} is that $ h(p) $ represents the optimal profit when the selling price is $ p $, where $ p y $ represents the revenue and $ \bar{f}(y) $ represents the supply cost. In this regard, the dual objective in Eq. \eqref{objective_dual} is the aggregate utilities of all the agents plus the utility of the supplier (i.e., the profit). In comparison, the primal objective in Eq. \eqref{primal_problem} is the aggregate values of all the satisfied requests minus the supply cost. Both of them represent the social welfare of the system.

Given an arrival instance $ \mathcal{A} $, let us denote the optimal primal objective of Problem \eqref{primal_problem} by $ P_*(\mathcal{A}) $. Similarly, denote by $ D_*(\mathcal{A}) $ the optimal dual objective of Problem \eqref{dual_problem}. Then, we have
\begin{equation}\label{S_offline_P_D}
S_{\textsf{offline}}(\mathcal{A})\leq P_*(\mathcal{A})\leq D_*(\mathcal{A}), 
\end{equation}
where the first inequality is due to the relaxation of $ \{x_n\}_{\forall n} $ and the second one is because of weak duality\footnote{Note that when $ f $ is convex, strong duality holds here, i.e., $ P_*(\mathcal{A})= D_*(\mathcal{A}) $. However, throughout the paper we do not need the strong duality.}.

\subsection{Design Principles}
In the online setting, we need to process agents in a sequential manner. For a given arrival instance $ \mathcal{A} $, after processing the $ n $-th request, we denote the objective values of Problem \eqref{primal_problem} and Problem \eqref{dual_problem} by $ P_n(\mathcal{A}) $ and $ D_n(\mathcal{A}) $, respectively. For simplicity, we drop the parenthesis and simply write $ P_n $ and $ D_n $ hereinafter. Meanwhile, we say that $ P_n\ (D_n) $ is feasible if $ P_n\ (D_n) $ is a feasible objective value of Problem \eqref{SWM} (Problem \eqref{dual_problem}) after processing the $ n $-th request. Proposition \ref{OPD_principle} below shows that if the sequences of $ \{P_n\}_{\forall n} $ and $ \{D_n\}_{\forall n} $ are feasible and satisfy a group of inequalities parameterized by $ \alpha $, then the corresponding online algorithm is guaranteed to be $ \alpha $-competitive. 
\begin{proposition}\label{OPD_principle}
	An online algorithm is $ \alpha $-competitive if the following two conditions are satisfied, namely, i) the sequences of $ \{P_n\}_{\forall n} $ and $ \{D_n\}_{\forall n} $ are feasible, and ii) there exists an index $ k\in \mathcal{N} $ such that the following inequalities hold
	\begin{equation}\label{two_groups_of_inequalities}
		\begin{aligned}
			\begin{cases}
				P_k\geq \frac{1}{\alpha} D_k,\\
				P_n - P_{n-1} \geq \frac{1}{\alpha}\left(D_n- D_{n-1}\right), \forall n\in \{k+1,\cdots,N\}.
			\end{cases}
		\end{aligned} 
	\end{equation}
\end{proposition}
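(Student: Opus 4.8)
The plan is to establish, for an arbitrary arrival instance $\mathcal{A}$, the chain
$S_{\textsf{offline}}(\mathcal{A}) \le D_N \le \alpha\, P_N = \alpha\, S_{\textsf{online}}(\mathcal{A})$,
from which $\alpha$-competitiveness follows at once by taking the maximum over $\mathcal{A}$ in Definition \eqref{definition_of_alpha}. The whole argument is a telescoping identity glued to weak duality.

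First I would exploit condition (ii). Summing the incremental inequalities $P_n - P_{n-1} \ge \frac{1}{\alpha}\left(D_n - D_{n-1}\right)$ over $n \in \{k+1,\dots,N\}$ makes both sides collapse, yielding $P_N - P_k \ge \frac{1}{\alpha}\left(D_N - D_k\right)$. Adding the base inequality $P_k \ge \frac{1}{\alpha} D_k$ then gives $P_N \ge \frac{1}{\alpha} D_N$, i.e. $D_N \le \alpha P_N$. Note that monotonicity or nonnegativity of the intermediate $P_n, D_n$ is not needed here: we are merely adding valid inequalities.

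Next I would bring in condition (i). Since $\{P_n\}_{\forall n}$ is feasible, the allocation $\{x_n\}$ maintained by the online algorithm is a feasible point of Problem \eqref{SWM} after the last request, and the tracked value $P_N = \sum_{n} v_n x_n - f\!\left(\sum_n r_n x_n\right)$ is exactly the social welfare collected online, so $P_N = S_{\textsf{online}}(\mathcal{A})$ (it suffices that $P_N \le S_{\textsf{online}}(\mathcal{A})$ if one only tracks a feasible lower bound). Since $\{D_n\}_{\forall n}$ is feasible, $D_N$ is a feasible objective value of the dual Problem \eqref{dual_problem}, which is a minimization; hence $D_N \ge D_*(\mathcal{A})$, and by the relaxation-plus-weak-duality chain \eqref{S_offline_P_D} we get $D_N \ge D_*(\mathcal{A}) \ge P_*(\mathcal{A}) \ge S_{\textsf{offline}}(\mathcal{A})$. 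Combining with $D_N \le \alpha P_N$ from the previous step yields $S_{\textsf{offline}}(\mathcal{A}) \le D_N \le \alpha P_N \le \alpha\, S_{\textsf{online}}(\mathcal{A})$, as desired.

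There is no genuine technical obstacle; the argument is elementary once the primal–dual scaffolding of Section \ref{primal_dual_offline_setting} is in place. The only points demanding care are bookkeeping ones: (a) verifying that the quantity the online algorithm actually earns coincides with (or dominates) the tracked primal value $P_N$, so that primal feasibility at termination translates into a bound on $S_{\textsf{online}}(\mathcal{A})$; (b) checking that $D_N$ is genuinely dual-feasible \emph{at termination} — feasibility of the whole sequence $\{D_n\}_{\forall n}$ is assumed, but it is $D_N$ that is fed into weak duality; and (c) observing that the "warm-up" index $k$ is absorbed entirely by the single aggregate inequality $P_k \ge \frac{1}{\alpha}D_k$, so no per-step ratio bounds are required for $n \le k$. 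One should also keep in mind the standing assumption that $S_{\textsf{online}}(\mathcal{A}) > 0$ whenever $S_{\textsf{offline}}(\mathcal{A}) > 0$, which is what makes the ratio in \eqref{definition_of_alpha} well defined and is guaranteed by the construction of the mechanism.
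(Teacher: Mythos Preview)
Your proposal is correct and follows essentially the same approach as the paper: telescope the incremental inequalities over $n\in\{k+1,\dots,N\}$, add the initial inequality $P_k\ge\frac{1}{\alpha}D_k$ to obtain $P_N\ge\frac{1}{\alpha}D_N$, and then invoke dual feasibility of $D_N$ together with the relaxation/weak-duality chain \eqref{S_offline_P_D} to sandwich $S_{\textsf{offline}}(\mathcal{A})$. The only difference is cosmetic ordering---the paper states the weak-duality chain first and the telescoping second---and your added bookkeeping remarks (a)--(c) are accurate but not needed beyond what the paper already assumes.
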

\begin{proof}
	The feasibility of $ \{P_n\}_{\forall n} $ is trivial since any online algorithm must first produce a feasible solution to the original problem in Eq. \eqref{SWM}. We next show how the feasibility of $ \{D_n\}_{\forall n} $ and the inequalities in Eq. \eqref{two_groups_of_inequalities} lead to an  $ \alpha $-competitive online algorithm. Note that it suffices to prove $ P_N\geq \frac{1}{\alpha} D_N $ since 
	\begin{align}\label{inequalities_OPD}
	S_{\textsf{online}} = P_N\geq \frac{1}{\alpha} D_N \overset{(i)}{\geq} \frac{1}{\alpha} D_* \overset{(ii)}{\geq} \frac{1}{\alpha} P_* \overset{(iii)}{\geq} \frac{1}{\alpha} S_{\textsf{offline}},
	\end{align}
	where  inequalities  $(ii)$ and  $ (iii) $ directly follow Eq. \eqref{S_offline_P_D}, and inequality $ (i) $ holds since $ D_N $ is a feasible objective value of Problem \eqref{dual_problem}, while $ D_* $ is the optimum (minimum). 
	
	Suppose there exists an index $ k\in\mathcal{N} $ such that the second inequality in Eq. \eqref{two_groups_of_inequalities} hold for all $ n\in \{k+1,\cdots,N\} $, then
	\begin{equation}\nonumber
	\begin{aligned}
	P_N-P_k =  \sum_{n=k+1}^{N}(P_n-P_{n-1}) \geq  \frac{1}{\alpha} \sum_{n=k+1}^{N}(D_n-D_{n-1}) = \frac{1}{\alpha}(D_N- D_k),
	\end{aligned}
	\end{equation}
	which leads to $ P_N\geq \frac{1}{\alpha} D_N $ after substituting $ P_k\geq \frac{1}{\alpha}D_k $ into the above equation.
	We thus complete the proof. 
\end{proof}

We refer to the first inequality in Eq. \eqref{two_groups_of_inequalities} as the \textit{initial inequality}, and the second one in Eq. \eqref{two_groups_of_inequalities} as the \textit{incremental inequality}. Note that when $ P_0 = D_0 = 0 $ and $ k = 1 $,  the two types of inequalities can be combined.  Proposition \ref{OPD_principle} in this case simply follows the standard online primal-dual approach \cite{OPD2009}. 
In this regard, Proposition \ref{OPD_principle} is a more general principle for designing online algorithms.

\subsection{Posted Price Mechanism} \label{section_PM}
Our online mechanism is designed based on  Proposition \ref{OPD_principle}. The key idea is to construct a set of feasible primal and dual solutions  at each round when there is a new arrival  of request $ n\in\mathcal{N} $, and then guarantee that the resulting sequences of $ \{P_n\}_{\forall n} $ and $ \{D_n\}_{\forall n} $ satisfy the inequalities in Eq. \eqref{two_groups_of_inequalities}. 

(\textbf{A Two-Step Design}) Note that the dual variables $ \{\gamma_n\}_{\forall n} $ in Problem \eqref{dual_problem} are associated with each individual agent,   while $ p $ is a global one couples all the agents. Once the final price $ p $ is known to the supplier, we can easily decouple different agents and design feasible primal and dual solutions for each individual request $ n\in\mathcal{N} $. However, when there is no future information, it is impossible to know the exact value of $  p $ a priori. Our idea is to adopt a two-step design procedures as follows:
\begin{itemize}
	\item \textbf{Step-1}: design a trajectory of $ \{\hat{p}_n\}_{\forall n}$, where $ \hat{p}_n $ denotes the supplier's prediction of the final price $ p $ after processing  agent $ n $. 
	\item \textbf{Step-2}: when there is a new  arrival of request $ n\in\mathcal{N} $,  based on $ \hat{p}_{n-1}$, perform the following decision-making:
	\begin{itemize}
		\item Set the dual variable $ \hat{\gamma}_n $ by
		\begin{equation}\label{design_of_gamma}
		\hat{\gamma}_n = \max\big\{v_n - \hat{p}_{n-1} r_n, 0\big\},\forall n\in\mathcal{N}.
		\end{equation}
		\item Set the primal variable $ \hat{x}_n $ by:
		\begin{align*}
		\hat{x}_n = 
		\begin{cases}
		0  &\text{if } v_n - \hat{p}_{n-1} r_n< 0,\\
		1 &\text{if } v_n - \hat{p}_{n-1} r_n\geq 0 \text{ and } \hat{y}_{n-1} +  r_n \leq 1,
		\end{cases}
		\end{align*}
		where $ \hat{y}_{n-1} $ denotes the total resource utilization after processing agent $ n-1 $. Intuitively, we have $ \hat{y}_0 = 0 $.
		
		\item Set the final payment $ \hat{\pi}_n $ for agent $ n $ by
		\begin{equation}\label{payment}
		\hat{\pi}_n = \hat{p}_{n-1} r_n \hat{x}_n,
		\end{equation}
		
		\item Update the total resource utilization by 
		\begin{equation}\label{update_of_omega}
		\hat{y}_n = \hat{y}_{n-1} +  r_n\hat{x}_n.
		\end{equation}
	\end{itemize}
\end{itemize}

Based on the above two-step design procedure, the terminal value of the total resource utilization is $ \hat{y}_N  $, and the terminal value of the predicted final price is $ \hat{p}_N $. Together with $ \{\hat{x}_n\}_{\forall n} $ and $ \{\hat{\gamma}_n\}_{\forall n} $, these variables constitute a complete set of online primal and dual solutions, which are denoted by $ \mathcal{V}_P $ and $ \mathcal{V}_D $ as follows:
\begin{equation*}
\mathcal{V}_P \triangleq \left(\{\hat{x}_n\}_{\forall n}, \hat{y}_N\right),  \mathcal{V}_D \triangleq  \big(\hat{p}_N, \{ \hat{\gamma}_n\}_{\forall n}\big).
\end{equation*}
Note that to differentiate between offline and online settings,  we place a hat on top of variables that denote the decisions made online. 

(\textbf{Pricing Function in Step-1}) 
To enable an online implementation, the price predictions in \textbf{Step-1} must be performed based on causal information only.  One natural way of designing such price predictions (without future information) is to relate $ \hat{p}_n $  to the current total resource utilization as follows:
\begin{equation}\label{pricing_function}
\hat{p}_n =  \phi\left(\hat{y}_n\right), \forall n\in\mathcal{N},
\end{equation}
where  $ \hat{y}_n $ denotes the total resource utilization \textit{after} allocating the required resources to  request $ n $. Based on our interpretation of $ \hat{p}_n $, $ \phi $ is referred to as the \textit{pricing function} hereinafter.  Eq. \eqref{pricing_function} indicates that our prediction of the final price will be updated whenever the total resource utilization changes.   

(\textbf{Posted Price Mechanism:} $ \textsf{PM}_\phi $) 
The above two-step design indicates that the pricing function $ \phi $ plays an important role in influencing $ \mathcal{V}_P $ and $ \mathcal{V}_D $, as well as the sequences of $ \{P_n\}_{\forall n} $ and $ \{D_n\}_{\forall n} $. Therefore,  the design of $ \phi $ is directly related to the inequalities in Proposition \ref{OPD_principle}, and thus determines the competitive ratio of the online mechanism described above. The techniques of how to design $ \phi $ in \textbf{Step-1} constitute the major results of this paper, and the details are deferred to Section \ref{section_main_results}. In the following, we temporarily assume that the pricing function $ \phi $ is given and summarize our proposed online mechanism in Algorithm \ref{PriMe}, dubbed $ \textsf{PM}_\phi $.  

An interesting observation about $ \textsf{PM}_\phi $ is that it can be implemented in a posted price manner. Unlike auctions \cite{CA_PNAS, CA_survey, AGT}, the supplier running posted price \cite{posted_prices_1993,posted_prices_2018} simply publishes the selling price (i.e., line \ref{publish_price}) and does not collect any information from the agents. The decisions are made by each individual agent in the manner of \textit{take-it-or-leave-it} (i.e., line \ref{if_utility_negative}-line \ref{end_if}).  By virtue of posted-price mechanisms \cite{posted_price_EC_2017, posted_prices_1993, posted_prices_2018}, $ \textsf{PM}_\phi $ is IC, privacy-preserving and computationally-efficient. Therefore, we argue that this is an extra advantage of our design, although we do not commit to posted price mechanisms a priori.

\begin{algorithm}[htb]
	\caption{Posted Price Mechanism ($ \textsf{PM}_{\phi} $)}\label{online_mechanism}	
	\begin{algorithmic}[1]
		\STATE \textbf{Inputs:} A given setup $ \mathcal{S} = \{f,\ubar{p},\bar{p}\} $ and $ \phi$. 
		
		\STATE \textbf{Initialize}: $ \hat{y}_0 = 0 $ and $ \hat{p}_0 = \phi(\hat{y}_0)$.  
		
		\WHILE{a new agent $ n $ arrives}
		
		\STATE Supplier publishes the price $ \hat{p}_{n-1} $.\label{publish_price}
		
		\IF {$ v_n -  \hat{p}_{n-1} r_n < 0 $}\label{if_utility_negative} 
		
		\STATE Agent $ n $ leaves (i.e., set $ \hat{x}_n =0 $) \label{rejection_1}
		
		\ELSIF{$\hat{y}_{n-1}+ r_n >1$}
		\STATE Request $ n $ is rejected (i.e., set $ \hat{x}_n =0 $)\label{rejection_2}
		
		\ELSE 
		\STATE Request $ n $ is satisfied (i.e., set $ \hat{x}_n = 1 $)\label{accepted}
		
		\STATE Collect  the payment $ \hat{\pi}_n $ by Eq. \eqref{payment}.\label{payment_line}
		
		\STATE Update the total resource utilization by Eq. \eqref{update_of_omega}.
		
		\STATE Update the price by $
		\hat{p}_n = \phi(\hat{y}_n)$. \label{current_price}
		
		\ENDIF \label{end_if}
		\ENDWHILE
	\end{algorithmic}
	\label{PriMe}
\end{algorithm}

(\textbf{Feasibility and Rationality of} $ \textsf{PM}_\phi $) 
As shown by Proposition \ref{OPD_principle}, the feasibility of $ \mathcal{V}_P $ and $ \mathcal{V}_D $ is crucial to prove the competitive ratio of the designed online mechanism.  In Proposition \ref{feasibility} below, we show that our above design of $ \mathcal{V}_P $ and $ \mathcal{V}_D $ is feasible as long as $ \phi $ is monotone. 
\begin{proposition}\label{feasibility}
	The primal solutions in  $ \mathcal{V}_P $ are always feasible to Problem \eqref{SWM} and  Problem \eqref{primal_problem}. The dual solutions in $ \mathcal{V}_D $ are feasible to Problem \eqref{dual_problem}  as long as $ \phi $ is monotonically non-decreasing. 
\end{proposition}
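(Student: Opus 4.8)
The plan is to verify feasibility for the primal and dual solutions separately, treating each of the constraints in Problem~\eqref{SWM}/\eqref{primal_problem} and Problem~\eqref{dual_problem} in turn. For the primal side, I would argue directly from the construction of $\hat{x}_n$ and the update rule in Eq.~\eqref{update_of_omega}. Each $\hat{x}_n$ is set to either $0$ or $1$, so the binary constraint~\eqref{binary_x_agent_n} (and hence the box constraints $0\le \hat{x}_n\le 1$ and $\hat{x}_n\le 1$ in the relaxation) holds by inspection. For the capacity constraint~\eqref{total_load_original}, I would prove by induction on $n$ that $\hat{y}_n\le 1$: the base case $\hat{y}_0=0$ is immediate, and in the inductive step either $\hat{x}_n=0$, so $\hat{y}_n=\hat{y}_{n-1}\le 1$, or $\hat{x}_n=1$, which by the allocation rule can only happen when $\hat{y}_{n-1}+r_n\le 1$, giving $\hat{y}_n=\hat{y}_{n-1}+r_n\le 1$. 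Setting $y=\hat{y}_N$ then satisfies $\sum_n r_n\hat{x}_n=\hat{y}_N\le y$ and $y\ge 0$, so $\mathcal{V}_P$ is feasible for both problems. This part is essentially bookkeeping.

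For the dual side, the solution $\mathcal{V}_D=(\hat{p}_N,\{\hat\gamma_n\}_{\forall n})$ must satisfy the nonnegativity constraints~\eqref{dual_constraint_variables} and the per-agent constraints~\eqref{dual_constraint_gamma}. Nonnegativity of each $\hat\gamma_n$ is immediate from the $\max\{\cdot,0\}$ in Eq.~\eqref{design_of_gamma}. Nonnegativity of $\hat{p}_N=\phi(\hat{y}_N)$ requires only that $\phi$ be nonnegative on $[0,1]$; I would note this follows from $\phi$ being monotone non-decreasing together with $\phi(\hat{y}_0)=\phi(0)$ being a valid (nonnegative) initial price in the setup $\mathcal{S}$ — concretely $\phi(0)\ge \ubar{c}=f'(0)\ge 0$ under a nice setup. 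The crux is constraint~\eqref{dual_constraint_gamma}, i.e. $\hat\gamma_n\ge v_n-\hat{p}_N r_n$ for every $n$. By Eq.~\eqref{design_of_gamma} we have $\hat\gamma_n\ge v_n-\hat{p}_{n-1}r_n$, so it suffices to show $\hat{p}_{n-1}\le \hat{p}_N$, i.e. the predicted price is non-decreasing in $n$. This is exactly where monotonicity of $\phi$ enters: by Eq.~\eqref{pricing_function} we have $\hat{p}_{n-1}=\phi(\hat{y}_{n-1})$ and $\hat{p}_N=\phi(\hat{y}_N)$, and the update rule~\eqref{update_of_omega} guarantees $\hat{y}_{n-1}\le \hat{y}_n\le\cdots\le\hat{y}_N$ since each increment $r_n\hat{x}_n\ge 0$; applying $\phi$ non-decreasing yields $\hat{p}_{n-1}=\phi(\hat{y}_{n-1})\le\phi(\hat{y}_N)=\hat{p}_N$. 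Combining, $\hat\gamma_n\ge v_n-\hat{p}_{n-1}r_n\ge v_n-\hat{p}_N r_n$, which is constraint~\eqref{dual_constraint_gamma}.

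The only genuinely delicate point is making sure the monotonicity of $\{\hat{y}_n\}$ really does propagate to $\{\hat{p}_n\}$ without a circularity: $\hat{y}_n$ is defined via $\hat{x}_n$, which is defined via $\hat{p}_{n-1}=\phi(\hat{y}_{n-1})$, which depends only on already-processed agents, so the recursion is well-founded and the sequence $\{\hat{y}_n\}$ is unambiguously non-decreasing; no fixed-point argument is needed. I expect this ordering argument — and the small check that $\phi\ge 0$ on the relevant range so that the price is a legitimate dual variable — to be the main (and only) obstacle, and it is a mild one. I would close by remarking that strict convexity or differentiability of $f$ plays no role here; only monotonicity of $\phi$ (and, implicitly, of $f$ through the construction of $\phi$) is used, consistent with the statement of the proposition.
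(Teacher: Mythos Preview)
Your proposal is correct and follows essentially the same approach as the paper: primal feasibility is checked directly from the allocation rule, and dual feasibility uses $\hat{\gamma}_n \ge v_n - \hat{p}_{n-1} r_n$ together with the monotonicity of $\phi$ (and hence of $\{\hat{p}_n\}$) to obtain $\hat{\gamma}_n \ge v_n - \hat{p}_N r_n$. The paper's own proof is terser---it calls the primal side ``obvious'' and omits the check that $\hat{p}_N \ge 0$---but the core argument is identical to yours.
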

\begin{proof}
	It is obvious that the design of $ \left(\{\hat{x}_n\}_{\forall n},  \hat{y}_N\right) $ in $ \mathcal{V}_P $ are feasible to Problem \eqref{SWM}. For each agent $ n\in\mathcal{N} $, our design of $ \hat{\gamma}_n $ in Eq. \eqref{design_of_gamma} indicates that 
	$ \hat{\gamma}_n \geq v_n - \hat{p}_{n-1} r_n $. 
	If $ \phi $ is non-decreasing, we have $
	\hat{p}_{n-1} \leq \hat{p}_n \leq \hat{p}_N, \forall n\in\mathcal{N}$.
	Therefore,  we have  
	$ \hat{\gamma}_n \geq v_n - \hat{p}_N r_n $ holds for all $ n\in\mathcal{N} $. We thus complete the proof.
\end{proof}

Before leaving this section, it is worth mentioning that in economics and game theory \cite{economics1995, AGT}, our design of a dual feasible $ \hat{\gamma}_n $ in Eq. \eqref{design_of_gamma} is known to guarantee the \textit{individual rationality} of the mechanism, namely, no agent suffers from negative utility by participating in the mechanism (i.e., line \ref{rejection_1} in Algorithm \ref{PriMe}). Note that the individual rationality is also equivalent to saying that no agent is forced to participate in the mechanism.

\section{Main Results and Techniques}
\label{section_main_results}
Proposition \ref{feasibility} shows that our previous design of $ \mathcal{V}_P $ and $ \mathcal{V}_D $ is feasible if the pricing function $ \phi $ is monotone.  In this section, we present our major results regarding \textbf{Step-1}, namely, the design of a monotone pricing function $ \phi $ so that $ \textsf{PM}_\phi $ achieves a competitive performance in social welfare. 

\subsection{Sufficient Conditions}
To aid our following presentation, let us define $ \ubar{\rho} $ and $ \bar{\rho} $ as follows:
\begin{equation}\label{def_of_rho}
\ubar{\rho} \triangleq \arg\max_{y \geq 0}\ \ubar{p} y    - \bar{f}(y),\ \bar{\rho} \triangleq \arg\max_{y \geq 0}\ \bar{p}y   - \bar{f}(y).
\end{equation}
For a convex setup when $ \bar{f}(y) $ is strictly-convex and differentiable in $y\in [0,1]  $, $ \ubar{\rho} $ and $ \bar{\rho} $ can be respectively written as: 
\begin{equation*}
\begin{aligned}
\ubar{\rho} = 
\begin{cases}
f'^{-1}\big(\ubar{p}\big)  &\text{if } \ubar{p}\in (\ubar{c},\bar{c}),\\
1  &\text{if } \ubar{p}\geq \bar{c},
\end{cases}
; \
\bar{\rho} = 
\begin{cases}
f'^{-1}\left(\bar{p}\right)  &\text{if } \bar{p}\in (\ubar{c},\bar{c}),\\
1  &\text{if } \bar{p}\geq \bar{c}. 
\end{cases}
\end{aligned}
\end{equation*}
The definitions of $ \ubar{\rho} $ and $ \bar{\rho} $ can be interpreted as follows. Suppose there are infinitely-many identical agents whose valuation densities are all $ \ubar{p} \ (\bar{p}) $, then the maximum (optimal) resource utilization  level is $ \ubar{\rho}\ (\bar{\rho})$. Both $ \ubar{\rho} $ and $ \bar{\rho} $ are capped by 1 due to the capacity limit. 

Based on the above definitions of $ \ubar{\rho} $ and $ \bar{\rho} $, we next give Theorem \ref{sufficiency} which shows the sufficient conditions for $ \phi $ so that $ \textsf{PM}_\phi $ is IC and achieves a bounded competitive ratio.
\begin{theorem}\label{sufficiency}
	Given a convex setup $ \mathcal{S} $, $\normalfont \textsf{PM}_\phi $ is IC and $ \alpha $-competitive  if   $ \phi $ is given by
	\begin{equation}\label{sufficient_phi}
	\begin{aligned}
	\phi(y) = 
	\begin{cases}
	\ubar{p} &\text{if } y\in [0,\omega),\\
	\varphi(y)  &\text{if } y\in [\omega, \bar{\rho}],\\
	+\infty &\text{if } y\in (\bar{\rho},+\infty),
	\end{cases}
	\end{aligned}
	\end{equation}
	where $ \omega $ is a resource utilization threshold that satisfies
	\begin{equation}\label{flat_sufficiency_omega}
	F_{\ubar{p}}(\omega)   \geq    \frac{1}{\alpha}  h\big(\ubar{p}\big) \text{ and }
	0\leq \omega \leq \ubar{\rho},
	\end{equation} 
	and $ \varphi(y) $ is  an increasing function that satisfies
	\begin{equation}\label{ODI_principle_sufficiency}
	\begin{aligned}
	\begin{cases}
	\varphi'(y) \leq   \alpha \cdot\frac{\varphi(y) - f'(y)}{h'\left(\varphi(y)\right)}, y\in (\omega,\bar{\rho});\\
	\varphi(\omega) = \ubar{p}, \varphi(\bar{\rho})\geq \bar{p}.
	\end{cases}
	\end{aligned}
	\end{equation}
	In Eq. \eqref{flat_sufficiency_omega}, $ F_{\ubar{p}} $ is the profit function defined in Eq. \eqref{profit_function} and $ h $ is given by Eq. \eqref{def_h_p}. In Eq. \eqref{ODI_principle_sufficiency}, $ h' $ represents the derivative of $ h $. 
\end{theorem}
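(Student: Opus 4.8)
The plan is to establish the two claims of Theorem~\ref{sufficiency} separately: incentive compatibility, which follows from the posted-price structure, and $\alpha$-competitiveness, which is where the design of $\phi$ is used through Proposition~\ref{OPD_principle}. For incentive compatibility, I would argue that $\textsf{PM}_\phi$ is a posted-price mechanism: at the arrival of agent $n$, the price $\hat{p}_{n-1}=\phi(\hat{y}_{n-1})$ depends only on the resource utilization accumulated from agents $1,\dots,n-1$, hence is independent of agent $n$'s reported type $\hat{\bm{\theta}}_n$. Since agent $n$ is served if and only if $v_n\geq \hat{p}_{n-1}r_n$ (assuming capacity remains) and pays exactly $\hat{p}_{n-1}r_n$ in that case, declaring truthfully is a dominant strategy: misreporting $\hat{r}_n$ or $\hat{v}_n$ can only cause the agent to be served when $v_n<\hat{p}_{n-1}r_n$ (negative utility) or to be rejected when $v_n\geq \hat{p}_{n-1}r_n$ (forgoing nonnegative utility); the individual-rationality remark after Proposition~\ref{feasibility} covers the nonnegativity. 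One subtlety: a misreported $\hat{r}_n$ changes the quantity the agent would receive and the feasibility check $\hat{y}_{n-1}+\hat{r}_n\leq 1$, so I need to note that an agent cannot benefit from understating or overstating $r_n$ either, since the valuation $v_n$ is realized only when the \emph{true} requirement is met.

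For the competitive-ratio claim, I would invoke Proposition~\ref{OPD_principle} with the primal/dual sequences $\{P_n\},\{D_n\}$ induced by $\mathcal{V}_P,\mathcal{V}_D$. Feasibility of these sequences is already given by Proposition~\ref{feasibility} once we check that $\phi$ in \eqref{sufficient_phi} is monotonically non-decreasing — this is where the assumption that $\varphi$ is increasing and $\varphi(\omega)=\ubar p$ (matching the flat piece) is used. It then remains to exhibit an index $k$ and verify the initial inequality $P_k\geq\frac1\alpha D_k$ and the incremental inequality $P_n-P_{n-1}\geq\frac1\alpha(D_n-D_{n-1})$ for $n>k$. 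The natural choice of $k$ is the first agent after which $\hat{y}_k\geq\omega$ (i.e.\ the utilization crosses out of the flat region); for $n\leq k$ the price is constant at $\ubar p$, so the "initial" bucket aggregates the contribution of all agents served while $\hat y<\omega$. I would compute $D_k=\sum_{n\le k}\hat\gamma_n+h(\hat p_N)$ and $P_k=\sum_{n\le k}v_n\hat x_n-\bar f(\hat y_k)$, and show \eqref{flat_sufficiency_omega} — namely $F_{\ubar p}(\omega)\geq\frac1\alpha h(\ubar p)$ together with $\omega\le\ubar\rho$ — is exactly what is needed to dominate the $h$-term, while the per-agent utility terms $v_n\hat x_n$ versus $\hat\gamma_n+\hat p r_n\hat x_n$ are controlled because $\hat\gamma_n=\max\{v_n-\hat p_{n-1}r_n,0\}$ and $\hat x_n$ is chosen accordingly.

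For the incremental step with $n>k$, the price is $\hat p_{n-1}=\varphi(\hat y_{n-1})$, and if agent $n$ is served then $\hat y_n=\hat y_{n-1}+r_n$ with $r_n\ll1$ by Assumption~\ref{Assumption_d_n}. I would write $P_n-P_{n-1}=v_n-\big(\bar f(\hat y_n)-\bar f(\hat y_{n-1})\big)\approx v_n-f'(\hat y_{n-1})r_n$ and $D_n-D_{n-1}=\hat\gamma_n+\big(h(\hat p_n)-h(\hat p_{n-1})\big)\approx (v_n-\varphi(\hat y_{n-1})r_n)+h'(\varphi(\hat y_{n-1}))\,\varphi'(\hat y_{n-1})r_n$, using that $v_n-\hat p_{n-1}r_n\geq 0$ for served agents so the $\max$ in \eqref{design_of_gamma} is active. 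Dividing through, the incremental inequality reduces, after cancelling the common $v_n$ term and the factor $r_n$, to precisely the differential inequality $\varphi'(y)\le\alpha\cdot\frac{\varphi(y)-f'(y)}{h'(\varphi(y))}$ in \eqref{ODI_principle_sufficiency}. I also need the boundary condition $\varphi(\bar\rho)\geq\bar p$: since $\phi=+\infty$ beyond $\bar\rho$ no agent can push utilization past $\bar\rho$, and $\varphi(\bar\rho)\geq\bar p\geq v_n/r_n$ ensures that exactly the agents who "should" be rejected (because serving them is unprofitable at the margin) are rejected, so that the truncation at $\bar\rho$ does not break the inequality chain, and also guarantees $h(\hat p_N)$ in $D_N$ is handled correctly.

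The main obstacle I expect is making the infinitesimal approximations rigorous: replacing the finite differences $\bar f(\hat y_n)-\bar f(\hat y_{n-1})$ and $h(\hat p_n)-h(\hat p_{n-1})$ by their derivative-times-$r_n$ expressions introduces $O(r_n^2)$ error terms that must be shown to be absorbed under Assumption~\ref{Assumption_d_n}, ideally by a telescoping/limiting argument rather than agent-by-agent. A secondary difficulty is the bookkeeping at the two "seams" — the boundary between the flat region and the $\varphi$ region at $y=\omega$ (an agent may straddle it), and the behavior of rejected agents in lines~\ref{rejection_1} and~\ref{rejection_2} of Algorithm~\ref{PriMe} (a rejected agent contributes $0$ to $P_n-P_{n-1}$ but its $\hat\gamma_n$ could be nonzero if rejected only for capacity reasons, so one must check the incremental inequality still holds, which is where $\hat p_{n-1}=\varphi(\hat y_{n-1})\geq\varphi(\omega)=\ubar p\geq v_n/r_n$ near the end forces $\hat\gamma_n=0$).
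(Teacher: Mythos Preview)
Your proposal is essentially the paper's own proof: invoke Proposition~\ref{feasibility} for feasibility, take $k$ to be the index at which $\hat y_k=\omega$, derive the initial inequality from \eqref{flat_sufficiency_omega}, and derive the incremental inequality from the differential inequality in \eqref{ODI_principle_sufficiency} via Taylor expansion under Assumption~\ref{Assumption_d_n}. Two small slips to fix: in the initial step $D_k$ should carry $h(\hat p_k)=h(\ubar p)$, not $h(\hat p_N)$; and your capacity-rejection remark should read $\hat p_{n-1}\geq \varphi(\bar\rho)\geq \bar p\geq v_n/r_n$ (not $\ubar p\geq v_n/r_n$), which is exactly why $\hat\gamma_n=0$ whenever capacity is the binding constraint.
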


Theorem  \ref{sufficiency} shows that  $ \textsf{PM}_\phi $ with any pricing function given by Eq. \eqref{sufficient_phi} is IC and $ \alpha $-competitive, provided that $ \omega $ and $ \varphi $ satisfy certain conditions.  The sufficient conditions in Theorem \ref{sufficiency} are derived based on Proposition \ref{OPD_principle}. In particular, Eq. \eqref{flat_sufficiency_omega} and Eq. \eqref{ODI_principle_sufficiency} correspond to the initial inequality and the incremental inequality in Eq. \eqref{two_groups_of_inequalities}, respectively. A rigorous proof is given in Appendix \ref{proof_sufficiency_alpha_competitive}, while discussions of intuitions are given in Section \ref{proof_of_main_results_sufficiency}.

We note that a pricing function $ \phi $ given by Eq. \eqref{sufficient_phi} consists of three segments, namely, the \textit{flat-segment} $ [0,\omega] $, the \textit{increasing-segment} $ [\omega,\bar{\rho}] $, and the \textit{infinite-segment} $ (\bar{\rho},+\infty) $. Since $ \omega $ is a resource utilization threshold that  separates the first two segments, and plays a critical role in shaping the curvature of $ \phi $, we refer to $ \omega $ as the \textit{critical threshold} hereinafter.  Recall that the valuation densities of all the agents are lower bounded by $ \ubar{p} $, and thus the incoming requests will always be satisfied  when the total resource utilization is below $ \omega $, regardless of their valuations. 

\subsection{Necessary Conditions}
\label{section_necessary_condition}
An interesting result proved by this paper is as follows: existence of a pricing function given by Eq. \eqref{sufficient_phi} is not only sufficient to guarantee a bounded competitive ratio for $ \textsf{PM}_\phi $, but also necessary to the existence of any $ \alpha $-competitive online algorithm. The result is given by the following Theorem \ref{necessity}.  
\begin{theorem}\label{necessity}
Given a convex setup $ \mathcal{S} $, if there exists an $ \alpha $-competitive online algorithm, then there must exist a critical threshold $\omega $ which satisfies  Eq. \eqref{flat_sufficiency_omega} so that the following claims hold simultaneously: 
\begin{itemize}
	\item There exists a case when the total resource utilization is $ \omega $ and all the accepted agents have the same valuation density $ \ubar{p} $.
	\item There exists a strictly-increasing function $ \psi(p) $ that satisfies
	\begin{equation}\label{BVP_necessary_psi}
	\begin{aligned}
	\begin{cases}
	\psi'(p) = \frac{1}{\alpha}\cdot \frac{h'(p)}{p - f'(\psi(p))}, p\in (\ubar{p},\bar{p}),\\
	\psi(\ubar{p}) = \omega, \psi(\bar{p})\leq \bar{\rho}.
	\end{cases}
	\end{aligned}
	\end{equation}
	\item There exists a strictly-increasing function $ \varphi(y) $ that satisfies
	\begin{equation}\label{BVP_necessary_varphi}
	\begin{aligned}
	\normalfont\textsf{BVP}(\omega,\alpha)
	\begin{cases}
	\varphi'(y) =  \alpha \cdot\frac{\varphi(y) - f'(y)}{h'\left(\varphi(y)\right)}, y\in (\omega,\bar{\rho}),\\
	\varphi(\omega) = \ubar{p}, \varphi(\bar{\rho})\geq \bar{p}.
	\end{cases}
	\end{aligned}
	\end{equation}
	\item  $ \psi $ and $ \varphi $ are inverse to each other, i.e.,  $ \psi = \varphi^{-1} $ or $ \varphi = \psi^{-1} $.
\end{itemize}
\end{theorem}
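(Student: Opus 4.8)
The proof of Theorem \ref{necessity} is a converse to Theorem \ref{sufficiency}, so the strategy is to extract, from the mere existence of an $\alpha$-competitive online algorithm, an adversarial instance that forces the algorithm's behaviour to trace out exactly the pricing curve $\varphi$ (equivalently its inverse $\psi$) described in the statement. The engine is a worst-case family of arrival instances built from streams of identical agents at a fixed valuation density $p$, exactly as in the interpretation of $\ubar\rho,\bar\rho$: if only agents of density $p$ ever arrive, the best the algorithm can do is compare its online welfare against $h(p)$ (the offline optimum when density $p$ is available in unlimited supply, capped by capacity). First I would formalize the ``price-as-a-function-of-utilization'' observation: since Assumption \ref{Assumption_d_n} makes allocations infinitesimal, an $\alpha$-competitive algorithm's decisions on a monotone stream of densities can be encoded by a nondecreasing acceptance threshold, i.e. the algorithm effectively commits to accepting density-$p$ agents up to some utilization level $\psi(p)$; monotonicity of $\psi$ is forced because accepting more at a lower density than at a higher one is dominated. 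This $\psi$ is the object in Eq. \eqref{BVP_necessary_psi}.

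**Key steps, in order.** (1) \emph{The critical threshold.} Run the instance consisting only of density-$\ubar p$ agents until the algorithm stops accepting; call the resulting utilization $\omega$. Feasibility (the algorithm cannot do better than the offline optimum, which on this instance is $h(\ubar p)=F_{\ubar p}(\ubar\rho)$) together with $\alpha$-competitiveness forces $F_{\ubar p}(\omega)\ge \tfrac1\alpha h(\ubar p)$, and $\omega\le\ubar\rho$ because accepting past $\ubar\rho$ strictly decreases welfare — this is precisely Eq. \eqref{flat_sufficiency_omega}, and it also establishes the first bullet. (2) \emph{The differential inequality becomes an equality at the worst case.} Consider the instance that first sends enough density-$\ubar p$ agents to reach utilization $\omega$, then sends a stream whose densities increase continuously from $\ubar p$ to $\bar p$. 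Let $\varphi(y)$ be the density at which the algorithm is, at the margin, indifferent about accepting when utilization is $y$ (so the algorithm has accepted everything with density $\ge\varphi(y)$ up to utilization $y$). Apply Proposition \ref{OPD_principle} in reverse: track $S_{\textsf{online}}$ after truncating the instance at the point where density reaches $p=\varphi(y)$, and compare with $S_{\textsf{offline}}$ for that truncation, which equals $h(p)$. The competitive constraint at \emph{every} truncation point yields the incremental inequality $\varphi'(y)\le \alpha\,\frac{\varphi(y)-f'(y)}{h'(\varphi(y))}$; conversely, if strict inequality held on a subinterval, one shows the algorithm is leaving welfare on the table there, and a modified adversary (concentrating mass just above that interval) beats the ratio $\alpha$ — so the worst-case $\varphi$ must satisfy the ODE with equality, giving Eq. \eqref{BVP_necessary_varphi}. (3) \emph{Boundary conditions.} $\varphi(\omega)=\ubar p$ is the matching condition with the flat segment; $\varphi(\bar\rho)\ge\bar p$ (equivalently $\psi(\bar p)\le\bar\rho$) follows because once density $\bar p$ is offered, the algorithm, to stay competitive against $h(\bar p)=F_{\bar p}(\bar\rho)$, must have been willing to fill up to at least $\bar\rho$; filling beyond $\bar\rho$ is never beneficial. (4) \emph{Inverse relationship.} By construction $\psi$ records utilization as a function of the marginal accepted density and $\varphi$ records marginal accepted density as a function of utilization; strict monotonicity of each (forced by $\alpha<\infty$, since a flat stretch in either would mean a whole band of densities is treated identically, which the adversary exploits) gives $\psi=\varphi^{-1}$, and substituting $y=\psi(p)$, $\varphi(y)=p$ into Eq. \eqref{BVP_necessary_varphi} and using $\frac{d\psi}{dp}=1/\varphi'(\psi(p))$ converts one ODE into the other, matching Eq. \eqref{BVP_necessary_psi}.

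**Main obstacle.** The delicate part is Step (2): turning ``$\alpha$-competitive on every truncated instance'' into a pointwise differential equation rather than merely an inequality. The inequality direction is a clean application of Proposition \ref{OPD_principle}'s bookkeeping, but the equality — that the \emph{optimal} online algorithm's induced $\varphi$ cannot do strictly better than the ODE solution anywhere — requires a careful adversarial argument showing that any ``slack'' in the differential inequality on an interval $[y_1,y_2]$ can be converted into a concrete instance (e.g. reaching utilization $y_1$ via the increasing stream, then flooding with density $\varphi(y_2^-)$) on which the realized ratio exceeds $\alpha$. Making this rigorous means controlling how the algorithm's future (unknown) behaviour interacts with the truncation, and invoking Assumption \ref{Assumption_d_n} to pass from discrete agents to the continuum limit where the ODE lives; I expect this limiting argument, and the reduction showing $\varphi$ must be strictly increasing (no plateaus, no jumps) so that $\psi=\varphi^{-1}$ is well defined, to be where most of the technical work concentrates.
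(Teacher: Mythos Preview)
Your Step~(1) is correct and matches the paper's construction of the instance $\mathcal{A}_{\ubar p}$ exactly.

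The genuine gap is in Step~(2), and it has two layers. First, the passage from ``$\alpha$-competitive on every truncation'' to the pointwise differential inequality $\varphi'(y)\le \alpha\,\frac{\varphi(y)-f'(y)}{h'(\varphi(y))}$ is not justified: what the truncated instances give you is the \emph{integral} inequality
\[
\ubar p\,\omega+\int_{\ubar p}^{p}\eta\,\psi'(\eta)\,d\eta - f(\psi(p))\ \ge\ \tfrac{1}{\alpha}h(p),\qquad p\in[\ubar p,\bar p],
\]
for the algorithm's induced $\psi$, and an inequality that holds for all $p$ does not differentiate into a pointwise inequality on $\psi'$. (Proposition~\ref{OPD_principle} is a \emph{sufficient} condition for $\alpha$-competitiveness, so it cannot be ``applied in reverse'' to force the incremental inequality to hold step by step.) Second, and more fundamentally, the adversarial argument you sketch for upgrading the inequality to an equality is in the wrong direction. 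Slack in the inequality means the algorithm is doing \emph{better} than required on that stretch; an adversary cannot exploit extra online welfare to beat the ratio~$\alpha$. Indeed, a generic $\alpha$-competitive algorithm's own $\psi$ need not satisfy the ODE with equality anywhere---the theorem only asserts the \emph{existence} of some strictly-increasing $\psi$ (and its inverse $\varphi$) satisfying the BVP.

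The paper's route resolves both issues simultaneously by a Gronwall-type extremal argument: starting from the integral inequality above (which any $\alpha$-competitive algorithm must satisfy for some non-decreasing $\psi$ with $\psi(\ubar p)=\omega$, $\psi(\bar p)\le\bar\rho$), one defines $\psi_L$ as the \emph{tightest lower bound} among all such $\psi$'s. By extremality $\psi_L$ satisfies the integral relation with equality, hence satisfies the ODE $\psi_L'(p)=\tfrac{1}{\alpha}\,\frac{h'(p)}{p-f'(\psi_L(p))}$; the right-hand side is strictly positive (since $\omega\le\ubar\rho$ forces $p>f'(\psi_L(p))$ initially, and Picard--Lindel\"of propagates this), so $\psi_L$ is strictly increasing; and $\psi_L(\bar p)\le\psi(\bar p)\le\bar\rho$ gives the second boundary condition. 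Setting $\varphi=\psi_L^{-1}$ then yields the companion BVP by the chain-rule substitution you describe in Step~(4). So your Steps~(3)--(4) are fine once $\psi_L$ is in hand, but the existence of $\psi_L$---not an adversarial tightening of the algorithm's own curve---is the missing idea.
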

\begin{proof}
	The proof of this theorem is based on constructing a resource utilization level $ \omega$ and a strictly-increasing function $ \psi $ for any $ \alpha $-competitive online algorithm under two special arrival instances. The complete proof is deferred to  Appendix \ref{proof_of_necessity_alpha_competitive}. 
\end{proof}

The first necessary condition in Theorem \ref{necessity} argues that the existence of a critical threshold $ \omega $ is directly related to the existence of any $ \alpha $-competitive algorithm.  Therefore, selling the resource at the lowest price $ \ubar{p} $ during $ [0,\omega] $ is necessary for $ \textsf{PM}_\phi $ to achieve a bounded competitive ratio. Meanwhile, the value of the critical threshold $ \omega $ must stay within a certain range defined by Eq. \eqref{flat_sufficiency_omega}. 

The second and the third conditions in Theorem \ref{necessity} are related to ODEs with two boundary conditions, which are often termed as first-order two-point BVPs in mathematics \cite{ODE1973, ODE_book}. For a given setup, Eq. \eqref{BVP_necessary_varphi} is related to the critical threshold $ \omega $ and the competitive ratio $ \alpha $, and thus we refer to Eq. \eqref{BVP_necessary_varphi} by $ \textsf{BVP}(\omega,\alpha) $ hereinafter. Note that $ \textsf{BVP}(\omega,\alpha) $ is equivalent to Eq. \eqref{ODI_principle_sufficiency} after enforcing the equality of the differential inequality. Meanwhile, the sufficient conditions in Theorem \ref{sufficiency} only require an increasing $ \varphi $ that satisfies Eq. \eqref{ODI_principle_sufficiency}, while a strictly-increasing $ \varphi $ is needed in Theorem \ref{necessity} to guarantee the existence of an arbitrary $ \alpha $-competitive online algorithm.

(\textbf{Principles of Optimal Design}) Based on Theorem \ref{necessity}, if we can find a competitive ratio parameter $ \alpha $ such that 
\begin{itemize}[leftmargin=*]
	\item for some $ \omega $ that satisfies Eq. \eqref{flat_sufficiency_omega}, we can find a strictly-increasing solution to  $\textsf{BVP}(\omega,\alpha) $;
	\item for all $ \omega$ that satisfies Eq. \eqref{flat_sufficiency_omega},  there exists no strictly-increasing solution to $\textsf{BVP}(\omega,\alpha-\epsilon) $, $\forall  \epsilon>0 $,
\end{itemize}
then this $ \alpha $ must be the optimal (minimum) competitive ratio achievable by all online algorithms\footnote{Note that the same design principle can also be applied for the BVP in Eq. \eqref{BVP_necessary_psi}. Here, we choose to deal with $ \textsf{BVP}(\omega,\alpha) $ because in the following we need to compute $ \varphi $ as part of our pricing function $ \phi $.}.  In the next two subsections, we use this principle to characterize the optimal competitive ratios for different setups.

\subsection{Optimal Designs for Convex Setups}
Theorem \ref{major_results} below  characterizes the design of a pricing function to achieve the optimal competitive ratio for a given convex setup.

\begin{theorem}\label{major_results}
Given a convex setup $ \mathcal{S}  $, there exists a unique  optimal critical threshold $ \omega_* $ so that the following claims hold:
\begin{itemize} 
	\item The optimal competitive ratio $ \alpha_*(\mathcal{S}) $ is given by 
	\begin{align}\label{optimal_alpha_convex}
	\alpha_*(\mathcal{S}) = \frac{h\big(\ubar{p}\big)}{F_{\ubar{p}}(\omega_*)} 
	=
	\begin{cases}
	\frac{\ubar{p}f'^{-1}(\ubar{p})-f(f'^{-1}(\ubar{p}))}{\ubar{p} \omega_* - f(\omega_*)}  &\text{if }\ \ubar{p}\in (\ubar{c},\bar{c}),\\
	\frac{\ubar{p}-f(1)}{\ubar{p} \omega_* - f(\omega_*)} &\text{if }\  \ubar{p}\in [\bar{c},+\infty).
	\end{cases}
	\end{align} 
	
	\item There exists a unique optimal pricing function $ \phi_* $ given by 
	\begin{align}\label{optimal_pricing_convex}
	\phi_*(y) = 
	\begin{cases}
	\ubar{p} &\text{if } y\in [0,\omega_*),\\
	\varphi_*(y)  &\text{if } y\in [\omega_*, \bar{\rho}],\\
	+\infty &\text{if } y\in (\bar{\rho},+\infty),
	\end{cases}
	\end{align}
	so that $ \textsf{PM}_{\phi_*} $ is IC and $ \alpha_*(\mathcal{S}) $-competitive. In Eq. \eqref{optimal_pricing_convex}, $ \varphi_* $ is the unique strictly-increasing solution to $\normalfont \textsf{BVP}(\omega_*,\alpha_*(\mathcal{S})) $.
\end{itemize}
\end{theorem}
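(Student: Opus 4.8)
The plan is to combine the sufficiency result (Theorem~\ref{sufficiency}), the necessity result (Theorem~\ref{necessity}), and the ``principles of optimal design'' outlined after Theorem~\ref{necessity}, so the real work is an existence/uniqueness analysis of $\textsf{BVP}(\omega,\alpha)$ as a function of the two parameters $(\omega,\alpha)$. First I would set up a bijective change of variables: by the $\psi=\varphi^{-1}$ correspondence in Theorem~\ref{necessity}, solving $\textsf{BVP}(\omega,\alpha)$ on $[\omega,\bar\rho]$ with $\varphi(\omega)=\ubar p$ is equivalent to solving the IVP in Eq.~\eqref{BVP_necessary_psi} for $\psi$ on $[\ubar p,\bar p]$ with $\psi(\ubar p)=\omega$; the advantage is that the IVP-in-$p$ has a fixed left endpoint $\ubar p$ and a fixed initial value requirement, so standard ODE theory (Picard--Lindel\"of, using that $h'$ is continuous and $p-f'(\psi)$ stays bounded away from $0$ on the relevant region because $f'$ is strictly increasing and $\psi<\bar\rho\le f'^{-1}(\bar p)$) gives a unique maximal solution $\psi(\cdot\,;\omega,\alpha)$ depending continuously — in fact monotonically — on $\omega$ and $\alpha$. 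The boundary requirement $\varphi(\bar\rho)\ge\bar p$, equivalently $\psi(\bar p\,;\omega,\alpha)\le\bar\rho$, then becomes a single scalar inequality in $(\omega,\alpha)$.

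Next I would establish the monotonicity structure that makes the optimization one-dimensional. Define $\omega_{\max}(\alpha)$ to be the largest $\omega$ satisfying Eq.~\eqref{flat_sufficiency_omega}, i.e. $F_{\ubar p}(\omega)\ge\frac1\alpha h(\ubar p)$ with $\omega\le\ubar\rho$; since $F_{\ubar p}$ is strictly increasing on $[0,\ubar\rho]$ this $\omega_{\max}(\alpha)$ is well defined, strictly increasing in $\alpha$, and is characterized by $F_{\ubar p}(\omega_{\max}(\alpha))=\frac1\alpha h(\ubar p)$. I claim that for the purpose of satisfying the terminal condition it is optimal to take $\omega$ as large as possible: by a comparison-of-solutions argument for the $\psi$-IVP, a larger initial value $\psi(\ubar p)=\omega$ and a larger $\alpha$ (which only scales up $\psi'$) both push $\psi(\bar p\,;\omega,\alpha)$ upward, so if the terminal condition $\psi(\bar p)\le\bar\rho$ holds for some admissible $\omega$ at a given $\alpha$ it holds for $\omega=\omega_{\max}(\alpha)$... wait, that is backwards — a larger $\omega$ makes the terminal value \emph{larger}, hence harder to keep below $\bar\rho$. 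So the correct statement is: decreasing $\omega$ relaxes the terminal constraint but is capped below only by feasibility, while \eqref{flat_sufficiency_omega} caps $\omega$ from above and forces $\alpha$ up. The resolution is that the two requirements pull $\omega$ in opposite directions, and the optimal $\alpha_*$ is exactly the value at which the unique $\omega$ making the terminal condition tight, $\psi(\bar p\,;\omega,\alpha)=\bar\rho$, coincides with $\omega_{\max}(\alpha)$, i.e. also makes \eqref{flat_sufficiency_omega} tight. This gives the system
\begin{equation}\nonumber
F_{\ubar p}(\omega_*)=\tfrac{1}{\alpha_*}h(\ubar p),\qquad \psi\big(\bar p\,;\omega_*,\alpha_*\big)=\bar\rho,
\end{equation}
which I would show has a unique solution $(\omega_*,\alpha_*)$ by an intermediate-value/monotonicity argument in $\alpha$ alone (substitute $\omega=\omega_{\max}(\alpha)$ into the second equation and show the resulting function of $\alpha$ crosses $\bar\rho$ exactly once). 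The first equation of this system immediately yields the closed form $\alpha_*(\mathcal S)=h(\ubar p)/F_{\ubar p}(\omega_*)$, and plugging in the two cases of $h(\ubar p)$ from Eq.~\eqref{dual_bar_f_star} and of $F_{\ubar p}(\omega_*)=\ubar p\,\omega_*-f(\omega_*)$ gives Eq.~\eqref{optimal_alpha_convex}.

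Having pinned down $(\omega_*,\alpha_*)$, I would assemble the theorem. Optimality of $\alpha_*$ follows from Theorem~\ref{necessity} via the design principle: for $\alpha=\alpha_*$ and $\omega=\omega_*$ the BVP has the (unique, strictly increasing) solution $\varphi_*=\psi(\cdot\,;\omega_*,\alpha_*)^{-1}$ just constructed, so an $\alpha_*$-competitive mechanism exists; for any $\alpha'<\alpha_*$ and any admissible $\omega$, monotonicity shows $\psi(\bar p\,;\omega,\alpha')<\psi(\bar p\,;\omega_{\max}(\alpha'),\alpha')<\bar\rho$ is \emph{violated} — more precisely the terminal value overshoots $\bar\rho$ before $p$ reaches $\bar p$, or the flat-segment constraint \eqref{flat_sufficiency_omega} fails — so no strictly-increasing solution to $\textsf{BVP}(\omega,\alpha')$ exists, whence by Theorem~\ref{necessity} no $\alpha'$-competitive online algorithm exists. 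Achievability then follows by feeding $\phi_*$ defined by Eq.~\eqref{optimal_pricing_convex} into Theorem~\ref{sufficiency}: the flat segment satisfies \eqref{flat_sufficiency_omega} with equality by construction, and $\varphi_*$ satisfies \eqref{ODI_principle_sufficiency} since it solves the equality version $\textsf{BVP}(\omega_*,\alpha_*)$, so $\textsf{PM}_{\phi_*}$ is IC and $\alpha_*(\mathcal S)$-competitive. Uniqueness of $\omega_*$ and of $\varphi_*$ is exactly the uniqueness already extracted from the ODE analysis.

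The main obstacle I anticipate is the rigorous ODE existence/uniqueness and comparison analysis near the right endpoint: when $\bar p\ge\bar c$ one has $\bar\rho=1$ and the denominator $p-f'(\psi(p))$ can approach $0$ as $\psi\to 1$, so the vector field is not Lipschitz up to the boundary and one must argue that the maximal solution actually reaches $\psi=\bar\rho$ in finite ``$p$-time'' (or control the blow-up) — this is presumably where the excerpt's references to uniqueness theory for BVPs \cite{uniqueness_book1993,ODE1973,Perko2001} are used, and it is the step that needs the strict convexity of $f$ most essentially. A secondary technical point is verifying the continuity/monotonicity of $\psi(\bar p\,;\omega,\alpha)$ jointly in both parameters carefully enough to run the single-crossing argument; the monotonicity in $\alpha$ is clear from the sign of the right-hand side, and monotonicity in $\omega$ from the standard comparison principle for scalar ODEs, but one should check these survive the boundary degeneracy above.
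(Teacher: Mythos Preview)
Your overall strategy --- show that the optimal $(\omega_*,\alpha_*)$ is characterized by making both the flat-segment inequality \eqref{flat_sufficiency_omega} and the terminal condition $\psi(\bar p)=\bar\rho$ tight, and then use monotonicity in $(\omega,\alpha)$ for uniqueness --- is the right one, and the system you land on is exactly the paper's characterization. But the monotonicity discussion contains two sign errors that you need to repair before the argument goes through. First, in the $\psi$-IVP \eqref{BVP_necessary_psi} the parameter $\alpha$ sits in the \emph{denominator}, so a larger $\alpha$ scales $\psi'$ \emph{down}, not up; hence $\psi(\bar p\,;\omega,\alpha)$ is decreasing in $\alpha$, and the terminal inequality $\psi(\bar p)\le\bar\rho$ gives a \emph{lower} bound on $\alpha$ for each fixed $\omega$, increasing in $\omega$. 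Second, since $F_{\ubar p}$ is strictly increasing on $[0,\ubar\rho]$, the constraint $F_{\ubar p}(\omega)\ge\frac1\alpha h(\ubar p)$ bounds $\omega$ from \emph{below}, not above; the quantity you call $\omega_{\max}(\alpha)$ is really $\omega_{\min}(\alpha)$, and it is \emph{decreasing} in $\alpha$. With the signs fixed, the two lower bounds on $\alpha$ --- namely $h(\ubar p)/F_{\ubar p}(\omega)$ (decreasing in $\omega$) and the minimal $\alpha$ for which $\psi(\bar p;\omega,\alpha)\le\bar\rho$ (increasing in $\omega$) --- cross at a unique $\omega_*$, which is the intersection argument you want.

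The paper's own proof is organized differently: because $h'$ has a kink at $\bar c$ (it equals $f'^{-1}(p)$ for $p\le\bar c$ and $1$ for $p>\bar c$), the paper splits into three cases depending on where $\ubar p,\bar p$ sit relative to $\bar c$, and in the mixed case introduces an intermediate threshold $u=\varphi^{-1}(\bar c)$ and two separate BVPs, defining functions $\Gamma_1(\omega,u)$ and $\Gamma_2(u)$ whose intersection pins down $(\omega_*,u_*,\alpha_*)$. Your $\psi$-based approach is more unified, but you do need to address this kink: standard Picard--Lindel\"of and the comparison principle require local Lipschitz continuity of the right-hand side, and $h'$ is only continuous (not $C^1$) across $p=\bar c$, so you should either argue that continuity suffices for the monotone-comparison you need, or split at $p=\bar c$ as the paper does. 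Your anticipated boundary obstacle (denominator $p-f'(\psi)$ vanishing near $\bar\rho=1$) is less serious than you fear: one shows $\psi(p)<f'^{-1}(p)$ throughout (equivalently $\varphi(y)>f'(y)$, which is the content of the paper's Lemma~\ref{monotonicity_omega}), and when $p>\bar c$ the denominator is bounded below by $p-\bar c>0$.
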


Theorem \ref{major_results} illustrates our major results regarding the convex setup, namely, the existence and uniqueness of the optimal pricing function $ \phi_* $ so that $ \textsf{PM}_{\phi_*} $ achieves the optimal competitive ratio $ \alpha_*(\mathcal{S}) $. The proof of Theorem \ref{major_results} is based on analyzing the solution structures of $ \textsf{BVP}(\omega,\alpha) $, and the details are given in Section \ref{proof_of_major_results}. We emphasize that Theorem \ref{major_results} shows the existence and uniqueness of the optimal critical threshold $ \omega_* $ without discussing how to quantify it. For the details about how to calculate  $ \omega_* $, please refer to Theorem \ref{existence_uniqueness_opt_case_1} in Section \ref{proof_of_major_results} as well.  

\begin{corollary}\label{property_of_alpha_S}
Given a convex setup $ \mathcal{S}  $, we have:
\begin{itemize}
	\item $\alpha_*(\mathcal{S})  $ is strictly decreasing in $ \ubar{p}\in (\ubar{c},\bar{p}] $ for a given $ \bar{p}\in (\ubar{c},+\infty) $. 
	\item  $\alpha_*(\mathcal{S})  $ is strictly increasing in $ \bar{p}\in [\ubar{p},+\infty) $ for a given $ \ubar{p}\in (\ubar{c},+\infty) $. 
	\item $ \alpha_*(\mathcal{S}) = 1 $ and $ \omega_* = \ubar{\rho} $ when $ \ubar{p} = \bar{p} \in  (\ubar{c},+\infty)$.  
\end{itemize}
\end{corollary}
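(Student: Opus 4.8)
The plan is to read off all three items from Theorem~\ref{major_results}: the identity $\alpha_*(\mathcal{S})=h(\ubar{p})/F_{\ubar{p}}(\omega_*)$ together with the fact that $\omega_*$ is the \emph{unique} critical threshold for which $\textsf{BVP}(\omega_*,\alpha_*(\mathcal{S}))$ (equivalently the inverse problem~\eqref{BVP_necessary_psi}) has a strictly-increasing solution, and with the two-sided optimal-design principle stated after Theorem~\ref{necessity}. I would dispatch the degenerate case (third item) first. When $\ubar{p}=\bar{p}=:p$, then $\ubar{\rho}=\bar{\rho}$ by~\eqref{def_of_rho}, so the increasing segment of~\eqref{sufficient_phi} degenerates and $\phi(y)=p$ on $[0,\ubar{\rho})$ and $+\infty$ beyond is admissible; under $\textsf{PM}_\phi$ every agent (valuation density exactly $p$) is posted the price $p$, has utility $v_n-pr_n=0\ge0$, and is accepted whenever capacity permits, so by Assumption~\ref{Assumption_d_n} the utilization is driven to $\ubar{\rho}$ and the realized welfare is arbitrarily close to $p\ubar{\rho}-f(\ubar{\rho})=F_{p}(\ubar{\rho})=h(p)\ge S_{\textsf{offline}}$ (using~\eqref{S_offline_P_D}). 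Hence $\alpha_*(\mathcal{S})=1$, and substituting into $\alpha_*(\mathcal{S})=h(\ubar{p})/F_{\ubar{p}}(\omega_*)$ gives $F_{\ubar{p}}(\omega_*)=h(\ubar{p})=\max_{y\ge0}F_{\ubar{p}}(y)$; strict concavity of $F_{\ubar{p}}(y)=\ubar{p}y-\bar{f}(y)$ on $[0,1]$, with unique maximizer $\ubar{\rho}$, then forces $\omega_*=\ubar{\rho}$.

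For the second item fix $\ubar{p}$. Since $h(\ubar{p})$, $F_{\ubar{p}}(\cdot)$ and $\ubar{\rho}$ are independent of $\bar{p}$, Theorem~\ref{major_results} reduces the claim to: $\omega_*$ is strictly decreasing in $\bar{p}$. Indeed, for $\bar{p}>\ubar{p}$ the third item forces $\omega_*<\ubar{\rho}$, hence $F_{\ubar{p}}'(\omega_*)=\ubar{p}-f'(\omega_*)>0$, so a strictly decreasing $\omega_*$ produces a strictly decreasing $F_{\ubar{p}}(\omega_*)$ and a strictly increasing $\alpha_*(\mathcal{S})$. To control $\omega_*$ I would use~\eqref{BVP_necessary_psi}: to each candidate $\omega\in(0,\ubar{\rho}]$ attach $\alpha(\omega)=h(\ubar{p})/F_{\ubar{p}}(\omega)$ (strictly decreasing in $\omega$) and let $\psi_\omega$ solve $\psi'(p)=\alpha(\omega)^{-1}h'(p)/\bigl(p-f'(\psi(p))\bigr)$ with $\psi_\omega(\ubar{p})=\omega$; the optimal-design principle identifies $\omega_*(\bar{p})$ with the largest admissible $\omega$, i.e., the one whose trajectory reaches the singular curve $p\mapsto f'^{-1}(p)$ exactly at $p=\bar{p}$. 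A comparison argument then shows that for $\omega_1<\omega_2$ the trajectories $\psi_{\omega_1},\psi_{\omega_2}$ cannot cross (at a first crossing the trajectory with the larger $\omega$ has the smaller $\alpha(\omega)$, hence the strictly larger slope, contradicting that the gap would be closing there), so $\psi_{\omega_2}$ remains strictly above $\psi_{\omega_1}$ and meets $f'^{-1}(\cdot)$ at a strictly smaller abscissa; consequently the $\omega$ solving $\psi_\omega(\bar{p})=f'^{-1}(\bar{p})$ decreases strictly as $\bar{p}$ increases, which is $\omega_*(\bar{p})$ strictly decreasing.

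For the first item, the weak inequality is a one-line reduction: every arrival instance admissible for $\mathcal{S}_2=\{f,\ubar{p}_2,\bar{p}\}$ with $\ubar{c}<\ubar{p}_1<\ubar{p}_2\le\bar{p}$ is also admissible for $\mathcal{S}_1=\{f,\ubar{p}_1,\bar{p}\}$ and has the same $S_{\textsf{offline}}$, so any $\alpha$-competitive algorithm for $\mathcal{S}_1$ is $\alpha$-competitive for $\mathcal{S}_2$; thus $\alpha_*(\mathcal{S})$ is non-increasing in $\ubar{p}$. A non-increasing function that is constant on no subinterval is strictly decreasing, so it suffices to rule out $\alpha_*(\ubar{p}_1)=\alpha_*(\ubar{p}_2)$ for $\ubar{p}_1<\ubar{p}_2$ in $(\ubar{c},\bar{p})$. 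If it held, $\alpha_*(\cdot)\equiv\alpha$ on $[\ubar{p}_1,\ubar{p}_2]$, and one verifies that the optimal increasing segments for all floors in this interval are restrictions of a single function $\varphi_*$ --- this follows from the uniqueness in Theorem~\ref{major_results} once one checks that truncating $\varphi_*^{(\ubar{p}_1)}$ at the point where it attains a given floor value yields an admissible pricing function in the sense of Theorem~\ref{sufficiency}, here using that $\bar{\rho}$ is unchanged and that $\varphi_*^{(\ubar{p}_1)}\ge f'$ on its domain so the truncation threshold lies below $\ubar{\rho}$. Writing $\psi_*=\varphi_*^{-1}$ one then has $\omega_*(\ubar{p})=\psi_*(\ubar{p})$ and $\alpha F_{\ubar{p}}(\psi_*(\ubar{p}))\equiv h(\ubar{p})$ on the interval; differentiating in $\ubar{p}$ and using the identity $\psi_*'(\ubar{p})\bigl(\ubar{p}-f'(\psi_*(\ubar{p}))\bigr)=\alpha^{-1}h'(\ubar{p})$ from~\eqref{BVP_necessary_psi} collapses this to $\alpha\,\psi_*(\ubar{p})=0$, i.e., $\omega_*=0$, contradicting $\omega_*>0$ (which holds since $F_{\ubar{p}}(0)=0$ and $\alpha_*(\mathcal{S})<\infty$). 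Hence $\alpha_*(\cdot)$ is constant on no subinterval of $(\ubar{c},\bar{p})$ and is strictly decreasing there; moreover $\alpha_*>1$ for every $\ubar{p}<\bar{p}$ because $\textsf{BVP}(\ubar{\rho},1)$ has no strictly-increasing solution (a $\varphi$ started at $\varphi(\ubar{\rho})=f'(\ubar{\rho})$ immediately falls below $f'$ and then decreases), which extends strict monotonicity to the endpoint $\ubar{p}=\bar{p}$.

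I expect the genuine obstacle to be the \emph{strict}, as opposed to merely weak, monotonicity in the last two items. Both reductions hinge on a monotone-dependence/comparison analysis of the $\textsf{BVP}$ trajectories near the moving singularity $p=f'(\psi(p))$ (equivalently $\varphi(y)=f'(y)$), where the ODE right-hand side blows up: one has to argue that each trajectory is well defined and strictly increasing up to that curve, that it depends monotonically on the parameter in the asserted direction, and that the hitting abscissa moves strictly with $\bar{p}$ (respectively that the truncated threshold is strictly interior, so the local-constancy hypothesis is genuinely contradicted). The degenerate item and the reduction direction of the first item are, by contrast, routine.
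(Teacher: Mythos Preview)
Your approach is genuinely different from the paper's. The paper argues all three items via the monotonicity of the auxiliary functions $\Gamma_1(\omega,u)$ and $\Gamma_2(u)$ introduced in the proof of Theorem~\ref{major_results} (Proposition~\ref{two_lower_bounds_case_1}): for the first item it asserts, essentially by inspection of $\textsf{BVP}_1$, that $\Gamma_1(\omega,u)$ is strictly decreasing in $\ubar{p}$, which moves the intersection point in Fig.~\ref{two_alpha_intersection} and lowers $\alpha_*$; the second item is declared ``similar''; the third item is handled case-by-case via the boundary structure of $\widehat{\textsf{BVP}}_1,\widehat{\textsf{BVP}}_2$. Your route---a direct construction for the degenerate item, an instance-inclusion argument plus a differentiation identity for the first item, and a $\psi$-trajectory comparison for the second---is more conceptual and avoids the $\Gamma$-machinery and the three-case split almost entirely. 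The weak monotonicity step in your first item (arrival instances for $\mathcal{S}_2$ are admissible for $\mathcal{S}_1$) is cleaner than anything in the paper.

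There is, however, a genuine gap in your strictness argument for the first item. You claim that, under the constancy hypothesis, the truncation of $\varphi_*^{(\ubar{p}_1)}$ at the level $\ubar{p}$ must coincide with $\varphi_*^{(\ubar{p})}$ ``by the uniqueness in Theorem~\ref{major_results}''. But Theorem~\ref{major_results}'s uniqueness singles out the pair $(\omega_*,\varphi_*)$ satisfying $\alpha_*(\mathcal{S})=h(\ubar{p})/F_{\ubar{p}}(\omega_*)$ \emph{with equality}; your truncation threshold $\tilde\omega=\psi_*(\ubar{p})$ satisfies that relation only with strict inequality (as the integral identity~\eqref{integral_version_expanding} for $\varphi_*^{(\ubar{p}_1)}$ at $\rho=\tilde\omega$ gives $F_{\ubar{p}}(\tilde\omega)>\frac{1}{\alpha}h(\ubar{p})$), so the uniqueness statement does not apply. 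The fix is simple and actually makes your argument tighter: under the constancy hypothesis every $\varphi_*^{(\ubar{p})}$ solves the \emph{same} ODE with the \emph{same} terminal condition $\varphi(\bar\rho)=\bar{p}$ (this is where you should invoke the paper's observation that $\phi_*(\bar\rho)=\bar{p}$ always holds, see Theorems~\ref{existence_uniqueness_opt_case_1}--\ref{existence_uniqueness_opt_case_3}), so Picard--Lindel\"of applied backward from $(\bar\rho,\bar{p})$ forces them all to be restrictions of one function. From there your differentiation of $\alpha\,F_{\ubar{p}}(\psi_*(\ubar{p}))\equiv h(\ubar{p})$ goes through verbatim and yields the contradiction $\alpha\,\omega_*=0$. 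A minor second point: in your second item, the description ``trajectory reaches the singular curve $p\mapsto f'^{-1}(p)$ exactly at $p=\bar{p}$'' is literally correct only in \textbf{Case-3}; in \textbf{Case-1}/\textbf{Case-2} one has $\bar\rho=1$ and the terminal constraint is $\psi_\omega(\bar{p})=1$, not the singular curve. The comparison argument you sketch still works once this is adjusted.
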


In Eq. \eqref{optimal_alpha_convex}, $ \alpha_*(\mathcal{S}) $ explicitly depends on $ f $ and $ \ubar{p} $, and implicitly depends on $ \bar{p} $ through the optimal critical threshold $ \omega_* $. This leads to the monotonicity of $ \alpha_*(\mathcal{S}) $ w.r.t. to $ \ubar{p} $ and $ \bar{p} $ in Corollary \ref{property_of_alpha_S}.  In particular, the third bullet in Corollary \ref{property_of_alpha_S} can be interpreted as follows: when $ \ubar{p} = \bar{p} \in  (\ubar{c},+\infty)$, the agents are identical in terms of their valuation densities, and thus it makes no difference to know all the future arrival information, i.e., $ \alpha_*(\mathcal{S}) = 1 $. The proof of Corollary \ref{property_of_alpha_S} is deferred to Appendix \ref{proof_of_property_of_alpha_S}. 

Before leaving this subsection, we give the following remark regarding the uniqueness of $ \phi_* $.
\begin{remark}[Uniqueness]
	We emphasize that the property of uniqueness in Theorem \ref{major_results} does not mean that there exists only one $ \alpha_*({\mathcal{S}}) $-competitive online mechanism/algorithm. Instead, Theorem \ref{major_results} only argues that $ \textsf{PM}_{\phi_*} $ can achieve the optimal competitive ratio $ \alpha_*({\mathcal{S}}) $ with a unique pricing function $ \phi_* $. 
\end{remark}

\subsection{Optimal Designs for Linear Supply Costs}
For a given convex setup $ \mathcal{S} $, Theorem \ref{major_results} shows that the optimal competitive ratio $ \alpha_*(\mathcal{S}) $ and  the optimal pricing function $ \phi_* $ are directly related to the optimal critical threshold $ \omega_* $.  It is worth emphasizing that $\omega_* $ is a design parameter which cannot be given in analytical forms, so does $ \phi_* $ (in fact, $ f $ itself is arbitrary, and thus this is not surprising). Nevertheless, for some special cost functions, Corollary \ref{major_results_log} below shows that logarithmic competitive ratios can be obtained via  analytical designs of  $ \phi_* $ and $ \omega_* $.

\begin{corollary}\label{major_results_log}
	Given a nice setup $ \mathcal{S} = \{f,\ubar{p},\bar{p}\} $, if the cost function $ f(y) = qy $, where $ q\geq 0 $,  then there exists a unique pricing function $ \phi_* $ given by
	\begin{align}\label{optimal_pricing_function_log}
	\phi_*(y) = 
	\begin{cases}
	\ubar{p} &\text{if } y\in [0,\omega_*),\\
	\big(\ubar{p}-q\big)\cdot\exp\left(y/\omega_* - 1\right) + q  &\text{if } y\in [\omega_*, 1],\\
	+\infty &\text{if } y\in (1,+\infty),
	\end{cases}
	\end{align}
	such that $ \textsf{PM}_{\phi_*} $ is IC and $ \alpha_*(\mathcal{S}) $-competitive, where  $ \alpha_*(\mathcal{S}) $ is given by
	\begin{equation*}
	\alpha_*(\mathcal{S}) = 1 + \ln\Big(\frac{\bar{p}-q}{\ubar{p}-q}\Big).
	\end{equation*}	
	In Eq. \eqref{optimal_pricing_function_log}, the optimal critical threshold $ \omega_* = \frac{1}{\alpha_*(\mathcal{S})} $. 
\end{corollary}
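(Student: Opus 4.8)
The plan is to specialize the general machinery of Theorems~\ref{sufficiency}, \ref{necessity} and \ref{major_results} to the linear cost $f(y)=qy$, for which every quantity entering $\textsf{BVP}(\omega,\alpha)$ becomes explicit. First I would record the setup-specific data. Since $f'(y)\equiv q$, we have $\ubar{c}=\bar{c}=q$; the nice-setup requirement $f'(0)<\ubar{p}$ reads $q<\ubar{p}$, so $\ubar{p}-q>0$ and $\bar{p}-q>0$, and $f(0)=0$ with $f$ non-decreasing, so $\mathcal{S}$ is indeed a nice setup. The extended cost is $\bar{f}(y)=qy$ on $[0,1]$ and $+\infty$ beyond, hence for every $p\geq\ubar{p}>q$ the profit $F_p(y)=(p-q)y$ is maximized at $y=1$, giving $h(p)=p-q$ and $h'(p)\equiv 1$ on $[\ubar{p},\bar{p}]$; likewise $\ubar{\rho}=\bar{\rho}=1$. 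Although $f$ is only a \emph{nice} (not a \emph{convex}) setup, the only use of strict convexity in Theorems~\ref{sufficiency}--\ref{major_results} is to make $f'$ invertible and to express $\ubar{\rho},\bar{\rho}$ via $f'^{-1}$; here I would simply substitute $\ubar{\rho}=\bar{\rho}=1$ and $h'\equiv1$ throughout, after which the primal--dual/BVP arguments (which rest on Proposition~\ref{OPD_principle}) apply verbatim on the range $[\ubar{p},\bar{p}]$.

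Next I would solve $\textsf{BVP}(\omega,\alpha)$ in closed form. Plugging $f'\equiv q$ and $h'\equiv1$ into \eqref{BVP_necessary_varphi}, the ODE collapses to the linear equation $\varphi'(y)=\alpha\big(\varphi(y)-q\big)$ with $\varphi(\omega)=\ubar{p}$, whose unique (and automatically strictly increasing) solution is $\varphi(y)=(\ubar{p}-q)\exp\!\big(\alpha(y-\omega)\big)+q$. Thus the terminal condition $\varphi(\bar{\rho})=\varphi(1)\geq\bar{p}$ is equivalent to $\alpha(1-\omega)\geq\ln\!\big(\tfrac{\bar{p}-q}{\ubar{p}-q}\big)$. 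Meanwhile the initial inequality \eqref{flat_sufficiency_omega}, using $F_{\ubar{p}}(\omega)=(\ubar{p}-q)\omega$ and $h(\ubar{p})=\ubar{p}-q$, reduces to $\omega\geq\tfrac1\alpha$ together with $0\leq\omega\leq\ubar{\rho}=1$. By the optimal-design principle of Section~\ref{section_necessary_condition}, a ratio $\alpha$ is achievable precisely when some admissible $\omega$ satisfies both inequalities; since the terminal inequality is relaxed by decreasing $\omega$, the extremal choice is $\omega=1/\alpha$, and then both hold iff $\alpha-1\geq\ln\!\big(\tfrac{\bar{p}-q}{\ubar{p}-q}\big)$. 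Hence the minimum achievable ratio is $\alpha_*(\mathcal{S})=1+\ln\!\big(\tfrac{\bar{p}-q}{\ubar{p}-q}\big)$, attained only at $\omega_*=1/\alpha_*(\mathcal{S})$; note $\alpha_*(\mathcal{S})\geq1$ because $\bar{p}\geq\ubar{p}$, so $\omega_*\in(0,1]$.

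Then I would assemble $\phi_*$ and check both directions. Substituting $\omega=\omega_*=1/\alpha_*(\mathcal{S})$ into the solved $\varphi$ gives $\varphi_*(y)=(\ubar{p}-q)\exp\!\big(\alpha_*(\mathcal{S})\,y-1\big)+q=(\ubar{p}-q)\exp\!\big(y/\omega_*-1\big)+q$, matching \eqref{optimal_pricing_function_log}, and one verifies $\varphi_*(\omega_*)=\ubar{p}$ and $\varphi_*(1)=(\ubar{p}-q)e^{\alpha_*(\mathcal{S})-1}+q=\bar{p}$. For the upper bound, $\phi_*$ has exactly the three-segment form \eqref{sufficient_phi} with $\bar{\rho}=1$, $\omega_*$ meets \eqref{flat_sufficiency_omega} (with equality), and $\varphi_*$ meets \eqref{ODI_principle_sufficiency} (with equality), so the (nice-setup version of) Theorem~\ref{sufficiency} gives that $\textsf{PM}_{\phi_*}$ is IC and $\alpha_*(\mathcal{S})$-competitive. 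For the lower bound, Theorem~\ref{necessity}---whose lower-bound construction, with the substitutions above, only invokes $h'\equiv1$ and $\ubar{\rho}=\bar{\rho}=1$---forces any $\alpha$-competitive online algorithm to admit a strictly-increasing solution of $\textsf{BVP}(\omega,\alpha)$ for some admissible $\omega$, which the computation above shows is impossible for $\alpha<\alpha_*(\mathcal{S})$. Uniqueness of $\phi_*$ then follows from the uniqueness of $\omega_*$ (pinned to $1/\alpha_*(\mathcal{S})$) and of the linear-ODE solution. As a sanity check, $q=0$ recovers the optimal ratio $1+\ln(\bar{p}/\ubar{p})$ of the $0$-$1$ knapsack problem \cite{knapsack2008}.

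The main obstacle is the single place where the hypotheses of the borrowed theorems are not literally met: Theorems~\ref{sufficiency} and \ref{necessity} are stated for \emph{convex} setups, whereas $f(y)=qy$ is only \emph{nice}. I expect the fix to be routine---the sufficiency proof goes through Proposition~\ref{OPD_principle} using only $h$, $h'$ and $\ubar{\rho},\bar{\rho}$, all well-defined here, and the necessity construction never needs strict monotonicity of $f'$ once one replaces $f'^{-1}(\cdot)$ by the value $1$---but it is the step that requires care; alternatively one may bypass it entirely by reducing to the known knapsack lower bound under the reparametrization $v_n/r_n\mapsto v_n/r_n-q$. Everything else is a one-line linear-ODE solve and direct substitution.
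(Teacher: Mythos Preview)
Your proposal is correct and follows essentially the same route as the paper's own proof: specialize the data ($h(p)=p-q$, $h'\equiv 1$, $\ubar{\rho}=\bar{\rho}=1$), reduce $\textsf{BVP}(\omega,\alpha)$ to the linear ODE $\varphi'=\alpha(\varphi-q)$, solve it in closed form, and then optimize the two constraints $\omega\geq 1/\alpha$ and $\alpha(1-\omega)\geq\ln\!\big(\tfrac{\bar{p}-q}{\ubar{p}-q}\big)$ to pin down $\alpha_*$ and $\omega_*$. The only cosmetic differences are that the paper treats $q=0$ and $q>0$ in two separate (but identical) computations while you handle them uniformly, and that you are more explicit than the paper about the fact that the linear cost is merely a \emph{nice} (not strictly-convex) setup and why the sufficiency/necessity arguments still go through with the substitutions $h'\equiv 1$ and $\ubar{\rho}=\bar{\rho}=1$.
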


We note that the logarithmic competitive ratios in Corollary \ref{major_results_log} are not new and have been discussed  in the literature, e.g., \cite{ knapsack2008}. Based on \cite{knapsack2008}, such logarithmic competitive ratios are optimal, unless extra assumptions are introduced. Here, based on our above sufficient and necessary conditions for convex setups, we provide new proofs, which are simple and intuitive,  for the  results in Corollary \ref{major_results_log}. The details about the proof are deferred to Appendix \ref{proof_of_major_results_appendix}. 

Summarizing our results in Theorem \ref{major_results} and Corollary \ref{major_results_log}, we argue that $ \textsf{PM}_\phi $ is a unified mechanism for online resource allocation with or without supply costs. In particular, we obtain optimal competitive ratios for different setups including zero cost (i.e, without supply cost, which corresponds to $ q=0 $ in Corollary \ref{major_results_log}), linear cost, and strictly-convex cost.

\section{Interpretation, Intuition and Generalization of Theorem \ref{sufficiency}}
\label{proof_of_main_results_sufficiency}
In this section, we give a geometric interpretation of Theorem \ref{sufficiency} and discuss the intuitions of our geometric interpretation via worst-case analysis. We also show that the sufficient conditions in Theorem \ref{sufficiency} can be generalized in multiple directions.

\subsection{A Geometric Interpretation of Theorem \ref{sufficiency}}
\label{geometric_analysis}
Since $ \phi(\bar{\rho}) = \varphi(\bar{\rho})\geq \bar{p}  $, the highest-possible resource utilization level under $ \textsf{PM}_{\phi} $ is $ \phi^{-1}(\bar{p}) $.  
Let us denote the final resource utilization level under $ \textsf{PM}_\phi $ by $ \rho\in [0,\phi^{-1}(\bar{p})] $. Intuitively, if $ \rho\in [0,\omega] $, then the decisions made by $ \textsf{PM}_\phi $ and its offline counterpart are the same as long as $ \phi $ has the flat-segment $ [0,\omega] $, namely, both are to satisfy all the requests. Hence, if $ \rho \in [0,\omega] $, the competitive ratio of $ \textsf{PM}_\phi $ is 1. We next focus on the more general case when  $ \rho\in [\omega,\phi^{-1}(\bar{p})] $. 

For any $ \rho \in [\omega,\phi^{-1}(\bar{p})] $, let us denote the final price by $ p = \phi(\rho) $. The pricing function $ \phi(y) $ satisfying Eq. \eqref{ODI_principle_sufficiency}   indicates that for any given $ \rho\in [\omega,\phi^{-1}(\bar{p})]  $, we have
\begin{equation}\label{integral_version}
\int_\omega^\rho \left(\phi(y) - f'(y)\right)dy \geq  \int_{\ubar{p}}^p \frac{1}{\alpha}  h'\left(\phi(y)\right) d\phi(y).
\end{equation}
Meanwhile, based on Eq. \eqref{flat_sufficiency_omega}, we have 
\begin{equation}\label{initial_version}
F_{\ubar{p}}(\omega) = \ubar{p}\omega - f(\omega)\geq \frac{1}{\alpha} h(\ubar{p}),
\end{equation}
where $ F_{\ubar{p}}(\omega) $ is the profit function defined by Eq. \eqref{profit_function}.
The combination of Eq. \eqref{integral_version} and Eq. \eqref{initial_version} leads to the following condition
\begin{equation}\label{integral_version_expanding}
\ubar{p}\omega+\int_{\omega}^{\rho}\phi(y) dy - f\left(\rho\right) \geq  \frac{1}{\alpha} h(\phi(\rho)).
\end{equation}
Eq. \eqref{integral_version_expanding} must hold for all possible $ \rho $'s, i.e., all $ \rho\in [\omega,\phi^{-1}(\bar{p})] $,  and thus we have the following expression of $ \alpha $:
\begin{equation}\label{alpha_larger_than_a_ratio}
\alpha = \max_{\rho\in [\omega,\phi^{-1}(\bar{p})]}\  \frac{h(\phi(\rho))}{\ubar{p}\omega+\int_{\omega}^{\rho}\phi(y) dy - f\left(\rho\right)}.
\end{equation}

\begin{figure}
	\centering
	\subfigure[$ \rho\in (\omega,u{]} $]{\includegraphics[width=5 cm]{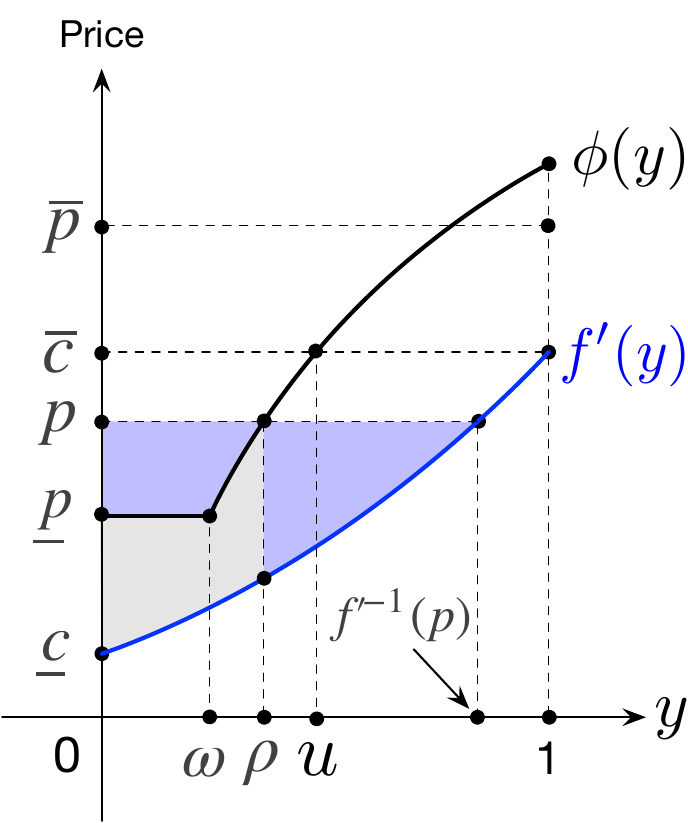}}
	\qquad
	\qquad 
	\subfigure[$ \rho\in (u,1{]} $]{\includegraphics[width= 5 cm]{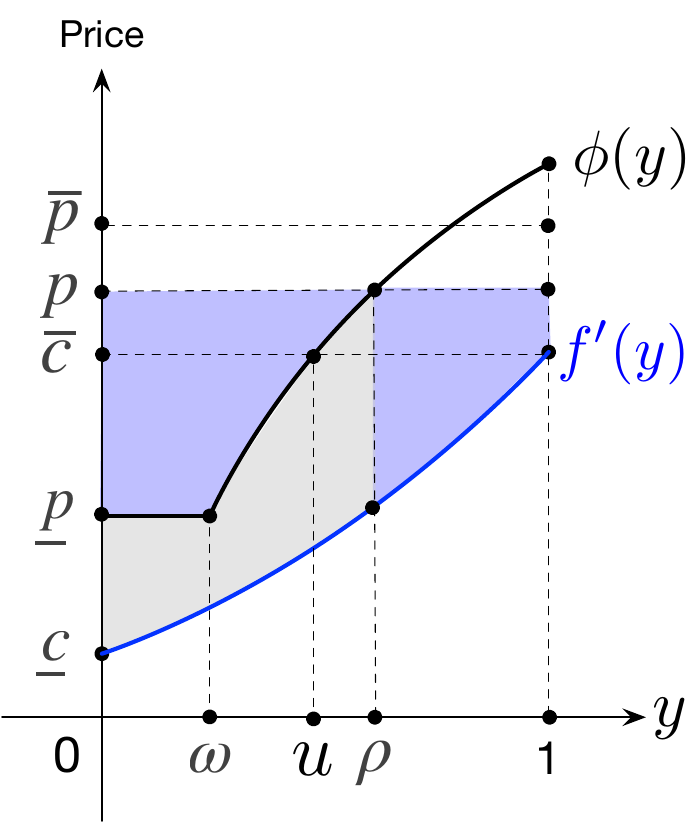}}
	\caption{Illustration of a feasible pricing function $ \phi $. The parameter $ u $ is a resource utilization level such that $ \phi(u) = \bar{c} $.}
	\label{fig_rationality_sufficiency}
\end{figure}

To illustrate the geometric meaning of Eq. \eqref{alpha_larger_than_a_ratio}, let us assume $ \bar{p}\geq \bar{c} $ for simplicity.  
We illustrate the geometric interpretation of Eq. \eqref{alpha_larger_than_a_ratio} in Fig. \ref{fig_rationality_sufficiency}.  Specifically,  Fig. \ref{fig_rationality_sufficiency}(a) shows the case when $ p = \phi(\rho)\leq \bar{c} $, i.e., $ \rho\in [\omega,u] $, where $ u $ is a resource utilization level such that $ \phi(u) = \bar{c} $ or equivalently, $ u  = \phi^{-1}(\bar{c}) $; Fig. \ref{fig_rationality_sufficiency}(b) shows the case when $ p = \phi(\rho)> \bar{c} $, i.e., $ \rho\in [u,1] $. Based on the expression of $ h(p) $ in Eq. \eqref{dual_bar_f_star}, for both subfigures in Fig. \ref{fig_rationality_sufficiency}, the area of the grey region and the two blue regions (which is referred to as `\textsf{\textbf{grey+blue}}') represents the numerator of the fraction in Eq. \eqref{alpha_larger_than_a_ratio}, while the denominator is the area of the grey region only (which is referred to as `\textsf{\textbf{grey}}'). Therefore, based on Eq. \eqref{alpha_larger_than_a_ratio}, an $ \alpha $-competitive $ \textsf{PM}_\phi $ is equivalent to a pricing function $ \phi $ so that the ratio between `\textsf{\textbf{grey+blue}}' and  `\textsf{\textbf{grey}}' is less than or equal to $ \alpha $ for all possible $ \rho$'s in $[\omega,\phi^{-1}(\bar{p})] $. 

In summary, given a pricing function $ \phi $ that satisfies the sufficient conditions in Theorem \ref{sufficiency}, when the final resource utilization level $ \rho\in [0,\omega] $, the competitive ratio of $ \textsf{PM}_\phi $ is 1; when $ \rho\in [\omega,\phi^{-1}(\bar{p}) ] $, the competitive ratio of $ \textsf{PM}_\phi $ equals the maximum ratio between `\textsf{\textbf{grey+blue}}' and   `\textsf{\textbf{grey}}' for all $ \rho\in [\omega,\phi^{-1}(\bar{p}) ] $. Since there is no prior information of $ \rho $,  the competitive ratio of $ \textsf{PM}_\phi $ is dominated by the latter case, leading to the expression of $ \alpha $ in Eq. \eqref{alpha_larger_than_a_ratio}.

\subsection{Intuitions via  Worst-Case Analysis}
\label{section_worst_case}
Our above analysis only visualizes the geometric meaning of Eq. \eqref{alpha_larger_than_a_ratio}, but reveals little intuition and rationality about why the maximum ratio between `\textsf{\textbf{grey+blue}}' and  `\textsf{\textbf{grey}}' determines the competitive ratio of $ \textsf{PM}_\phi $. Below we show that,  Eq. \eqref{alpha_larger_than_a_ratio} can be traced back to the original definition of $ \alpha $ in Eq. \eqref{definition_of_alpha}, based on which the rationality of Theorem \ref{sufficiency} can be demonstrated by the performance of $ \textsf{PM}_\phi $ under a special arrival instance in the worst-case scenario. 

To be more specific, let us again focus on the case when the final resource utilization level under $ \textsf{PM}_\phi $ is $ \rho\in [\omega,\bar{\rho}] $. For any given $ \rho\in [\omega,\bar{\rho}] $, let us assume the interval  $[0,\rho] $ is discretized into $ K+B $ blocks with each block $ \Delta $ long, i.e., $ \Delta = \rho/(K+B) $, where we assume $ \Delta $ is infinitesimally small. Let us consider a special arrival instance, denoted by $ \mathcal{A}_\rho $, which consists of three groups of agents as follows: \textit{the first group of agents are similar to our construction of $ \mathcal{A}_{\ubar{p}} $ in Section \ref{section_necessary_condition}, where we have $ K $ identical agents with valuation density $ \ubar{p} $ and requirement $ \Delta $. Here, we assume $ K\Delta =  \omega $. The second group of agents are indexed by $ b\in \{1,2,\cdots,B\} $, and the third group of  agents are also identical and are indexed by $ i $, where $ i = \{1,2,\cdots, I\} $ and $ I \geq 1/\Delta $.  For each request $b\in \{1,2,\cdots,B\} $ in the second group, we assume the valuation of agent $ b $ is given by $ v_b = \phi(\omega+b\Delta)\cdot \Delta $, where $ \phi(\omega+b\Delta) $ and $ \Delta $ denote the valuation density of agent $ b $ and the resource requirement of her request, respectively. Similarly, for each request $ i\in \{1,2,\cdots, I\} $ in the third group, we assume the valuation of agent $ i $ is given by $ v_i = \phi(\rho)\Delta$, where $ \phi(\rho) $ and $ \Delta $ denote the valuation density of agent $ i $ and the requirement of her request, respectively.} 

Given an arrival instance $ \mathcal{A}_\rho $ with $ \rho\in [\omega,\bar{\rho}] $, let us denote $ p = \phi(\rho) $. The optimal social welfare in hindsight, denoted by $ S_{\textsf{offline}}(\mathcal{A}_\rho) $, is to reject all the requests in the first two groups but satisfy all the requests in the third group until reaching the resource utilization level  $ y $, at which the marginal cost equals either $ p $ (i.e., $ y =  f'^{-1}(p) $) or $ \ubar{c} $ (i.e., $ y = 1 $).  Thus, $ S_{\textsf{offline}}(\mathcal{A}_\rho) $ is given by
\begin{equation}\label{S_offline_A_rho}
\begin{aligned} S_{\textsf{offline}}(\mathcal{A}_\rho) 
=
\begin{cases}
p f'^{-1}(p) - f\big(f'^{-1}(p)\big) &\text{if } p\in (\ubar{c},\bar{c}],\\
p -f(1) &\text{if } p\in (\bar{c},+\infty),
\end{cases}
\end{aligned}
\end{equation}
which equals the numerator of the fraction in Eq. \eqref{alpha_larger_than_a_ratio} based on the expression of $ h(p) $ in Eq. \eqref{def_h_p}. 
Note that  $ I \geq 1/\Delta  $ guarantees that the final resource utilization level $ y =  f'^{-1}(p) $ or $ y = 1 $ can be reached by the offline optimal strategy.

In comparison, given the arrival instance $ \mathcal{A}_\rho $, our online mechanism $ \textsf{PM}_\phi $ will satisfy all the requests in the first two groups and reject all in the third group. Thus, the social welfare is given by
\begin{equation}\label{S_online_A_rho}
S_{\textsf{online}}(\mathcal{A}_\rho) =  \ubar{p} K\Delta + \sum_{b=1}^{B} \phi(\omega+b\Delta)\cdot\Delta - f\left(\rho\right),
\end{equation}
which equals  the denominator of the fraction in Eq. \eqref{alpha_larger_than_a_ratio} since $ K\Delta = \omega $  and $ \Delta $ is infinitesimally small. 

Based on \eqref{alpha_larger_than_a_ratio}-\eqref{S_online_A_rho}, the competitive ratio of $ \textsf{PM}_\phi $ is given by
\begin{equation*}
\alpha = \max_{\rho\in [\omega,\bar{\rho}]}\  \frac{S_{\textsf{offline}}(\mathcal{A}_\rho) }{S_{\textsf{online}}(\mathcal{A}_\rho)} =   \max_{\normalfont\text{all possible }\mathcal{A}} \frac{S_{\textsf{offline}}(\mathcal{A})}{S_{\textsf{online}}(\mathcal{A})},
\end{equation*}
where the second equality is from  the original definition of $ \alpha $ in Eq. \eqref{definition_of_alpha}. 
It becomes clear now that the sufficient conditions in Theorem \ref{sufficiency} lead to the expression of $ \alpha $ in Eq. \eqref{alpha_larger_than_a_ratio}, which in principle captures the worst-case performance ratio between the optimal strategy in hindsight and our proposed online mechanism $ \textsf{PM}_\phi $ under a continuum of special arrival instances $ \{\mathcal{A}_\rho\}_{ \{\forall \rho\in [\omega,\bar{\rho}]\}} $. In this regard, the arrival instance $ \mathcal{A}_\rho $ is not constructed randomly; it indeed  captures the worst-case scenario in the context. 

\subsection{Generalizations of Theorem \ref{sufficiency}}
\label{section_generalization}

\subsubsection{General Increasing Pricing Functions}
Based on Eq. \eqref{alpha_larger_than_a_ratio}, we can generalize Theorem \ref{sufficiency} to the following Corollary \ref{main_result_any} with a broader class of competitive pricing functions.

\begin{corollary}\label{main_result_any}
Given a convex setup $ \mathcal{S} $, if   $ \phi $ is given by
\begin{align}\label{pricing_function_any}
	\phi(y) = 
	\begin{cases}
		\ubar{p} &\text{if } y\in [0,\omega),\\
		\varphi(y)  &\text{if } y\in [\omega, 1],\\
		+\infty &\text{if } y\in (1,+\infty),
	\end{cases}
\end{align}
where $ \omega \in (0,\ubar{\rho}] $, and $ \varphi $ is  increasing in $ [\omega,1] $ with $ \varphi(\omega) = \ubar{p} $ and $ \varphi(1)\geq \bar{c} $, then $ \textsf{PM}_\phi $ is IC and $ \alpha(\omega,\varphi) $-competitive, where $ \alpha(\omega,\varphi) $ is
\begin{align}
	\alpha(\omega,\varphi) = \max\Bigg\{\frac{h\big(\ubar{p}\big)}{F_{\ubar{p}}(\omega)},  \frac{h(\bar{p})}{\ubar{p}\omega+\int_{\omega}^{\bar{\rho}_\varphi}\varphi(y) dy - f\left(\bar{\rho}_\varphi\right)},  \max_{\rho\in [\omega,\bar{\rho}_{\varphi}]}\  \frac{h\big(\varphi(\rho)\big)}{\ubar{p}\omega+\int_{\omega}^{\rho}\varphi(y) dy - f\left(\rho\right)} \Bigg\}.\label{alpha_S}
\end{align}
In Eq. \eqref{alpha_S},  $ \bar{\rho}_{\varphi} $ is the maximum resource utilization level defined as:
\begin{equation*}
	\begin{aligned}
		\bar{\rho}_{\varphi} \triangleq 
		\begin{cases}
			\varphi^{-1}(\bar{p})  &\text{if } \bar{p} \leq \varphi(1),\\
			1  &\text{if } \bar{p} > \varphi(1).
		\end{cases}
	\end{aligned} 
\end{equation*}
\end{corollary}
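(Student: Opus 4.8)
The plan is to establish Corollary \ref{main_result_any} as a refinement of Theorem \ref{sufficiency}, reusing the primal-dual machinery from Proposition \ref{OPD_principle} together with the geometric identity in Eq. \eqref{alpha_larger_than_a_ratio}. First I would verify incentive compatibility: since $\phi$ in Eq. \eqref{pricing_function_any} is monotonically non-decreasing (constant $\ubar{p}$, then increasing $\varphi$, then $+\infty$), $\textsf{PM}_\phi$ is a posted-price mechanism with prices non-decreasing in the resource utilization, and the IC property follows exactly as in the proof of Theorem \ref{sufficiency} (each agent faces a take-it-or-leave-it price independent of her report, and Proposition \ref{feasibility} guarantees dual feasibility of $\mathcal{V}_D$ because $\phi$ is monotone). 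So the IC claim requires no new work.

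For the competitive ratio, I would split the analysis according to the final resource utilization level $\rho$ under $\textsf{PM}_\phi$, noting that the highest attainable level is $\bar{\rho}_\varphi = \min\{\varphi^{-1}(\bar{p}),1\}$ because once $\phi(y)$ exceeds $\bar{p}$ no further request can be accepted, and once $y=1$ the capacity is exhausted. If $\rho \in [0,\omega]$, all requests are accepted both online and offline (valuation densities are $\geq \ubar{p} = \phi(y)$ on $[0,\omega]$), so the ratio is $1$. For $\rho \in [\omega,\bar{\rho}_\varphi]$, I would follow the worst-case construction of Section \ref{section_worst_case}: build the arrival instance $\mathcal{A}_\rho$ with $K$ agents of density $\ubar{p}$ (filling $[0,\omega]$), the graded second group with densities $\varphi(\omega+b\Delta)$, and a large third group of density $\phi(\rho)$. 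Then $S_{\textsf{online}}(\mathcal{A}_\rho)$ equals the denominator $\ubar{p}\omega+\int_\omega^\rho \varphi(y)\,dy - f(\rho)$ and $S_{\textsf{offline}}(\mathcal{A}_\rho) = h(\phi(\rho))$ (using Eq. \eqref{S_offline_A_rho} and the definition of $h$), which reproduces the third term in the max of Eq. \eqref{alpha_S}. The second term is the special sub-case $\rho=\bar{\rho}_\varphi$ with the third group having density exactly $\bar{p}$ (so the offline optimum supplies up to $\bar{\rho}$ against price $\bar{p}$, giving numerator $h(\bar{p})$), which must be listed separately because $\varphi(\bar{\rho}_\varphi)$ need not reach $\bar{p}$ when $\bar{\rho}_\varphi = 1 < \varphi^{-1}(\bar{p})$. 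The first term $h(\ubar{p})/F_{\ubar{p}}(\omega)$ is the boundary case $\rho=\omega$ against a third group of density $\ubar{p}$.

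The crux is to show the converse direction needed for an exact (not merely upper-bound) value of $\alpha(\omega,\varphi)$: that for \emph{every} arrival instance $\mathcal{A}$, not just the constructed family $\{\mathcal{A}_\rho\}$, one has $S_{\textsf{offline}}(\mathcal{A})/S_{\textsf{online}}(\mathcal{A}) \leq \alpha(\omega,\varphi)$. For this I would invoke Proposition \ref{OPD_principle} with the primal-dual pair $(\mathcal{V}_P,\mathcal{V}_D)$: set $k$ to be the index after which the utilization first exceeds $\omega$, verify the initial inequality $P_k \geq \frac{1}{\alpha}D_k$ from $F_{\ubar{p}}(\omega)\geq\frac{1}{\alpha}h(\ubar{p})$ (first term of the max), and verify the incremental inequality $P_n - P_{n-1}\geq\frac{1}{\alpha}(D_n-D_{n-1})$ for each later accepted agent by a pointwise argument on the increment of the running utilization, which after summation becomes precisely the integral inequality $\ubar{p}\omega+\int_\omega^\rho \varphi(y)\,dy - f(\rho) \geq \frac{1}{\alpha}h(\varphi(\rho))$ — and this is exactly what the remaining two terms of the max in Eq. \eqref{alpha_S} enforce (the $h(\bar p)$ term covering the endpoint $\rho = \bar\rho_\varphi$, the supremum term covering all interior $\rho$).

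I expect the main obstacle to be the bookkeeping around $\bar{\rho}_\varphi$ when $\bar{p} > \varphi(1)$, i.e., when the increasing segment $\varphi$ does not by itself price out all agents before capacity is hit: here one must argue that the offline benchmark against which $\textsf{PM}_\phi$ is compared is still captured by $h(\bar p)$ (agents of density up to $\bar p$ may arrive, and offline would serve them up to $\bar\rho = 1$), while the online welfare is $\ubar p\omega + \int_\omega^1 \varphi(y)\,dy - f(1)$; checking that no other configuration of mixed densities yields a worse ratio than the three listed terms requires a short convexity/monotonicity argument on the map $\rho \mapsto h(\varphi(\rho))/(\ubar{p}\omega+\int_\omega^\rho\varphi - f(\rho))$, essentially that its maximum over $[\omega,\bar\rho_\varphi]$ together with the two endpoint terms dominates every achievable ratio. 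Everything else is a direct reprise of the arguments already given for Theorem \ref{sufficiency} and in Section \ref{geometric_analysis}.
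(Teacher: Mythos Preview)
Your proposal is correct and follows essentially the same route as the paper's own proof: a case split on the final utilization level $\rho$, with the three terms of $\alpha(\omega,\varphi)$ arising respectively from the $\rho=\omega$ boundary (all-$\ubar{p}$ instance), the endpoint $\rho=\bar{\rho}_\varphi$ against density-$\bar{p}$ arrivals, and the interior range $\rho\in[\omega,\bar{\rho}_\varphi)$ via the worst-case family $\{\mathcal{A}_\rho\}$ of Section \ref{section_worst_case}. If anything you are more explicit than the paper, which simply invokes Theorem \ref{sufficiency} and Eq.~\eqref{alpha_larger_than_a_ratio} and lists the three cases; your plan to close the upper bound by re-running Proposition \ref{OPD_principle} (initial inequality from $F_{\ubar{p}}(\omega)\geq h(\ubar{p})/\alpha$, incremental inequality from the integral condition) is exactly what underlies that invocation, and your discussion of the $\bar{p}>\varphi(1)$ edge case is a useful clarification the paper leaves implicit.
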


Corollary \ref{main_result_any} argues that any continuous pricing function with a flat-segment and an increasing-segment can lead to a bounded competitive ratio for $ \textsf{PM}_\phi $.  Recall that when $ \omega = 0 $, our previous necessary conditions in Theorem \ref{necessity} argue that it is impossible to achieve a bounded competitive ratio. This can be illustrated by Eq. \eqref{alpha_S}, where the first term in the bracket is unbounded since $ F_{\ubar{p}}(0) = 0 $. 
Note that the third term in Eq. \eqref{alpha_S} is derived from Eq. \eqref{alpha_larger_than_a_ratio}. Meanwhile, when $ \varphi(1)\geq \bar{p} $, the second term in the bracket of Eq. \eqref{alpha_S} is not needed  as it is contained by the third term.  The proof of Corollary \ref{main_result_any} is given in Appendix \ref{proof_of_main_result_any}.

\begin{remark}
	Based on Theorem \ref{major_results} and Corollary \ref{main_result_any}, we argue that $ \alpha_*(\mathcal{S}) $ is the optimal solution to the following optimization problem:
	\begin{alignat*}{3}
	\alpha_*(\mathcal{S}) =\ & \underset{\omega,\varphi}{\normalfont\textsf{minimize}}\qquad   & &  \alpha(\omega,\varphi) \\
	&\normalfont\textsf{subject to} & & \omega\in [0,\ubar{\rho}], \\
	&  & &  \varphi'(y)\geq 0, \forall y\in [\omega,1],\\
	& & & \varphi(\omega) = \ubar{p}, \varphi(1) \geq \bar{c}.
	\end{alignat*}
	In this regard, our previous design of $ \omega_* $ and  $ \varphi_* $ in Theorem \ref{major_results} essentially provides a method of solving the above  optimization problem in functional spaces.
\end{remark}

\subsubsection{Pricing Functions for Multiple Time Slots}\label{extension_multiple_resources}
Our mechanism also extends to more general settings when there exists multiple time slots. Specifically, let us consider the following model:
\begin{subequations}\label{SWM_multiple_extension_paper}
	\begin{alignat}{3}
	& \underset{\bm{x,y}}{\textsf{maximize}}\qquad       & &  \sum_{n\in\mathcal{N}} v_n x_n - \sum_{t\in\mathcal{T}} \bar{f}_t(y_t) \label{objective_multiple_paper}\\
	& \textsf{subject to} & & \sum_{n\in\mathcal{N}}   r_n^t x_n = y_t, \forall t\in\mathcal{T}, \label{total_load_multiple_paper} \\
	& & & x_n\in  \{0,1\},\forall n\in\mathcal{N}.
	\end{alignat}
\end{subequations}
In Problem \eqref{SWM_multiple_extension_paper}, we consider a discrete time system and index different time slots by $ t\in\mathcal{T} \triangleq \{1,2,\cdots, T\} $. In Eq. \eqref{objective_multiple_paper}, the supply cost at $ t\in\mathcal{T} $ is directly written as an extended cost function $\bar{f}_t$, i.e.,  $ \bar{f}_t(y) = 
f_t(y) $ if $ y\in [0,1] $, and 
$ \bar{f}_t(y) = +\infty $ if $ y\in (1,+\infty) $. In Eq. \eqref{total_load_multiple_paper}, $ r_n^t $ denotes the requirement of agent $ n $ at $ t\in\mathcal{T}_n$, where $ \mathcal{T}_n $ represents the time duration of request $ n $. 
We assume $ r_n^t = 0 $ if $ t\notin \mathcal{T}_n $ and  $ r_n^t > 0 $ if $ t\in \mathcal{T}_n $ so that we can simply denote the resource requirement of request $ n $ by $ \{r_n^t\}_{\forall t\in\mathcal{T}} $.  
Meanwhile, similar to our previous basic resource allocation model, each agent $ n\in\mathcal{N} $ can be represented by 
$ \bm{\theta}_n = \big(\{r_n^t\}_{\forall t\in\mathcal{T}_n},v_n\big) $, 
where  $ v_n $ denotes the valuation of agent $ n $ if all the requirements $ \{r_n^t\}_{\forall t\in\mathcal{T}_n} $ are satisfied.

While most of our previous definitions and notations can be reused by simply adding a time index $ t\in\mathcal{T} $, e.g., $ h_t(p), \ubar{c}_t $, and $ \bar{c}_t $, we redefine the lower and upper bounds of the valuation density by
\begin{equation*}
\min_{n\in \mathcal{N}:r_n^t \neq 0} \Big\{\frac{v_n}{r_n^t}\Big\} \geq \ubar{p}_t, \max_{n\in\mathcal{N}:r_n^t \neq 0} \Big\{\frac{v_n}{r_n^t}\Big\} \leq \bar{p}_t, \forall t\in\mathcal{T},
\end{equation*}
where $ \ubar{p}_t $ and $ \bar{p}_t $ correspond to $ \ubar{p} $ and $ \bar{p} $ in Assumption \ref{Assumption_PUV}, respectively. Based on $ \ubar{p}_t $ and $ \bar{p}_t $, we can define  $\ubar{\rho}_t$ and $\bar{\rho}_t$ in the similar way as $\ubar{\rho} $ and $\bar{\rho}$ in Eq. \eqref{def_of_rho}. Here we omit the details for brevity.

Following the principle discussed in Section \ref{section_PM}, we can design a  pricing function  $ \hat{p}_t^{(n)} = \phi_t(y_t^{(n)}) $ for each time slot $ t\in\mathcal{T} $, where $ y_t^{(n)} $ denotes the total resource utilization after processing agent $ n $, similar to our definition of $ \hat{y}_n $ in Eq. \eqref{update_of_omega}. Based on the pricing functions $ \{\phi_t\}_{\forall t} $, we can set the utility of agent $ n\in\mathcal{N} $ by
\begin{equation*}
\hat{\gamma}_n = \max\Big\{v_n - \sum\nolimits_{t\in\mathcal{T}_n} r_n^t \cdot \phi_t\big(y_t^{(n-1)}\big), 0\Big\}, \forall n\in\mathcal{N},
\end{equation*}
where $ \sum_{t\in\mathcal{T}_n} r_n^t \cdot \phi_t(y_t^{(n-1)})  $ denotes the payment of agent $ n $ if $ \hat{\gamma}_n > 0 $. 
The mechanism $ \textsf{PM}_\phi $ can thus be generalized to a posted price mechanism $ \textsf{PM}_{\bm{\phi}} $ with a vector of pricing functions $ \bm{\phi} \triangleq \{\phi_t\}_{\forall t} $. 

Based on the above discussions, we give a generalized version of Theorem \ref{sufficiency} in the following Theorem \ref{sufficiency_multiple_paper}.

\begin{theorem}\label{sufficiency_multiple_paper}
Given a convex setup $ \mathcal{S} = \{f_t,\ubar{p}_t,\bar{p}_t\}_{\forall t} $, $ \textsf{PM}_{\bm{\phi}} $ is IC and $ \max_t\{\alpha_t\} $-competitive  if for each $ t\in\mathcal{T} $,  $ \phi_t $ is given by
\begin{equation}\label{sufficient_phi_multiple_paper}
	\begin{aligned}
		\phi_t(y) = 
		\begin{cases}
			\ubar{p}_t &\text{if } y\in [0,\omega_t),\\
			\varphi_t(y)  &\text{if } y\in [\omega_t, 1],\\
			+\infty &\text{if } y\in (1,+\infty),
		\end{cases}
	\end{aligned}
\end{equation}
where $ \omega_t $ is the critical threshold that satisfies
\begin{equation}\label{flat_sufficiency_omega_multiple_paper}
	F_{\ubar{p}_t}(\omega_t)  \geq    \frac{1}{\alpha_t} h_t(\ubar{p}_t) \text{ and }
	0\leq \omega_t \leq \ubar{\rho}_t,
\end{equation} 
and $ \varphi_t(y) $ is  an increasing function that satisfies
\begin{equation}\label{ODI_principle_sufficiency_multiple_paper}
	\begin{aligned}
		\begin{cases}
			\varphi_t'(y) \leq   \alpha_t \cdot\frac{\varphi_t(y) - f_t'(y)}{h'\left(\varphi_t(y)\right)}, y\in (\omega_t,1)\\
			\varphi_t(\omega_t) = \ubar{p}_t, \varphi_t(1)\geq \bar{p}_t + \sum_{\hat{t}\in \mathcal{T}\backslash\{t\}}h_{\hat{t}}\big(\bar{p}_{\hat{t}}\big).
		\end{cases}
	\end{aligned}
\end{equation} 
\end{theorem}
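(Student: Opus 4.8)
\textbf{Proof proposal for Theorem \ref{sufficiency_multiple_paper}.}

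The plan is to reduce the multi-slot statement to the single-slot Theorem \ref{sufficiency} applied independently in each time slot, with the only genuinely new ingredient being the bookkeeping of the cross-slot cost terms $\sum_{\hat t \in \mathcal{T}\backslash\{t\}} h_{\hat t}(\bar p_{\hat t})$ that appear in the right boundary condition of \eqref{ODI_principle_sufficiency_multiple_paper}. First I would set up the primal-dual pair for Problem \eqref{SWM_multiple_extension_paper} exactly as in Section \ref{primal_dual_offline_setting}: relax $x_n\le 1$, introduce multipliers $p_t$ for each constraint \eqref{total_load_multiple_paper} and $\gamma_n$ for the box constraints, and obtain a dual objective of the form $\sum_n \gamma_n + \sum_{t\in\mathcal{T}} h_t(p_t)$ subject to $\gamma_n \ge v_n - \sum_{t\in\mathcal{T}_n} r_n^t p_t$. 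This is routine since the cost separates additively over $t$. I would then verify, as in Proposition \ref{feasibility}, that the online allocation/payment rule driven by $\bm\phi = \{\phi_t\}$ produces feasible primal and dual vectors whenever each $\phi_t$ is monotone non-decreasing: the primal feasibility is immediate, and dual feasibility follows because $\hat\gamma_n = \max\{v_n - \sum_{t\in\mathcal{T}_n} r_n^t \phi_t(y_t^{(n-1)}), 0\}$ and $\phi_t$ non-decreasing give $\hat\gamma_n \ge v_n - \sum_{t\in\mathcal{T}_n} r_n^t \hat p_t^{(N)}$.

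Next I would invoke Proposition \ref{OPD_principle} in the multi-slot form: it suffices to exhibit an index $k$ and show the initial inequality $P_k \ge \frac{1}{\alpha} D_k$ together with the incremental inequalities $P_n - P_{n-1} \ge \frac{1}{\alpha}(D_n - D_{n-1})$ for $n > k$, where now $\alpha = \max_t \alpha_t$. The increments decompose across time slots: the primal increment when agent $n$ is accepted is $v_n - \sum_{t\in\mathcal{T}_n}(f_t(y_t^{(n)}) - f_t(y_t^{(n-1)}))$, and the dual increment is $\hat\gamma_n + \sum_{t\in\mathcal{T}_n}(h_t(\hat p_t^{(n)}) - h_t(\hat p_t^{(n-1)}))$. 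I would argue slot-by-slot that the differential inequality \eqref{ODI_principle_sufficiency_multiple_paper} — which is exactly the single-slot condition \eqref{ODI_principle_sufficiency} with $\bar\rho$ replaced by $1$ — guarantees $r_n^t\big(\phi_t(y_t^{(n)}) - f_t'(\cdot)\big)\cdot\Delta \ge \frac{1}{\alpha_t}\big(h_t(\hat p_t^{(n)}) - h_t(\hat p_t^{(n-1)})\big)$ in the infinitesimal limit (Assumption \ref{Assumption_d_n}), and sum over $t\in\mathcal{T}_n$; together with the ``flat-segment'' part handled through the initial inequality via \eqref{flat_sufficiency_omega_multiple_paper}, this reproduces the single-slot argument of Appendix \ref{proof_sufficiency_alpha_competitive} coordinate-wise. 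Incentive compatibility is inherited directly: $\textsf{PM}_{\bm\phi}$ is still a posted-price mechanism (prices $\phi_t(y_t^{(n-1)})$ are published before agent $n$ reports and do not depend on her report), so the standard take-it-or-leave-it argument gives IC with no modification.

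The main obstacle is the right boundary condition $\varphi_t(1) \ge \bar p_t + \sum_{\hat t \ne t} h_{\hat t}(\bar p_{\hat t})$, which is strictly stronger than the naive $\varphi_t(1)\ge \bar p_t$ one would guess from the single-slot case. The reason this inflation is needed — and the crux of the proof — is that in the worst case the offline optimum can, for an agent spanning slots $\mathcal{T}_n$, extract value $v_n$ up to $\sum_{t} \bar p_t r_n^t$, whereas the online mechanism must compare this against the \emph{marginal} profit it has already banked in \emph{every} slot; to dominate the adversary's per-slot conjugate term $h_{\hat t}(\bar p_{\hat t})$ in slots $\hat t \ne t$ simultaneously, slot $t$'s pricing curve must be able to ``pay for'' those other slots' worst-case profits by itself. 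Concretely, I would run the worst-case construction of Section \ref{section_worst_case} adapted to the multi-slot model: an adversary instance that saturates one slot with low-density agents and then floods the remaining slots, forcing $S_{\textsf{offline}}$ to pick up $\sum_{\hat t}h_{\hat t}(\bar p_{\hat t})$ while $S_{\textsf{online}}$ collects only slot $t$'s integral $\ubar p_t \omega_t + \int_{\omega_t}^{\rho}\varphi_t$; requiring the ratio to stay below $\alpha_t$ is exactly what forces the augmented endpoint condition. I would then check that under \eqref{ODI_principle_sufficiency_multiple_paper} the analogue of the geometric identity \eqref{alpha_larger_than_a_ratio} holds with $h_t(\varphi_t(\rho))$ in the numerator and the slot-$t$ denominator, uniformly over $\rho\in[\omega_t,1]$, yielding the per-slot ratio $\alpha_t$ and hence the overall bound $\max_t\alpha_t$ by weak duality as in \eqref{inequalities_OPD}.
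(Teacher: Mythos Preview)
Your proposal is correct and follows essentially the same route as the paper's proof in Appendix~\ref{section_extension}: set up the separable dual $\sum_n\gamma_n+\sum_t h_t(p_t)$, verify feasibility via monotonicity of each $\phi_t$, decompose the initial and incremental inequalities of Proposition~\ref{OPD_principle} slot-by-slot, and explain the inflated right endpoint through the worst-case instance in which one slot is saturated and the remaining offline value $\sum_{\hat t}h_{\hat t}(\bar p_{\hat t})$ must be covered by slot $t$'s integral alone. One small point worth tightening: in the worst-case construction the follow-up high-density agents must request resource in the \emph{already-saturated} slot $t$ as well as in the other slots $\hat t$ (this is how the online mechanism is forced to reject them while the offline optimum still collects $\sum_{\hat t} h_{\hat t}(\bar p_{\hat t})$); your phrase ``floods the remaining slots'' leaves this implicit, and without it online would happily serve those agents and the gap would not materialize.
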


Note that Eq. \eqref{ODI_principle_sufficiency_multiple_paper} is the same as Eq. \eqref{ODI_principle_sufficiency} except the second boundary condition, which depends on parameters related to all the time slots $ \mathcal{T} $. Our previous analysis regarding the  basic resource allocation model shows that the optimal competitive ratio $ \alpha_*(\mathcal{S}) $ depends on the boundary conditions in Eq. \eqref{BVP_necessary_varphi}. Therefore, in the case with multiple time slots, the final competitive ratio $ \alpha = \max_t\{\alpha_t\} $  depends on $ |\mathcal{T}| $ as well.  For the proof of Theorem \ref{sufficiency_multiple_paper}, as well as discussions of the properties of $ \{\phi_t\}_{\forall t} $ based on Theorem \ref{sufficiency_multiple_paper}, please refer to  Appendix \ref{section_extension}. 

\begin{remark}[Applications in Cloud Computing]
	Problem \eqref{SWM_multiple_extension_paper} can be used to model the online resource allocation in cloud computing \cite{value_based_scheduling, Zhang2017}.  
	A cloud service provider allocates a single type of resources (e.g., CPU) to a set of jobs $ \mathcal{N}=\{1,\cdots,N\} $ that arrive in a sequential order, and each job $ n\in\mathcal{N} $ is active in duration $ \mathcal{T}_n \subset\mathcal{T} $. Based on this interpretation, Problem \eqref{SWM_multiple_extension_paper} can be regarded as maximizing the social welfare of online resource allocation in cloud computing with server costs (e.g., energy costs). 
	In particular, the arrival and departure times of job $ n $ can be taken into account by $ \mathcal{T}_n $, and agents are flexible to set their resource requirements in each time slot based on the length of their active durations.
\end{remark}

We end this section with the following remark regarding the setting with multiple types of resources.

\begin{remark}[Multiple Types of Resources]
	Problem \eqref{SWM_multiple_extension_paper} can also be interpreted in the context of resource allocation with multiple types of resources.  Specifically, we can consider that in each time slot we have a different type of resource so that $ \mathcal{T} $ denotes the set of resource types. Based on this notational system, $ r_n^t $ denotes the resource requirement of agent $ n $ for resource type $ t $, and $ \mathcal{T}_n $ represents the set of resource types required by agent $ n $. For this reason, it is mathematically equivalent to consider multiple types of resources and multiple time slots. In addition, we can also consider the setup with multiple resource types and multiple time slots simultaneously. We argue that our previous design principle still applies to such more general and complex setups. The difference is that we need to design a pricing function for each type of resource at each time slot\footnote{The rationality is that in the worst case  agents may only require a single type of resource at a single time slot.}. 
\end{remark}

\section{Proof of Theorem \ref{major_results}}
\label{proof_of_major_results}
In this section, we provide the proof of Theorem \ref{major_results}. Our proof is organized in three cases based on the relationship between $ \ubar{c},\bar{c} $, $ \ubar{p} $, and $ \bar{p} $. After the proof, we give an example (quadratic supply cost $ f(y) = \frac{1}{2}y^2 $) to show  the calculation of $ \alpha_*(\mathcal{S}) $, the optimal critical threshold $ \omega_* $, and the optimal pricing function $ \phi_* $.

\begin{figure*}
	\centering
	\subfigure[\textbf{Category of the three cases}]{\includegraphics[height = 6cm]{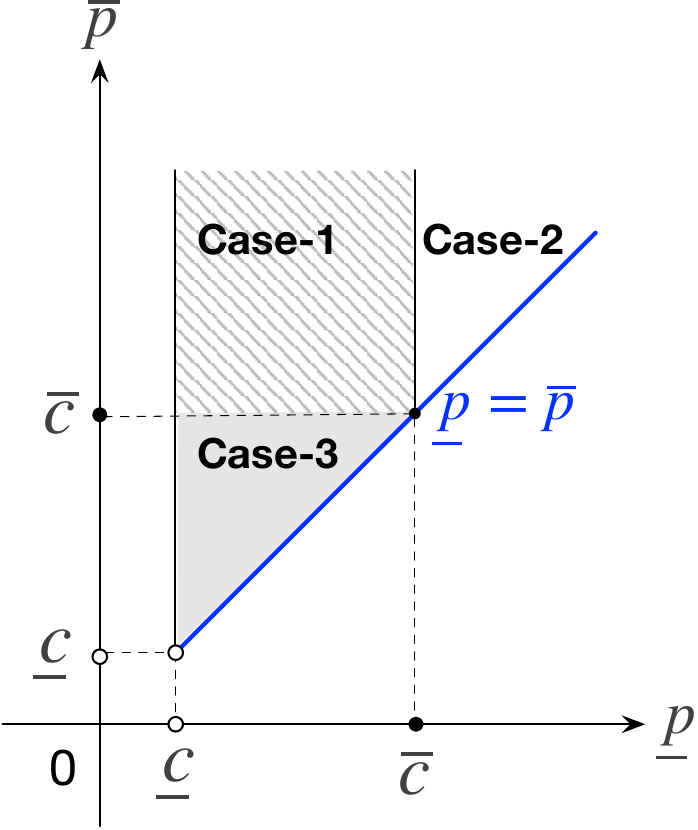}}
	\qquad
	\qquad 
	\subfigure[\textbf{Case-1}: $\ubar{c}<  \ubar{p} <  \bar{c} < \bar{p}$]{\includegraphics[height= 6.2 cm]{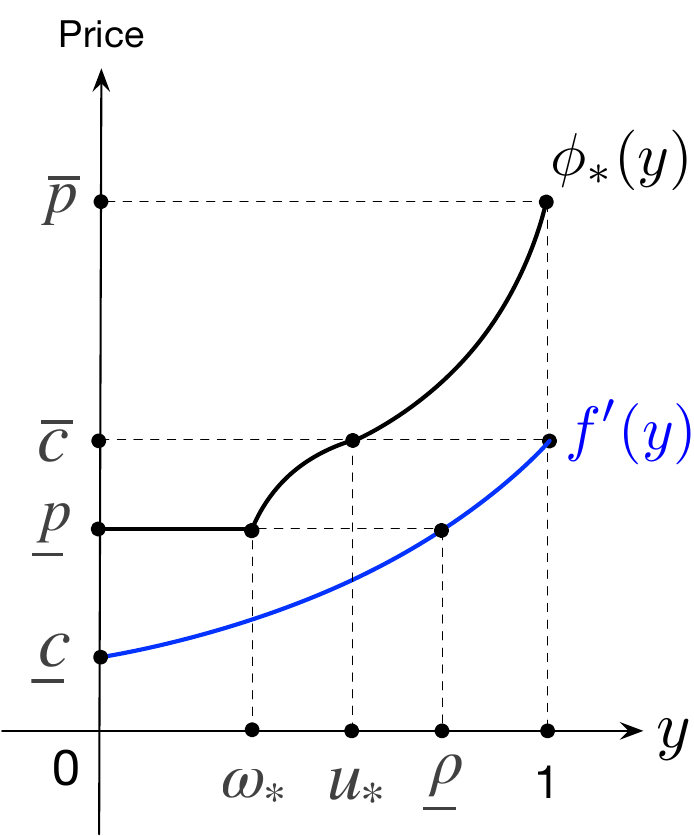}} 
	\\
	\subfigure[\textbf{Case-2}: $\ubar{c}< \bar{c}\leq \ubar{p}\leq \bar{p}$]{\includegraphics[height= 6.2 cm]{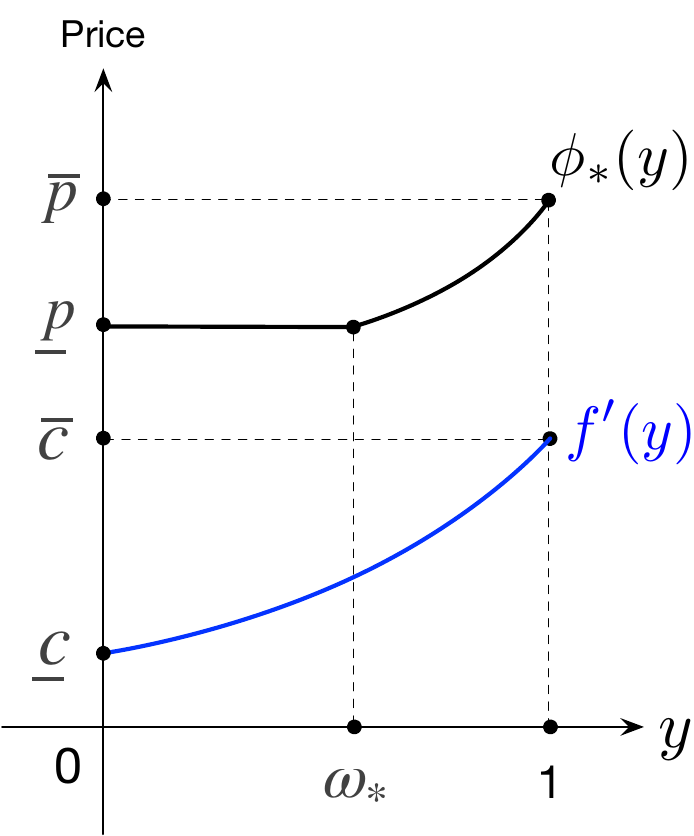}} 
	\qquad\qquad 
	\subfigure[\textbf{Case-3}: $ \ubar{c}<\ubar{p}\leq \bar{p}\leq \bar{c}$]{\includegraphics[height= 6.2 cm]{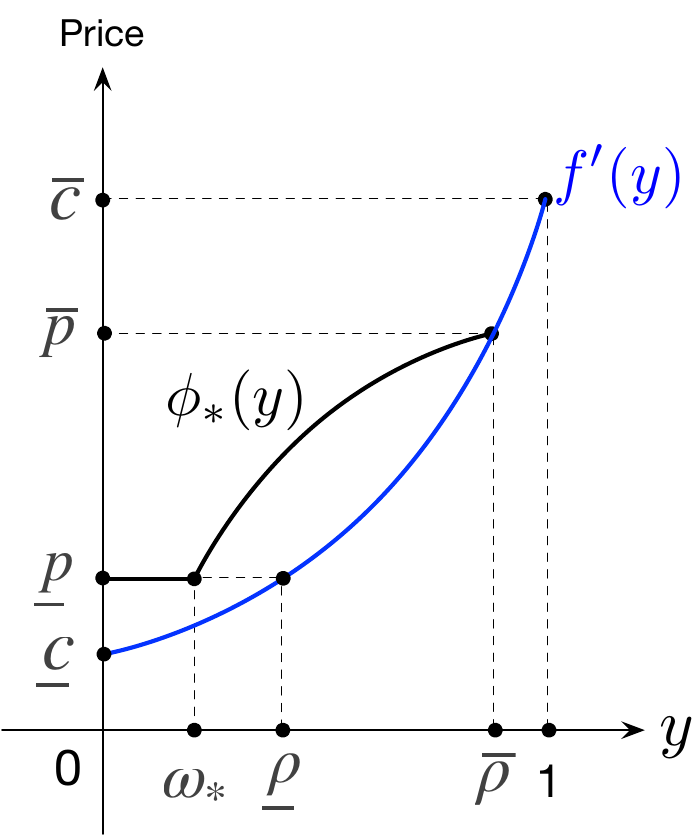}}
	\caption{Category of the three cases for a given setup and illustrations of the optimal pricing functions  in the three cases.}
	\label{three_cases}
\end{figure*}

\subsection{Overview of Our Three-Case Proof}
Our proof heavily follows the sufficient conditions in Theorem \ref{sufficiency} and the necessary conditions in Theorem \ref{necessity}. For the sake of better reference,  here we revisit $ \normalfont\textsf{BVP}(\omega,\alpha)  $ as follows:
\begin{align}\label{BVP_necessary_varphi_revisit}
\normalfont\textsf{BVP}(\omega,\alpha)
\begin{cases}
\varphi'(y) =  \alpha \cdot\frac{\varphi(y) - f'(y)}{h'(\varphi(y))}, y\in (\omega,\bar{\rho}),\\
\varphi(\omega) = \ubar{p}, \varphi(\bar{\rho})\geq \bar{p}.
\end{cases}
\end{align}

Based on the expression of $ h $ in Eq. \eqref{def_h_p}, the denominator of the ODE in Eq. \eqref{BVP_necessary_varphi_revisit} can be written as
\begin{equation}\label{h_prime}
\begin{aligned}
h'\left(\varphi(y)\right) =
\begin{cases}
f'^{-1}\left(\varphi(y)\right) &\text{if } \varphi(y)\in [\ubar{c},\bar{c}],\\
1 &\text{if } \varphi(y)\in (\bar{c},+\infty),
\end{cases}
\end{aligned}
\end{equation} 
Eq. \eqref{h_prime}  indicates that the ODE in Eq. \eqref{BVP_necessary_varphi_revisit} can be equivalently transformed into two ODEs based on whether $ \varphi(y)$ is larger than  $ \bar{c} $ or not. Moreover, notice that the second boundary condition of $ \normalfont\textsf{BVP}(\omega,\alpha)  $ is $ \varphi(\bar{\rho})\geq \bar{p} $, where $ \bar{p} $ can be either less than or larger than $ \bar{c} $.  For this reason, based on the values of $ \ubar{p} $ and $ \bar{p} $, as well as the two boundary conditions in Eq. \eqref{BVP_necessary_varphi_revisit}, we can categorize the whole possibilities into three cases as follows. 
\begin{itemize}
	\item \textbf{Case-1}: $ \ubar{c}<  \ubar{p} <  \bar{c} < \bar{p} $. This case captures the setup when the minimum valuation density is small but the maximum valuation density is large. 
	\item \textbf{Case-2}: $\ubar{c}< \bar{c}\leq \ubar{p}\leq \bar{p}$. This case captures the setup when the minimum valuation density is large. 
	\item \textbf{Case-3}:  $ \ubar{c}<\ubar{p}\leq \bar{p}\leq \bar{c}$.  This case captures the setup when the maximum valuation density is small. 
\end{itemize}

We illustrate the above three cases in Fig. \ref{three_cases}(a). A nice setup $ \mathcal{S} $ must have $ \bar{p}\geq \ubar{p} $, and thus we only focus on the upper-left triangular part.   We next present our proof of Theorem \ref{major_results} in \textbf{Case-1}. The proofs of Theorem \ref{major_results} in \textbf{Case-2} and \textbf{Case-3} are similar and thus are deferred to  Appendix \ref{missing_proofs_case_2_case_3}.

\subsection{Proof of Theorem \ref{major_results} in Case-1}\label{section_Case_II}
In this case, $ \ubar{p}< \bar{c} <  \bar{p}$ indicates that $ \ubar{\rho}< 1 = \bar{\rho} $. Based on the two boundary conditions in Eq. \eqref{BVP_necessary_varphi_revisit}, there must exist a resource utilization threshold $ u\in (\omega,1) $ so that $ \phi(u) = \bar{c} $, as illustrated in Fig. \ref{fig_rationality_sufficiency}(b). Based on Theorem \ref{sufficiency} and Theorem \ref{necessity}, we give the following Corollary  \ref{theorem_case_1_two_BVP} which summarizes the sufficient and necessary conditions in \textbf{Case-1}. 

\begin{corollary}\label{theorem_case_1_two_BVP}
Given a convex setup $ \mathcal{S} $ in \textbf{Case-1}, $ \textsf{PM}_\phi  $ is IC and $ \alpha $-competitive if there exists a pair of resource utilization thresholds $ (\omega,u) \in [F_{\ubar{p}}^{-1}\big(\frac{1}{\alpha} h\big(\ubar{p}\big)\big),\ubar{\rho}]\times(\omega,1) $ such that $ \phi $ is given by
\begin{equation}\label{phi_case_1}
	\begin{aligned}
		\phi(y)=
		\begin{cases}
			\ubar{p} &\text{if } y\in [0,\omega],\\
			\varphi_1(y) &\text{if } y\in [\omega,u],\\
			\varphi_2(y) &\text{if } y\in [u,1],\\
			+\infty   &\text{if } y\in (1,+\infty),
		\end{cases}
	\end{aligned}
\end{equation}
where    $ \varphi_1(y) $  is a solution to the following BVP:
\begin{equation}\label{BVP_1_case_1}
	\begin{aligned}
		\normalfont\textsf{BVP}_1(\omega,u,\alpha) 
		\begin{cases}
			\varphi_1'(y) = \alpha\cdot \frac{\varphi_1(y)  - f'(y) }{f'^{-1}\left(\varphi_1(y)\right)}, y\in(\omega,u);\\
			\varphi_1(\omega) = \ubar{p}, \varphi_1(u) =\bar{c},	
		\end{cases}
	\end{aligned}
\end{equation}
and $ \varphi_2(y) $ is a solution to the following BVP:
\begin{equation}\label{BVP_2_case_1}
	\begin{aligned}
		\normalfont\textsf{BVP}_2(u,\alpha)
		\begin{cases}
			\varphi_2'(y) = \alpha\cdot \left(\varphi_2(y) - f'(y) \right),  y\in (u,1);\\
			\varphi_2(u) = \bar{c},  \varphi_2(1) \geq \bar{p}.
		\end{cases}
	\end{aligned}
\end{equation}
On the other hand, if there exists an $ \alpha $-competitive online algorithm, then there must exist some pair of $ (\omega,u) \in [F_{\ubar{p}}^{-1}\big(\frac{1}{\alpha} h(\ubar{p})\big),\ubar{\rho}]\times(\omega,1) $ such that $ \normalfont\textsf{BVP}_1(\omega, u,\alpha) $ and $ \normalfont\textsf{BVP}_2(u,\alpha) $ are well-defined, and each of them has  a strictly-increasing solution. 
\end{corollary}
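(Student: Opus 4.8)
\noindent\emph{Proof plan.} The plan is to obtain Corollary~\ref{theorem_case_1_two_BVP} as a direct specialization of Theorem~\ref{sufficiency} (for the ``if'' direction) and Theorem~\ref{necessity} (for the ``only if'' direction) to Case-1, where $\ubar{\rho}<1=\bar{\rho}$. The only new ingredient is that the increasing segment, which lives on $[\omega,\bar{\rho}]=[\omega,1]$, must be cut at the unique utilization level $u$ at which the price reaches $\bar{c}$: by Eq.~\eqref{h_prime} the denominator $h'(\varphi)$ of the ODE in \eqref{ODI_principle_sufficiency}/\eqref{BVP_necessary_varphi} equals $f'^{-1}(\varphi)$ when $\varphi\le\bar{c}$ and equals $1$ when $\varphi>\bar{c}$, so below $u$ that ODE is exactly the equation of $\textsf{BVP}_1$ and above $u$ it is exactly that of $\textsf{BVP}_2$. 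Throughout I would use that, by strict convexity of $f$, the profit function $F_{\ubar{p}}(y)=\ubar{p}y-f(y)$ is strictly increasing on $[0,\ubar{\rho}]$ with $F_{\ubar{p}}(\ubar{\rho})=h(\ubar{p})$, so $F_{\ubar{p}}^{-1}$ is a well-defined bijection from $[0,h(\ubar{p})]$ onto $[0,\ubar{\rho}]$ and, since $\alpha\ge1$, the membership $\omega\in[F_{\ubar{p}}^{-1}(\tfrac1\alpha h(\ubar{p})),\ubar{\rho}]$ is equivalent to condition \eqref{flat_sufficiency_omega}.

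For the sufficiency part I would take the given pair $(\omega,u)$ and the four-piece $\phi$ in \eqref{phi_case_1} and check that it meets the hypotheses of Theorem~\ref{sufficiency}. Setting $\varphi:=\varphi_1$ on $[\omega,u]$ and $\varphi:=\varphi_2$ on $[u,1]$, the function $\varphi$ is continuous because $\varphi_1(u)=\bar{c}=\varphi_2(u)$, it has the boundary values $\varphi(\omega)=\ubar{p}$ and $\varphi(\bar{\rho})=\varphi_2(1)\ge\bar{p}$ demanded by \eqref{ODI_principle_sufficiency}, and on each of $(\omega,u)$ and $(u,1)$ the equation of $\textsf{BVP}_1$, respectively $\textsf{BVP}_2$, coincides with $\varphi'=\alpha(\varphi-f')/h'(\varphi)$ by Eq.~\eqref{h_prime}; hence the differential inequality in \eqref{ODI_principle_sufficiency} holds. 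The one point needing a short argument is that $\varphi_1$ and $\varphi_2$ are increasing (Theorem~\ref{sufficiency} asks for an increasing $\varphi$): for $w:=\varphi_i-f'$ the ODE gives that $w'$ equals a positive multiple of $w$ minus $f''$, so if $w$ ever vanishes it turns negative and stays negative, which would prevent $\varphi_1$ from climbing from $\ubar{p}$ to $\bar{c}$ and $\varphi_2$ from $\bar{c}$ to at least $\bar{p}>\bar{c}$; therefore $w>0$ on the respective open intervals and both pieces are strictly increasing. Then $\phi$ is exactly of the form \eqref{sufficient_phi} with $\omega$ satisfying \eqref{flat_sufficiency_omega}, and Theorem~\ref{sufficiency} gives that $\textsf{PM}_\phi$ is IC and $\alpha$-competitive.

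For the necessity part I would invoke Theorem~\ref{necessity}: an $\alpha$-competitive online algorithm produces a critical threshold $\omega$ satisfying \eqref{flat_sufficiency_omega}, hence $\omega\in[F_{\ubar{p}}^{-1}(\tfrac1\alpha h(\ubar{p})),\ubar{\rho}]$, together with a strictly-increasing solution $\varphi$ of $\textsf{BVP}(\omega,\alpha)$ on $(\omega,\bar{\rho})=(\omega,1)$ with $\varphi(\omega)=\ubar{p}$ and $\varphi(1)=\varphi(\bar{\rho})\ge\bar{p}$. Since $\ubar{p}<\bar{c}<\bar{p}$ and $\varphi$ is continuous and strictly increasing, the intermediate value theorem yields a unique $u\in(\omega,1)$ with $\varphi(u)=\bar{c}$; in particular $(\omega,u)$ lies in the required domain and both $\textsf{BVP}_1(\omega,u,\alpha)$ and $\textsf{BVP}_2(u,\alpha)$ have nonempty domains, i.e. are well defined. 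On $[\omega,u]$ one has $\varphi\le\bar{c}$, so by \eqref{h_prime} the equation of $\textsf{BVP}(\omega,\alpha)$ reads $\varphi'=\alpha(\varphi-f')/f'^{-1}(\varphi)$ and, with $\varphi(\omega)=\ubar{p}$ and $\varphi(u)=\bar{c}$, the restriction $\varphi|_{[\omega,u]}$ is a strictly-increasing solution of $\textsf{BVP}_1(\omega,u,\alpha)$; on $[u,1]$ one has $\varphi\ge\bar{c}$, so $h'(\varphi)=1$ and $\varphi|_{[u,1]}$ is a strictly-increasing solution of $\textsf{BVP}_2(u,\alpha)$ with $\varphi(u)=\bar{c}$ and $\varphi(1)\ge\bar{p}$. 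This produces the claimed pair $(\omega,u)$ and the two strictly-increasing solutions.

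The only nonroutine step, and the main obstacle, is verifying that the split at $u$ is legitimate on both sides. In the necessity direction one must be certain that $\varphi(\omega)<\bar{c}<\varphi(\bar{\rho})$ so that the crossing level $u$ exists and lies strictly in the interior (this is exactly what the Case-1 chain $\ubar{c}<\ubar{p}<\bar{c}<\bar{p}$ guarantees, via $\ubar{\rho}<1$), and one must check that the two restrictions really are solutions of the respective two-point BVPs with the stated endpoint values; in the sufficiency direction one must confirm that any solution of $\textsf{BVP}_1$ (respectively $\textsf{BVP}_2$) is automatically monotone, so that Theorem~\ref{sufficiency} applies. Everything else is bookkeeping: substituting the two branches of $h'$ from \eqref{h_prime} into the ODE of Theorem~\ref{sufficiency}/Theorem~\ref{necessity} and matching the boundary values at $u$.
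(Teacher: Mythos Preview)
Your proposal is correct and follows essentially the same route as the paper: the paper's own proof of Corollary~\ref{theorem_case_1_two_BVP} is only a one-paragraph sketch noting that the result follows from Theorem~\ref{sufficiency} and Theorem~\ref{necessity} after substituting the two branches of $h'$ from Eq.~\eqref{h_prime} into $\textsf{BVP}(\omega,\alpha)$, which is exactly your plan. Your write-up is more detailed than the paper's---in particular, you spell out the intermediate-value argument that produces $u$, the equivalence between $\omega\in[F_{\ubar{p}}^{-1}(\tfrac1\alpha h(\ubar{p})),\ubar{\rho}]$ and condition~\eqref{flat_sufficiency_omega}, and the monotonicity of $\varphi_1,\varphi_2$ (which the paper defers to Lemma~\ref{monotonicity_omega})---but these are precisely the bookkeeping steps the paper leaves implicit.
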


In Corollary \ref{theorem_case_1_two_BVP}, $ F_{\ubar{p}}^{-1} $ represents the inverse\footnote{Based on Eq. \eqref{profit_function}, $ F_{\ubar{p}}(y) $ is strictly increasing in $ y\in [0,\ubar{\rho}] $ since $ f'(y)\leq \ubar{p} $ holds for all $ y\in [0,\ubar{\rho}] $. Thus, the inverse of $ F_{\ubar{p}} $, denoted by $ F_{\ubar{p}}^{-1} $,  is well-defined.} of the profit function $ F_{\ubar{p}} $ defined in Eq. \eqref{profit_function}, and $ F_{\ubar{p}}^{-1}\big(\frac{1}{\alpha} h\big(\ubar{p}\big)\big) $ is derived from Eq. \eqref{flat_sufficiency_omega}. The flat-segment $ [0,\omega] $ of $ \phi $ directly follows Theorem \ref{sufficiency}. The two BVPs in Eq. \eqref{BVP_1_case_1} and Eq. \eqref{BVP_2_case_1}, as well as their corresponding solutions $ \varphi_1(y) $ and $ \varphi_2(y) $, follow Theorem \ref{necessity} after substituting $ h'(\varphi(y)) $ from Eq. \eqref{h_prime} into $ \normalfont\textsf{BVP}(\omega,\alpha)  $.  The sufficiency and necessity of Corollary \ref{theorem_case_1_two_BVP} thus follow.  

Corollary \ref{theorem_case_1_two_BVP} not only shows how to justify whether a given pricing function is competitive or not, but also argues that, to find the optimal competitive ratio, we simply need to find the minimum $ \alpha $ so that both $ \textsf{BVP}_1(\omega,u,\alpha) $  and $ \textsf{BVP}_2(u,\alpha) $ have  strictly-increasing solutions. Below we give Proposition \ref{two_lower_bounds_case_1} which shows the existence conditions of solutions to both $ \textsf{BVP}_1(\omega,u,\alpha) $ and $ \textsf{BVP}_2(u,\alpha) $.
\textbf{Case-1}. 

\begin{proposition}\label{two_lower_bounds_case_1}
Given a convex setup $ \mathcal{S} $ in \textbf{Case-1}, the following claims regarding $\normalfont \textsf{BVP}_1(\omega,u,\alpha)  $ and $\normalfont \textsf{BVP}_2(u,\alpha)  $ are true:
\begin{itemize}
	\item For each given pair of $ (\omega,u) \in [F_{\ubar{p}}^{-1}\big(\frac{1}{\alpha} h\big(\ubar{p}\big)\big),\ubar{\rho}]\times(\omega,1) $, there exists a well-defined function $ \Gamma_1(\omega,u) $ so that  $ \normalfont\textsf{BVP}_1(\omega,u,\alpha)  $ has a unique strictly-increasing  solution if and only if $ \alpha = \Gamma_1(\omega,u) $.
	\item For each given $ u\in (0,1) $, $\normalfont \textsf{BVP}_2(u,\alpha)  $ has a unique strictly-increasing solution if and only if $ \alpha \geq  \Gamma_2(u) $, where $ \Gamma_2(u) $ is the unique root to the following equation in variable $ \Gamma_2 $:
	\begin{equation}\label{Gamma_2}
		\int_u^{1}\frac{\Gamma_2 f'(y)}{ \exp(y\Gamma_2) } dy = \frac{\bar{p}}{\exp(\Gamma_2)} - \frac{\bar{c}}{\exp(u\Gamma_2)}.
	\end{equation}
	Moreover, for each given $ u\in (0,1) $, when $ \alpha =  \Gamma_2(u) $, the unique solution to $ \textsf{BVP}_2(u,\alpha) $ satisfies $ \varphi_2(1) = \bar{p} $.
\end{itemize}
\end{proposition}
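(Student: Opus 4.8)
The plan is to analyze the two boundary value problems separately, since each is a first-order linear ODE (after substituting $h'$ from Eq.~\eqref{h_prime}) with an affine right-hand side in the dependent variable, so closed-form integration is available. For $\textsf{BVP}_2(u,\alpha)$ in Eq.~\eqref{BVP_2_case_1}, the ODE $\varphi_2'(y) = \alpha(\varphi_2(y) - f'(y))$ is linear; integrating the linear equation with integrating factor $\exp(-\alpha y)$ and imposing the left boundary condition $\varphi_2(u) = \bar{c}$ yields a unique solution $\varphi_2(\,\cdot\,;u,\alpha)$ for every $\alpha$. The terminal value $\varphi_2(1;u,\alpha)$ is then an explicit function of $\alpha$; I would show it is continuous and strictly increasing in $\alpha$ (differentiating under the integral sign, or a comparison/monotonicity argument on the linear ODE), and that it ranges over a set whose infimum, as $\alpha$ varies, is characterized by the equation $\varphi_2(1;u,\alpha)=\bar{p}$. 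Rearranging $\varphi_2(1;u,\Gamma_2)=\bar{p}$ into the form of Eq.~\eqref{Gamma_2} is then bookkeeping with the integrating factor. Since $\varphi_2(1;u,\cdot)$ is strictly monotone in $\alpha$, the root $\Gamma_2(u)$ is unique, the second boundary inequality $\varphi_2(1)\ge\bar{p}$ holds precisely for $\alpha\ge\Gamma_2(u)$, and at $\alpha=\Gamma_2(u)$ equality $\varphi_2(1)=\bar{p}$ holds by construction. I also need to check that the solution is strictly increasing on $(u,1)$: at $y=u$ we have $\varphi_2(u)=\bar{c}=f'(1)>f'(y)$ for $y<1$, so $\varphi_2'(u)>0$; a standard argument (if $\varphi_2$ ever stopped increasing, $\varphi_2-f'$ would have to vanish, but $f'$ is itself increasing) shows $\varphi_2(y)>f'(y)$ throughout, hence $\varphi_2'>0$ on $(u,1)$.

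For $\textsf{BVP}_1(\omega,u,\alpha)$ in Eq.~\eqref{BVP_1_case_1}, the situation is an \emph{overdetermined} two-point problem: the ODE $\varphi_1'(y)=\alpha\,(\varphi_1(y)-f'(y))/f'^{-1}(\varphi_1(y))$ with \emph{both} endpoints pinned, $\varphi_1(\omega)=\ubar{p}$ and $\varphi_1(u)=\bar{c}$. For generic $\alpha$ the initial value problem started from $\varphi_1(\omega)=\ubar{p}$ will not hit $\bar{c}$ exactly at $y=u$. The key step is to show that the value $\varphi_1(u)$ of the forward IVP solution is a strictly monotone (hence invertible) function of $\alpha$ for fixed $(\omega,u)$, so there is exactly one $\alpha$ — call it $\Gamma_1(\omega,u)$ — for which the right boundary condition is met; this defines the function $\Gamma_1$. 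Monotonicity in $\alpha$ follows from a comparison-of-solutions argument: larger $\alpha$ forces a larger derivative wherever $\varphi_1>f'$, and one first checks $\varphi_1>f'$ is preserved along the trajectory (again because $\varphi_1(\omega)=\ubar{p}>f'(y)$ for $y\le\ubar{\rho}$ and $f'$ is increasing while $\varphi_1-f'$ can only cross zero from above, which the ODE prevents). Well-definedness of $\Gamma_1(\omega,u)$ also requires that solutions exist up to $y=u$ and that $f'^{-1}(\varphi_1(y))$ stays bounded away from $0$ — guaranteed since $\varphi_1(y)\ge\ubar{p}>\ubar{c}=f'(0)$, so $f'^{-1}(\varphi_1(y))>0$. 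Strict monotonicity of $\varphi_1$ then follows from $\varphi_1>f'$ and the sign of the ODE.

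The main obstacle I anticipate is establishing the strict monotonicity (and continuity) of the endpoint-value maps $\alpha\mapsto\varphi_1(u;\omega,u,\alpha)$ and $\alpha\mapsto\varphi_2(1;u,\alpha)$ rigorously, together with the invariance of the region $\{\varphi>f'\}$, because these are what make $\Gamma_1$ and $\Gamma_2$ well-defined single-valued functions and pin down the ``if and only if'' statements. The cleanest route is continuous dependence of ODE solutions on parameters (standard, and the vector fields here are $C^1$ in the relevant region by the differentiability assumptions on $f$ in the convex setup) plus a differential-inequality / Gr\"onwall comparison: if $\alpha_2>\alpha_1$ then along the common trajectory the right-hand side for $\alpha_2$ strictly dominates, so the solutions separate and cannot recross as long as $\varphi>f'$. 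For $\textsf{BVP}_2$ the linearity makes this transparent (explicit formula), so the real care is needed only for $\textsf{BVP}_1$, where $f'^{-1}(\varphi_1)$ appears in the denominator; I would handle this by noting $f'^{-1}(\varphi_1(y))\in[f'^{-1}(\ubar{p}),1]$ on the relevant range, so the denominator is uniformly positive and the vector field is Lipschitz, giving both existence/uniqueness and the comparison estimates. Finally, reassembling: $\textsf{BVP}_1(\omega,u,\alpha)$ has a strictly-increasing solution iff $\alpha=\Gamma_1(\omega,u)$, and $\textsf{BVP}_2(u,\alpha)$ has one iff $\alpha\ge\Gamma_2(u)$, which is the claimed statement; the Eq.~\eqref{Gamma_2} characterization and the $\varphi_2(1)=\bar{p}$ tightness at $\alpha=\Gamma_2(u)$ drop out of the explicit linear solution.
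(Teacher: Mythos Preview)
Your proposal is correct and follows essentially the same route as the paper: explicit integration of the linear ODE for $\textsf{BVP}_2$ to obtain Eq.~\eqref{Gamma_2}, and a shooting argument (Picard--Lindel\"of existence/uniqueness, invariance of the region $\{\varphi>f'\}$, and monotonicity of the endpoint value in $\alpha$) for $\textsf{BVP}_1$. The one minor difference is that the paper shoots \emph{backward} from the initial condition $\varphi_1(u)=\bar{c}$ and examines $\varphi_1(\omega)$, whereas you shoot forward from $\varphi_1(\omega)=\ubar{p}$; the backward direction is slightly cleaner because it automatically keeps $\varphi_1\in[\ubar{p},\bar{c}]\subset\mathrm{dom}(f'^{-1})$, while in your forward shooting the solution for large $\alpha$ may exceed $\bar{c}$ before reaching $y=u$ and leave the domain of $f'^{-1}$ --- easily fixed by tracking the first hitting time of level $\bar{c}$ rather than $\varphi_1(u)$, but worth handling explicitly.
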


The proof of Proposition \ref{two_lower_bounds_case_1} is based on the two ODEs given in Eq. \eqref{BVP_1_case_1} and  Eq. \eqref{BVP_2_case_1}, and the details are deferred to Appendix \ref{proof_of_two_lower_bounds_case_1}. Based on the above Proposition \ref{two_lower_bounds_case_1}, below we give the optimal  competitive ratio in \textbf{Case-1}.

\begin{theorem}\label{existence_uniqueness_opt_case_1}
	Given a convex setup $ \mathcal{S} $ in \textbf{Case-1}, the optimal  competitive ratio achievable by all online algorithms is given by:
	\begin{equation}\label{equation_three_case_1}
		\alpha_*({\mathcal{S}}) =   \frac{h\big(\ubar{p}\big)}{F_{\ubar{p}}(\omega_*)}  =  \Gamma_1(\omega_*,u_*) = \Gamma_2(u_*),
	\end{equation}
	where $ u_*\in (0,1) $ is the unique root that satisfies
	\begin{equation}\label{system_of_equations_case_1}
		\Gamma_1\left(F_{\ubar{p}}^{-1}\bigg(\frac{h(\ubar{p})}{\Gamma_2(u_*)}\bigg),u_*\right) = \Gamma_2(u_*),
	\end{equation}
	and  $ \omega_*\in [0,\ubar{\rho}] $ is given by 
	\begin{equation*}
		\omega_* =  F_{\ubar{p}}^{-1}\bigg(\frac{h\big(\ubar{p}\big)}{\Gamma_2(u_*)}\bigg).
	\end{equation*}
	Meanwhile, $ \textsf{PM}_{\phi_*}  $ is $ \alpha_*({\mathcal{S}}) $-competitive if and only if $ \phi_* $ is given by Eq. \eqref{phi_case_1} with $ (\omega,u,\alpha) = (\omega_*,u_*, \alpha_*({\mathcal{S}})) $. Moreover, we have $ \phi_*(1) =  \bar{p} $.   
\end{theorem}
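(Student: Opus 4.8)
The plan is to reduce the computation of $\alpha_*(\mathcal{S})$ to a one–dimensional root–finding problem by feeding Proposition~\ref{two_lower_bounds_case_1} into Corollary~\ref{theorem_case_1_two_BVP}. By the necessity half of Corollary~\ref{theorem_case_1_two_BVP}, any $\alpha$-competitive online algorithm forces a pair $(\omega,u)$ in the admissible box $[F_{\ubar{p}}^{-1}(\tfrac{1}{\alpha} h(\ubar{p})),\ubar{\rho}]\times(\omega,1)$ for which $\textsf{BVP}_1(\omega,u,\alpha)$ and $\textsf{BVP}_2(u,\alpha)$ both have strictly-increasing solutions, and by Proposition~\ref{two_lower_bounds_case_1} this is equivalent to $\alpha=\Gamma_1(\omega,u)$ together with $\alpha\ge\Gamma_2(u)$; conversely the sufficiency half of Corollary~\ref{theorem_case_1_two_BVP} (via Theorem~\ref{sufficiency}) makes every such triple into an $\alpha$-competitive $\textsf{PM}_\phi$. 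Hence
\[
\alpha_*(\mathcal{S})=\inf\big\{\alpha:\ \exists\,(\omega,u),\ F_{\ubar{p}}^{-1}(\tfrac{1}{\alpha} h(\ubar{p}))\le\omega\le\ubar{\rho},\ \omega<u<1,\ \Gamma_1(\omega,u)=\alpha,\ \Gamma_2(u)\le\alpha\big\},
\]
and the whole argument consists in identifying this infimum and its minimizer.

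Second, I would record (or re-derive by ODE comparison) the monotonicities that drive everything: $\Gamma_1(\omega,u)$ is strictly increasing in $\omega$ and strictly decreasing in $u$ — a shorter increasing-segment $[\omega,u]$ needs a steeper, hence larger-$\alpha$, curve to rise from $\ubar{p}$ to $\bar{c}$ — while $\Gamma_2(u)$ is strictly increasing in $u$ for the analogous reason on $[u,1]$, and $F_{\ubar{p}}$ is strictly increasing on $[0,\ubar{\rho}]$ (the footnote in Corollary~\ref{theorem_case_1_two_BVP}). With these in hand I would run a perturbation argument showing that at the minimizing $\alpha$ both inequality constraints are tight: if $F_{\ubar{p}}(\omega)>\tfrac{1}{\alpha} h(\ubar{p})$ one shrinks $\omega$, which strictly lowers $\Gamma_1(\omega,u)=\alpha$; if moreover $\Gamma_2(u)<\alpha$ one also raises $u$, again strictly lowering $\Gamma_1$; either move contradicts minimality. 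This yields $\alpha_*(\mathcal{S})=\Gamma_1(\omega_*,u_*)=\Gamma_2(u_*)$ and $\omega_*=F_{\ubar{p}}^{-1}(h(\ubar{p})/\Gamma_2(u_*))$, i.e. precisely \eqref{equation_three_case_1} and the defining relation \eqref{system_of_equations_case_1}.

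Third, I would establish existence and uniqueness of $u_*\in(0,1)$. Set $L(u):=\Gamma_1\big(F_{\ubar{p}}^{-1}(h(\ubar{p})/\Gamma_2(u)),u\big)$ and $R(u):=\Gamma_2(u)$; by the monotonicities, $\Gamma_2$ increasing makes $F_{\ubar{p}}^{-1}(h(\ubar{p})/\Gamma_2(u))$ decreasing, so $L$ is strictly decreasing (it is increasing in its first argument and decreasing in the second), while $R$ is strictly increasing, so $L-R$ is strictly decreasing and has at most one zero. For existence I would track the endpoints of the admissible $u$-range: as $u$ decreases to the value where $F_{\ubar{p}}^{-1}(h(\ubar{p})/\Gamma_2(u))=u$ the interval $[\omega,u]$ collapses and $L(u)\to\infty$ while $R(u)$ stays finite, and as $u\uparrow1$ the interval $[u,1]$ collapses so $R(u)=\Gamma_2(u)\to\infty$ (from \eqref{Gamma_2}) while $L(u)$ remains bounded; continuity and the intermediate value theorem then give the unique root $u_*$, after which $\omega_*$ is read off and one checks $\omega_*\in[0,\ubar{\rho}]$ and $\omega_*<u_*<1$ from the admissibility built into the box. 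Finally, assembling $\phi_*$ from the unique strictly-increasing solutions $\varphi_1,\varphi_2$ of $\textsf{BVP}_1(\omega_*,u_*,\alpha_*)$ and $\textsf{BVP}_2(u_*,\alpha_*)$ on $[\omega_*,u_*]$ and $[u_*,1]$ gives the pricing function \eqref{phi_case_1}; the ``only if'' direction of the last claim comes from the necessity half of Corollary~\ref{theorem_case_1_two_BVP} plus the uniqueness just proved, and $\phi_*(1)=\varphi_2(1)=\bar{p}$ is exactly the last assertion of Proposition~\ref{two_lower_bounds_case_1}, which applies here because $\alpha_*=\Gamma_2(u_*)$.

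The step I expect to be the main obstacle is the tightness/perturbation argument: it must coordinate two coupled constraints — the initial inequality tying $\omega$ to $\alpha$ and the $\textsf{BVP}_2$ inequality tying $u$ to $\alpha$ — through the equality $\Gamma_1(\omega,u)=\alpha$, all while staying inside the box $\omega\in[F_{\ubar{p}}^{-1}(h(\ubar{p})/\alpha),\ubar{\rho}]$, $u\in(\omega,1)$. Making the ``direction of improvement'' rigorous and ruling out degenerate boundary situations (e.g.\ $\omega$ reaching $\ubar{\rho}$, or $u$ reaching $\omega$, before a constraint becomes tight) is the delicate part; once that is settled, the remainder is monotonicity bookkeeping and an intermediate value theorem.
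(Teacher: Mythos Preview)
Your proposal is correct and follows essentially the same route as the paper: both arguments rest on the monotonicity of $\Gamma_1(\omega,u)$ in its two arguments and of $\Gamma_2(u)$ in $u$, use these to show that at the optimum the initial-inequality constraint and the $\Gamma_2$-constraint are simultaneously tight, and then invoke the boundary blow-ups ($\Gamma_1\to\infty$ as $u\downarrow\omega$, $\Gamma_2\to\infty$ as $u\uparrow 1$) together with the intermediate value theorem to obtain the unique intersection. The only cosmetic difference is organizational: the paper first fixes $\omega$ and locates the intersection $u(\omega)$ of $\Gamma_1(\omega,\cdot)$ with $\Gamma_2(\cdot)$ (its Claim~2), then minimizes the resulting $\alpha(\omega)$ over $\omega$ (its Claim~1), whereas you collapse this into a single perturbation argument and a single root-finding equation in $u$; the content is the same, and the coupling issue you flag as the main obstacle is exactly what the paper's two-step decomposition is designed to sidestep.
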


\begin{proof}
	This theorem shows the unique existence of $ \omega_* $,  $ u_* $, and $ \alpha_*(\mathcal{S}) $. Based on these parameters, we can calculate the unique optimal pricing function $ \phi_* $ based on Eq. \eqref{phi_case_1}. The proof of the unique existence of $ \omega_* $ and $ u_* $ is based on Proposition \ref{two_lower_bounds_case_1}. 
	For the detailed proof, please refer to Appendix \ref{proof_of_existence_uniqueness_opt_case_1}.
\end{proof}

Based on Theorem \ref{existence_uniqueness_opt_case_1}, Theorem \ref{major_results} directly follows in \textbf{Case-1}. Due to space limitations, we defer the proof of Theorem \ref{major_results} in \textbf{Case-2} and \textbf{Case-3} to Appendix \ref{missing_proofs_case_2_case_3}.

Fig. \ref{three_cases}(b) illustrates the optimal pricing function $ \phi_*  $  in \textbf{Case-1}. We can see that in addition to the flat-segment $ [0,\omega_*] $, the increasing-segment is further divided into two parts, namely, $ [\omega_*,u_*] $ and $ [u_*,1] $. Meanwhile, Fig. \ref{three_cases}(c) and Fig. \ref{three_cases}(d) illustrate the optimal pricing functions in \textbf{Case-2} and \textbf{Case-3}, respectively. Note that by  $ \textsf{PM}_{\phi_*} $, the highest-possible resource utilization levels in \textbf{Case-1} and \textbf{Case-2} are 1, while in \textbf{Case-3} is $ \bar{\rho}$ and $ \bar{\rho}<1 $. This is consistent with our definition of $ \bar{\rho} $ in Eq. \eqref{def_of_rho}. 

For all the three cases, the optimal pricing function $ \phi_* $ cannot be given in analytical forms. We argue that the computation of $ \omega_*, u_* $, and $ \alpha_*(\mathcal{S}) $ is light-weight and can be performed efficiently via various numerical methods  such as  bisection searching.  More importantly, $\omega_*$ and $ u_* $ can be computed offline before the start of $ \textsf{PM}_\phi $. For more detailed discussions of how to quantify $\omega_*$, $u_*$, and $\alpha_*(\mathcal{S}) $, please refer to Appendix \ref{appendix_compute_alpha}.

\subsection{An Example: Quadratic Supply Costs}
\label{section_case_study}
To better show how to calculate $ \omega_* $, $ u_* $, $ \alpha_*(\mathcal{S}) $, and $ \phi_* $ in Theorem \ref{existence_uniqueness_opt_case_1}, in this subsection we perform a case study based on quadratic supply costs. Let us assume $ f(y) = \frac{1}{2}y^2 $ (i.e.,  $ f'(y) =  y $), and thus $ \ubar{c} = f'(0) = 0 $ and $ \bar{c} = f'(1) = 1 $. Based on the definitions of $ h $ and $ F_p $ in Section \ref{primal_dual_offline_setting}, we have
\begin{equation*}
h(\ubar{p}) = 
\begin{cases}
\frac{1}{2}\ubar{p}^{2} & \text{if } p\in [\ubar{c},\bar{c}]\\
\ubar{p} - \frac{1}{2} & \text{if } p\in (\bar{c},+\infty)
\end{cases}
,\ 
F_{\ubar{p}}(\omega) = \ubar{p}\omega -\frac{1}{2}\omega^2.
\end{equation*} 
Therefore, the lower bound of the critical threshold $ \omega $ can be written as a function of $ \alpha $ as $ F_{\ubar{p}}^{-1}(\frac{1}{\alpha}h(\ubar{p})) =  \ubar{p}\big(1-\sqrt{1-1/\alpha}\big) $.

We next show how to compute $ \omega_* $, $ u_* $, $ \alpha_*(\mathcal{S}) $ in \textbf{Case-1}. When $ f'(y) = y $,  $ \textsf{BVP}_1(\omega,u,\alpha)  $ in Eq. \eqref{BVP_1_case_1} can be written as
\begin{equation}\label{BVP_1_example}
\begin{aligned}
\begin{cases}
\varphi_1'(y) = \alpha\cdot \frac{\varphi_1(y)  - y }{\varphi_1(y)}, y\in(\omega,u);\\
\varphi_1(\omega) = \ubar{p}, \varphi_1(u) = 1.
\end{cases}
\end{aligned}
\end{equation}
The ODE in Eq. \eqref{BVP_1_example} can be solved by separation of variables \cite{ODE_book}. Substituting the two boundary conditions into the solution of the ODE in Eq. \eqref{BVP_1_example} leads to the following equation of $ \alpha$, $ \omega $, and $ u $:
\begin{equation}\label{Gamma_1_example}
\int_{\ubar{p}/\omega}^{1/u} \frac{-\eta}{\eta^2-\alpha\eta+\alpha} d\eta = \ln\left(\frac{u}{\omega}\right).
\end{equation}
Based on Proposition \ref{two_lower_bounds_case_1}, Eq. \eqref{Gamma_1_example} has a unique root in $ \alpha\geq 1 $ for each given pair of $ (\omega,u) \in [F_{\ubar{p}}^{-1}\big(\frac{1}{\alpha} h\big(\ubar{p}\big)\big),\ubar{\rho}]\times(\omega,1) $. Therefore, $ \alpha = \Gamma_1(\omega,u) $ is well-defined through Eq. \eqref{Gamma_1_example}.

Similarly, $ \textsf{BVP}_2(u,\alpha)  $ in Eq. \eqref{BVP_2_case_1} can be written as
\begin{equation}\label{BVP_2_example}
\begin{aligned}
\begin{cases}
\varphi_2'(y) = \alpha\cdot \left(\varphi_2(y) - y \right),  y\in (u,1);\\
\varphi_2(u) = \bar{c},  \varphi_2(1) =  \bar{p},
\end{cases}
\end{aligned} 
\end{equation}
where the second boundary condition is written as an equality since the optimal design should have $ \phi_*(1) = \bar{p} $, as indicated by Theorem \ref{existence_uniqueness_opt_case_1}.
Solving the ODE in Eq. \eqref{BVP_2_example} is elementary. Substituting the two boundary conditions into the analytical solution of the ODE in Eq. \eqref{BVP_2_example} leads to the following equation of $ \alpha $ and $ u $:
\begin{equation}\label{Gamma_2_example}
\exp\left(\alpha(1-u)\right)  = \frac{\alpha(1+\bar{p})+1}{\alpha(u+ \bar{c})+1}.
\end{equation}
Based on Proposition \ref{two_lower_bounds_case_1},  Eq. \eqref{Gamma_2_example} has a unique root in $ \alpha $ for each given $ u\in (0,1)$, and thus $ \alpha = \Gamma_2(u) $ is well defined through Eq. \eqref{Gamma_2_example}. Note that Eq. \eqref{Gamma_2_example} is a simplification of Eq. \eqref{Gamma_2} when $ f'(y) = y $. 

Combining the above results, when $ f'(y) = y $, the calculation of $ \omega_* $, $ u_* $ and $ \alpha_*(\mathcal{S}) $ (which is denoted as $ \alpha_* $ below for simplicity) in Eq. \eqref{equation_three_case_1} can be written as the following system of three equations:
\begin{subequations}\label{system_of_equations_example}
	\begin{align}
	\omega_* =  \ubar{p}\left(1-\sqrt{1-1/\alpha_*}\right),\label{system_of_equations_example_1} \\
	\int_{\ubar{p}/\omega_*}^{1/u_*} \frac{-\eta}{\eta^2-\alpha_*\eta+\alpha_*} d\eta = \ln\left(\frac{u_*}{\omega_*}\right),\label{system_of_equations_example_2}\\
	\exp\left(\alpha_*(1-u_*)\right)  = \frac{\alpha_*(1+\bar{p})+1}{\alpha_*(u_* + 1)+1}.\label{system_of_equations_example_3}
	\end{align}
\end{subequations}
Based on Theorem \ref{existence_uniqueness_opt_case_1}, the above system of equations has a unique pair of solutions $ (\omega_*, u_*, \alpha_*) $. In particular, the optimal competitive ratio for the quadratic cost setup can be written as
\begin{equation*}
\alpha_*(\mathcal{S}) = \frac{h\big(\ubar{p}\big)}{F_{\ubar{p}}(\omega_*)} 
= 
\begin{cases}
\frac{\ubar{p}^2}{2\ubar{p}\omega_* - \omega_*^2} & \text{if } p\in [\ubar{c},\bar{c}],\\
\frac{2\ubar{p} - 1 }{2\ubar{p}\omega_* - \omega_*^2}& \text{if } p\in (\bar{c},+\infty).
\end{cases}
\end{equation*}
Recall that for the linear cost setup in Corollary \ref{major_results_log}, $ \alpha_*(\mathcal{S}) = \frac{1}{\omega_*} $. 

When calculating the unique solution to Eq. \eqref{system_of_equations_example}, we can substitute $ \omega_* $ given by Eq. \eqref{system_of_equations_example_1} into Eq. \eqref{system_of_equations_example_2}, and then leverage the monotonicity of both sides of Eq. \eqref{system_of_equations_example_2} and Eq. \eqref{system_of_equations_example_3} to locate the unique values of $ \alpha_* $ and $ u_* $. 
Based on $ \alpha_* $ and $ u_* $, the optimal critical threshold $ \omega_* $ can be easily calculated based on  Eq. \eqref{system_of_equations_example_1}. 
Theorem \ref{existence_uniqueness_opt_case_1} shows that the optimal pricing function  $ \phi_* $ can be obtained as long as $ \omega_*, u_* $ and $ \alpha_* $ are given. For example, in \textbf{Case-1}, we can solve $ \textsf{BVP}_1(\omega_*,u_*,\alpha_*) $ in Eq. \eqref{BVP_1_example} to get the optimal pricing function $ \varphi_1^* $ in the interval $ [\omega_*,u_*] $, and solve $ \textsf{BVP}_2(u_*,\alpha_*) $ in Eq. \eqref{BVP_2_example} to get the optimal pricing function $ \varphi_2^* $ in the interval $ [u_*, 1] $. The complete optimal pricing function $ \phi_* $ (in \textbf{Case-1}) can thus be determined. 

We numerically solve the system of equations in Eq. \eqref{system_of_equations_example} and illustrate the relationship between $ \alpha_* $ and $ \bar{p} $ in Fig. \ref{fig_computation_alpha}. Specifically, we fix $ \ubar{p} = 0.3 $ and vary $ \bar{p} $ from $ 0.3 $ to $ 7.1 $ in Fig. \ref{fig_computation_alpha}(a) so that our parameter setting covers both \textbf{Case-1} (i.e., $ \bar{p} > 1 $) and \textbf{Case-3} (i.e., $ \bar{p} \leq 1 $). Meanwhile, we fix $ \ubar{p} = 1.1 $ and vary $ \bar{p} $ from $ 1.1 $ to $ 12.1 $ in Fig. \ref{fig_computation_alpha}(b) to simulate \textbf{Case-2} (i.e., $ \ubar{p} \geq  1 $). Fig. \ref{fig_computation_alpha} shows that $ \alpha_*(\mathcal{S}) = 1 $ when $ \bar{p} = \ubar{p} $, and is strictly increasing in $ \bar{p}\in [\ubar{p}, +\infty) $. This validates our results in Corollary \ref{property_of_alpha_S}. Note that the systems of equations for the calculation of $ \alpha_* $ in \textbf{Case-2} and \textbf{Case-3} are different from Eq. \eqref{system_of_equations_example}. For the details, please refer to Appendix \ref{missing_proofs_case_2_case_3}.

\begin{figure}
	\centering
	\subfigure[$ \ubar{p} = 0.3 $]{\includegraphics[width= 5 cm]{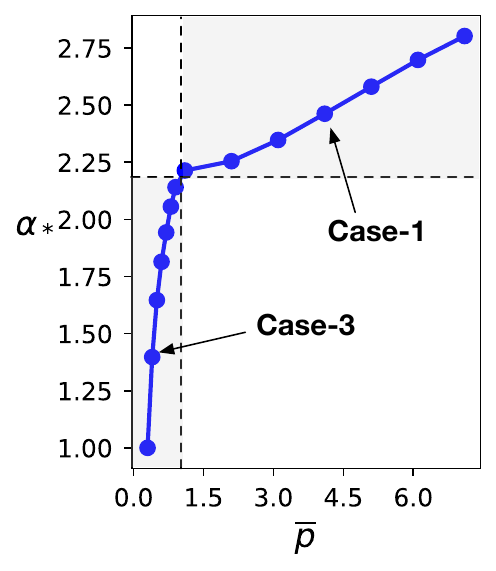}}
	\qquad 
	\qquad
	\subfigure[$ \ubar{p} = 1.1 $]{\includegraphics[width= 5.2 cm]{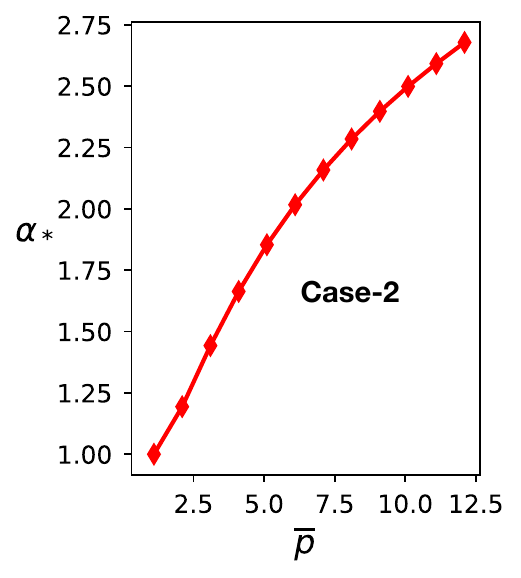}}
	\caption{Relationship between  $ \alpha_*(\mathcal{S}) $ and $ \bar{p} $ when $ f(y) = \frac{1}{2}y^2 $.}
	\label{fig_computation_alpha}
\end{figure}

\section{Conclusion and Future Work}
We studied the mechanism design for general online resource allocation problems in the presence of supply costs and capacity limits. In this setting, the supplier can produce additional units of resources at increasing marginal cost, but with a stringent capacity limit; agents are strategic and may deliberately misreport their preference to be better off. The major contribution of the paper is the development of a unified incentive compatible online mechanism, with a principled method of constructing a universal ``pricing function," that leads to the optimal competitive ratios for different setups including zero, linear and strictly-convex supply costs.  

One of the drawbacks of the current design is that the optimal competitive ratio, as well as other key design parameters such as the optimal critical threshold $ \omega_* $,
cannot be given in analytical forms. This brings difficulty in analyzing the properties of the competitive ratio w.r.t. important system parameters such as the lower and upper bounds of valuation densities (i.e., $ \ubar{p} $ and $ \bar{p} $).  It also remains unclear if there exists any analytical (and thus simple) pricing function so that the optimal competitive ratio can be approximately achieved. Meanwhile, extending the proposed techniques to other online optimization problems is also an interesting future direction.

\bibliographystyle{plain}
\bibliography{MD_unified_arXiv}

\newpage
\appendix

\section{Proof of Theorem \ref{sufficiency}}
\label{proof_sufficiency_alpha_competitive}
The sufficient conditions in Theorem \ref{sufficiency} are derived based on Proposition \ref{OPD_principle}. Specifically, we need to prove that under $ \textsf{PM}_{\phi} $, i) the sequences of $ \{P_n\}_{\forall n} $ and $ \{D_n\}_{\forall n} $  are feasible, and ii) there exists some $ k\in\mathcal{N} $ so that the initial inequality and the incremental inequality hold. 	Based on Proposition \ref{feasibility}, the feasibility of $ \{P_n\}_{\forall n} $ and $ \{D_n\}_{\forall n} $ are trivial since the pricing function $ \phi $ given by Eq. \eqref{pricing_function} is monotone. In the following we focus on the proof of the initial inequality and the incremental inequality. 

(\textbf{Initial Inequality}) Let us assume $\omega =  \sum_{n=1}^kr_n $, namely, $ k $ is the number of agents such that the total resource allocated equals $ \omega $. In the following we prove that the initial inequality $ P_k\geq \frac{1}{\alpha}D_k $ holds. Based on Eq. \eqref{flat_sufficiency_omega}, $ \omega $ is the resource utilization threshold such that 
$ F_{\ubar{p}}(\omega) = \ubar{p} \omega - \bar{f}\left(\omega\right)  \geq \frac{1}{\alpha}  h(\ubar{p}) $. Thus, we have
\begin{align*}
\ubar{p}\cdot \left(\sum_{n=1}^kr_n\right)-  \bar{f}\left(\sum_{n=1}^kr_n\right) \geq \frac{1}{\alpha} h(\ubar{p}),
\end{align*}
Since $ \alpha\geq 1 $ and $ \hat{\gamma}_n \geq 0 $ for all $ n\in\mathcal{N} $, the above inequality leads to the following one
\begin{align*}
\left(1 - \frac{1}{\alpha}\right)\sum_{n=1}^k \hat{\gamma}_n + \sum_{n=1}^k \ubar{p} r_n-  \bar{f}\left(\sum_{n=1}^k r_n\right) \geq \frac{1}{\alpha} h(\ubar{p}).
\end{align*}
The pricing function given by Eq. \eqref{pricing_function} indicates that the requirements of all the requests will be satisfied as long as the resource utilization is below $ \omega $. For this reason, we have $ \hat{y}_k = \sum_{n=1}^k r_n  = \omega $, which represents the total resource utilization after processing request $ k $.  Therefore, we have 
\begin{align}\label{proof_initial_alpha}
\sum_{n=1}^k \hat{\gamma}_n  +\sum_{n=1}^k \ubar{p} r_n- \bar{f}(\hat{y}_k) \geq \frac{1}{\alpha}\left(\sum_{n=1}^{k} \hat{\gamma}_n +  h(\ubar{p})\right),
\end{align}
Note that the left-hand-side of Eq. \eqref{proof_initial_alpha} is equal to $ P_k $ since 
$ \sum_{n=1}^k \hat{\gamma}_n  +\sum_{n=1}^k  \ubar{p} r_n- \bar{f}(\hat{y}_k) =  \sum_{n=1}^k v_n - \bar{f}(\hat{y}_k) = P_k $,  
where the first equality comes from $ v_n = \hat{\gamma}_n + \hat{p}_{n-1}r_n = \hat{\gamma}_n + \ubar{p}r_n $. Here, $\hat{p}_{n-1} = \ubar{p}  $ holds for all $ n = \{1,2,\cdots,k\} $ because the posted prices for all these agents are $ \ubar{p} $ (i.e., the flat-segment). 
Meanwhile, based on the objective of Problem \eqref{dual_problem}, the terms inside the parenthesis of Eq. \eqref{proof_initial_alpha} equal $ D_k $, namely, $ D_k = \sum_{n=1}^{k} \hat{\gamma}_n +  h(\ubar{p}) $.  
Therefore, Eq. \eqref{proof_initial_alpha} indicates that $ P_k\geq \frac{1}{\alpha}D_k $ holds. We thus complete the proof of the initial inequality. 

{\color{black}
	(\textbf{Incremental Inequality}) We now prove that the incremental inequality $ P_n- P_{n-1}\geq \frac{1}{\alpha}\left(D_n -D_{n-1}\right) $ holds for all $ n\in \{k+1, \cdots, N\} $. Note that if agent $ n $ is rejected (i.e., the demand of agent $ n $ is not satisfied), then $ P_n = P_{n-1} $ and $ D_n = D_{n-1} $. In this case, the incremental inequality $ P_n- P_{n-1}\geq \frac{1}{\alpha}\left(D_n -D_{n-1}\right) $ holds for all $ \alpha\geq 1 $. Therefore, in the following we only focus on the case when the requirement of agent $ n $ is satisfied. Eq. \eqref{ODI_principle_sufficiency} shows that  the following differential inequality  holds
	\begin{align}\label{ODE_appendix}
	\phi(\hat{y}_{n-1}) - \bar{f}'(\hat{y}_{n-1}) \geq \frac{1}{\alpha}\cdot h'\big(\phi(\hat{y}_{n-1})\big)\cdot\phi'(\hat{y}_{n-1})
	\end{align}
	for all $ \hat{y}_{n-1}\in [\omega,\bar{\rho}) $.  Since $ \hat{y}_n - \hat{y}_{n-1} = r_n $, based on the Taylor series we have
	\begin{align*}
	&\bar{f}\left(\hat{y}_n\right) = \bar{f}\left(\hat{y}_{n-1}\right) + \bar{f}'\left(\hat{y}_{n-1}\right)(\hat{y}_n-\hat{y}_{n-1}) + O(r_n^2),\\
	&h\big(\phi(\hat{y}_n)\big) = h\big(\phi(\hat{y}_{n-1})\big)  + h'\big(\phi(\hat{y}_{n-1})\big)\big(\phi(\hat{y}_n) - \phi(\hat{y}_{n-1})\big) + O\left(\big(\phi(\hat{y}_n) - \phi(\hat{y}_{n-1})\big)^2\right),\\
	&\phi\left(\hat{y}_n\right) = \phi\left(\hat{y}_{n-1}\right) + \phi'\left(\hat{y}_{n-1}\right)(\hat{y}_n-\hat{y}_{n-1}) + O(r_n^2).
	\end{align*}
	Based on Eq. \eqref{ODE_appendix} and the above Taylor series, we have
	\begin{align*}
	&\phi\left(\hat{y}_{n-1}\right)-\frac{\bar{f}\left(\hat{y}_n\right)-\bar{f}\left(\hat{y}_{n-1}\right)}{\hat{y}_n-\hat{y}_{n-1}} -O(r_n)\\
	\geq\ & \frac{1}{\alpha} \cdot \left(\frac{h\left(\phi\left(\hat{y}_n\right)\right)-h\left(\phi\left(\hat{y}_{n-1}\right)\right)}{\phi\left(\hat{y}_n\right)-\phi\left(\hat{y}_{n-1}\right)} - O(r_n)\right)\cdot \left(\frac{\phi\left(\hat{y}_n\right)-\phi\left(\hat{y}_{n-1}\right)}{\hat{y}_n-\hat{y}_{n-1}}- O(r_n)\right),
	\end{align*}
	Assumption \ref{Assumption_d_n} states that $ r_n $ is sufficiently small, i.e., $ r_n \rightarrow 0 $. Thus, the following inequality holds:
	\begin{align*}
	\phi\left(\hat{y}_{n-1}\right)\left(\hat{y}_n-\hat{y}_{n-1}\right)-\left(\bar{f}\left(\hat{y}_n\right)-\bar{f}\left(\hat{y}_{n-1}\right)\right)
	\geq \frac{1}{\alpha}\left( h\left(\phi\left(\hat{y}_n\right)\right)-h\left(\phi\left(\hat{y}_{n-1}\right)\right)\right).
	\end{align*}
	Since $ \hat{\gamma}_n \geq 0$ and $ 1-\frac{1}{\alpha} \geq 0 $, we have 
	\begin{align}\label{proof_incremental_alpha}
	\left(1-\frac{1}{\alpha}\right)\hat{\gamma}_n +   \bigg(\phi(\hat{y}_{n-1})(\hat{y}_n-\hat{y}_{n-1})- \Big(\bar{f}(\hat{y}_n)-\bar{f}(\hat{y}_{n-1})\Big)\bigg)
	\geq\frac{1}{\alpha} \left( h\left(\phi\left(\hat{y}_n\right)\right)-h\left(\phi\left(\hat{y}_{n-1}\right)\right)\right),
	\end{align}
	
	Note that after processing agent $ n $, the change in the primal objective is given by
	$P_n - P_{n-1}
	= v_n - \left(\bar{f}(\hat{y}_n)-\bar{f}(\hat{y}_{n-1})\right) 
	= \hat{\gamma}_n +\phi(\hat{y}_{n-1})\left(\hat{y}_n-\hat{y}_{n-1}\right)- \left(\bar{f}(\hat{y}_n)-\bar{f}(\hat{y}_{n-1})\right)
	$,
	where the second equality is due to $ v_n = \hat{\gamma}_n + \hat{p}_{n-1} r_n  =  \hat{\gamma}_n + \phi(\hat{y}_{n-1})(\hat{y}_n-\hat{y}_{n-1})$. Similarly, the change in the dual objective is given by $ D_n-D_{n-1} = \hat{\gamma}_n +  h(\hat{p}_{n})-h(\hat{p}_{n-1}) = \hat{\gamma}_n +  h\left(\phi\left(\hat{y}_n\right)\right)-h\left(\phi\left(\hat{y}_{n-1}\right)\right)  $. Therefore, Eq. \eqref{proof_incremental_alpha} indicates that the incremental inequality $ P_n- P_{n-1}\geq \frac{1}{\alpha}\left(D_n -D_{n-1}\right) $ holds  for all $ n\in \{k+1, \cdots, N\} $ when $ \hat{y}_{n-1}\in [\omega,\bar{\rho}) $.
}	

Note that we need to guarantee that the incremental inequality holds at each round for all $ \hat{y}_{n-1}\in [\omega,\bar{\rho}] $,
where $ \bar{\rho} $ represents the maximum resource utilization level. 
Therefore, we need to consider the boundary condition of $ \phi(y) $ when $ y = \bar{\rho} $. 
First, $ \phi(\omega) = \ubar{p} $ is obvious as the pricing function $ \phi $ must be continuous.  Taking integration of both sides of Eq. \eqref{ODE_appendix} leads to
\begin{align*}
\int_{\omega}^{\rho}\phi(y)dy - \bar{f}(\rho) + \bar{f}(\omega) \geq  \frac{1}{\alpha}\Big(h\big(\phi(\rho)\big) - h\big(\ubar{p}\big)\Big).
\end{align*}
Meanwhile, based on the flat-segment, we have 
$ \ubar{p} \omega - \bar{f}(\omega) \geq \frac{1}{\alpha} h(\ubar{p}) $. Thus, we have
\begin{align}\label{integral_version_expanding_single_appendix}
\ubar{p}\omega + \int_{\omega}^\rho\phi(y)dy - \bar{f}(\rho) \geq  \frac{1}{\alpha}h\big(\phi(\rho)\big),
\end{align}
which is the same as Eq. \eqref{integral_version_expanding}. When $ \rho = \bar{\rho} $, the above inequality also holds, and thus we have
\begin{align}\label{booundary_rho_bar}
\ubar{p}\omega + \int_{\omega}^{\bar{\rho}}\phi(y)dy - f(\bar{\rho}) \geq  \frac{1}{\alpha}h\big(\phi(\bar{\rho})\big).
\end{align}

On the other hand, $ \textsf{PM}_\phi $ is $ \alpha $-competitive indicates that the following inequality must hold:
\begin{align}\label{boundary_p_bar}
\ubar{p}\omega + \int_{\omega}^{\bar{\rho}}\phi(y)dy - f(\bar{\rho}) \geq  \frac{1}{\alpha}h(\bar{p}).
\end{align}
Therefore, if we have
$ \phi(\bar{\rho})\geq \bar{p} $, then  Eq. \eqref{booundary_rho_bar} guarantees that Eq. \eqref{boundary_p_bar} always holds. 
This means, the incremental inequality $ P_n- P_{n-1}\geq \frac{1}{\alpha}\left(D_n -D_{n-1}\right) $ holds at each round for all $ \hat{y}_{n-1}\in [\omega,\bar{\rho}] $ as long as the pricing function satisfies 
\begin{align*}
\begin{cases}
\phi(y) - f'(y) \geq \frac{1}{\alpha} \cdot h'\left(\phi(y)\right)\cdot{\phi'(y)}, y\in (\omega,\bar{\rho}), \\
\phi(\omega) = \ubar{p}, \phi(\bar{\rho})\geq \bar{p}.
\end{cases}
\end{align*}
We thus complete the proof of the incremental inequality in Eq. \eqref{ODI_principle_sufficiency}.

Summarizing our above proofs regarding the initial and incremental inequalities, we complete the proof of Theorem \ref{sufficiency}.

\section{Proof of Theorem \ref{necessity}} \label{proof_of_necessity_alpha_competitive}
Our proof of Theorem \ref{necessity} is organized as follows. We first give a set of less restrictive conditions in Theorem \ref{necessity_alpha_competitive}, based on which we give  Theorem \ref{strictly_increasing_psi} to show the existence of strictly-increasing allocation functions and pricing functions. Based on Theorem \ref{necessity_alpha_competitive} and Theorem \ref{strictly_increasing_psi},  Theorem \ref{necessity} directly follows. 

\subsection{General Necessary Conditions}
Theorem \ref{necessity_alpha_competitive} below gives a set of necessary conditions to guarantee the existence of an $ \alpha $-competitive online algorithm. 

\begin{theorem}
	\label{necessity_alpha_competitive}
	Given a convex setup $ \mathcal{S} $, if there exists an $ \alpha $-competitive online algorithm, then there must exist a parameter $\omega $ which satisfies \begin{align}\label{flat_necessary_omega}
	F_{\ubar{p}}(\omega)   \geq    \frac{1}{\alpha}  h\big(\ubar{p}\big) \text{ and }
	0\leq \omega \leq \ubar{\rho}
	\end{align}
	so that the following claims hold simultaneously: 
	\begin{itemize}
		\item \textbf{NC1}): There exists a case when the total resource utilization is $ \omega $ and all the accepted agents have the same valuation density $ \ubar{p} $.
		\item \textbf{NC2}): There exists a non-decreasing function $ \psi(p) $ that satisfies
		\begin{align}\label{ODI_psi}
		\begin{cases}
		\ubar{p}\omega+
		\int_{\ubar{p}}^{p}\eta \psi'(\eta) d\eta - f\big(\psi(p)\big) 
		\geq  \frac{1}{\alpha} h\left(p\right), p\in (\ubar{p},\bar{p}),\\
		\psi\big(\ubar{p}\big) = \omega, \psi\big(\bar{p}\big)\leq \bar{\rho},
		\end{cases}
		\end{align}
		where $ \psi(p) $ is a monotone mapping from $ p\in  [\ubar{p},\bar{p}] $ to $ \psi(p)\in [\omega,\bar{\rho}]$.
	\end{itemize}
\end{theorem}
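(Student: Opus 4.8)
The plan is to start from an arbitrary $\alpha$-competitive online algorithm $\mathsf{ALG}$ and to read off both $\omega$ and $\psi$ from its behaviour on a carefully designed family of ``ramp'' instances. Two reductions are made at the outset, both \emph{without loss of generality}: first, $\mathsf{ALG}$ is deterministic (for a worst-case lower bound the adversary can fix the random tape); second, $\mathsf{ALG}$ never accepts a request whose valuation density is strictly below the current marginal cost $f'(\hat{y})$ of its running utilization $\hat{y}$. The latter is a valid reduction because such a request decreases the current welfare, and since $f$ is convex a larger current utilization only raises every future marginal-cost increment and shrinks the remaining capacity, so declining it is never harmful; this ``no-overshoot'' property is what will pin the utilization below $\ubar{\rho}$ and $\bar{\rho}$. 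The infinitesimal Assumption \ref{Assumption_d_n} is used throughout so that requirements are negligible compared with the price increments of the ramp, turning discrete sums into integrals in the limit.

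First I would construct $\omega$. Feed $\mathsf{ALG}$ an arbitrarily long stream of identical requests of valuation density $\ubar{p}$ and infinitesimal requirement. Its utilization is non-decreasing and bounded by $1$, hence converges to a limit $\omega$. For any sufficiently long truncation of this stream the offline optimum is $h(\ubar{p})$ (keep everything and fill to $\ubar{\rho}=\arg\max_{y\ge 0}\ \ubar{p}\,y-\bar{f}(y)$), while $\mathsf{ALG}$ earns exactly $F_{\ubar{p}}(\hat{y})=\ubar{p}\,\hat{y}-f(\hat{y})$ on that prefix; letting the prefix grow, $\alpha$-competitiveness forces $F_{\ubar{p}}(\omega)\ge \frac{1}{\alpha}h(\ubar{p})$, and the no-overshoot property gives $\omega\le\ubar{\rho}$ (a density-$\ubar{p}$ request is refused once $f'(\hat{y})>\ubar{p}$, i.e.\ once $\hat{y}>\ubar{\rho}$). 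This is exactly \eqref{flat_necessary_omega}, and truncating the stream at the instant the utilization first reaches $\omega$ produces an instance witnessing \textbf{NC1}.

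Next I would build $\psi$ from a single master instance: after the phase-1 block above (utilization $\omega$, all accepted at density $\ubar{p}$), append blocks of infinitesimal requests of density $p$ for $p$ increasing continuously from $\ubar{p}$ to $\bar{p}$, each block large enough that the offline optimum could fill to its capacity $\min\{f'^{-1}(p),1\}$ using that block alone. Let $\psi(p)$ be $\mathsf{ALG}$'s utilization immediately after it has processed all requests of density at most $p$. Then $\psi$ is non-decreasing (utilization only grows); $\psi(\ubar{p})=\omega$ (no further density-$\ubar{p}$ request is accepted past $\omega$); and the no-overshoot property gives $\psi(p)\le\min\{f'^{-1}(p),1\}$, hence $\psi(\bar{p})\le\bar{\rho}$ — so $\psi$ maps $[\ubar{p},\bar{p}]$ monotonically into $[\omega,\bar{\rho}]$. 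Discretising the price ramp into $\Delta$-wide steps and letting $\Delta\to 0$, the value $\mathsf{ALG}$ collects up to price $p$ equals $\ubar{p}\,\omega+\int_{\ubar{p}}^{p}\eta\,d\psi(\eta)$ (the requests accepted while utilization moves from $\psi(\eta)$ to $\psi(\eta)+d\psi$ have density $\eta$), so on the instance $\mathcal{A}_p$ obtained by stopping the master stream right after the density-$p$ block we have $S_{\textsf{online}}(\mathcal{A}_p)=\ubar{p}\,\omega+\int_{\ubar{p}}^{p}\eta\,d\psi(\eta)-f(\psi(p))$ while $S_{\textsf{offline}}(\mathcal{A}_p)=h(p)$ (the optimum keeps only the density-$p$ block and fills to $\min\{f'^{-1}(p),1\}$; mixing in lower-density blocks is net-negative at that utilization). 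Applying $S_{\textsf{online}}(\mathcal{A}_p)\ge\frac{1}{\alpha}S_{\textsf{offline}}(\mathcal{A}_p)$ for every $p\in(\ubar{p},\bar{p})$ yields precisely \eqref{ODI_psi}, and setting $\textbf{NC2}$'s function to this $\psi$ completes the proof.

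I expect the main obstacle to be the final limiting argument together with the regularity of $\psi$. A priori $\psi$ is only monotone, hence differentiable a.e.\ but possibly with jumps and a singular part, so the displayed $\int_{\ubar{p}}^{p}\eta\,\psi'(\eta)\,d\eta$ must be read as the Lebesgue--Stieltjes integral $\int_{\ubar{p}}^{p}\eta\,d\psi(\eta)$, and one has to check that the $\Delta\to 0$ discretisation of the continuum of prices converges to this integral uniformly enough that the competitiveness inequality is preserved in the limit (this is where Assumption \ref{Assumption_d_n} is essential, so that within one price step the running utilization and hence the per-request value are well approximated by the block density). A secondary but genuinely load-bearing point is making the two ``without loss of generality'' reductions fully rigorous, since the bounds $\omega\le\ubar{\rho}$ and $\psi(p)\le\bar{\rho}$ — and thus the correct boundary data for the later BVPs — rely on the no-overshoot claim. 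Everything else (the values of $S_{\textsf{offline}}$ on $\mathcal{A}_p$, and monotonicity of $\psi$) is routine given the conjugate-duality identities of Section \ref{primal_dual_offline_setting} and the worst-case construction of Section \ref{section_worst_case}.
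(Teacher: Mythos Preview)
Your approach is essentially the paper's: extract $\omega$ from the algorithm's behaviour on a pure $\ubar{p}$-stream and $\psi$ from a ramp of increasing-density blocks, with $\alpha$-competitiveness on each truncation $\mathcal{A}_p$ yielding \eqref{ODI_psi}. The one device the paper uses that you do not is to cap the total supply of the initial $\ubar{p}$-stream at exactly $\ubar{\rho}$ (it takes $K\Delta=\ubar{\rho}$), so that $\omega\le\ubar{\rho}$ is automatic and no ``no-overshoot'' reduction is needed for \textbf{NC1}; conversely, your explicit no-overshoot reduction and your Stieltjes reading of $\int\eta\,\psi'(\eta)\,d\eta$ address points---notably $\psi(\bar p)\le\bar\rho$---that the paper simply asserts.
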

\begin{proof}
	The proof of this theorem is based on constructing a resource utilization level $ \omega$ and a monotone function $ \psi $ for any $ \alpha $-competitive online algorithm under two special arrival instances.  We organize our proofs of \textbf{NC1} and \textbf{NC2} as follows. 
	
	(\textbf{Proof of NC1}) By definition,  an $ \alpha $-competitive online algorithm must achieve an online social welfare that is no less than $ \frac{1}{\alpha} $ of its offline counterpart for all possible arrival instances. To prove this, we first construct an arrival instance $ \mathcal{A}_{\ubar{p}} $ as follows: \textit{there are $ K $ agents in $ \mathcal{A}_{\ubar{p}} $ with the same valuation density  $ \ubar{p} $, and all the requests ask for resources with the same requirement $ \Delta $, where $ \Delta $ is infinitely small and $ K\Delta= \ubar{\rho} $}. 
	Based on  Eq. \eqref{def_of_rho}, the optimal offline strategy under this arrival instance is to accept all the requests and the resulting optimal social welfare is given by
	$ S_{\textsf{offline}}(\mathcal{A}_{\ubar{p}}) = K\ubar{p}\Delta - \bar{f}(K\Delta)= \ubar{p}\cdot \ubar{\rho} - \bar{f}(\ubar{\rho}) = h(\ubar{p})
	$. For any $ \alpha $-competitive online algorithm (including $ \textsf{PM}_\phi $), let us denote the final resource utilization level by $ \omega\in [0,1]$, then the online social welfare is given by
	$S_{\textsf{online}}(\mathcal{A}_{\ubar{p}}) = \ubar{p}\omega - \bar{f}(\omega) =  F_{\ubar{p}}(\omega)
	$.  Based on $ S_{\textsf{offline}}(\mathcal{A}_{\ubar{p}}) $ and $ S_{\textsf{online}}(\mathcal{A}_{\ubar{p}}) $, the online algorithm is $ \alpha $-competitive indicates that 
	$ S_{\textsf{online}}(\mathcal{A}_{\ubar{p}}) = F_{\ubar{p}}(\omega) \geq \frac{1}{\alpha}S_{\textsf{offline}}(\mathcal{A}_{\ubar{p}}) = \frac{1}{\alpha} h\big(\ubar{p}\big) $. 
	Meanwhile, $ \omega $ must be less than or equal to $ \ubar{\rho} $ since otherwise the marginal cost is larger than the valuation density $ \ubar{p} $, i.e., $ \omega\in [0,\ubar{\rho}] $. Combining these results, we can conclude that under the arrival instance $ \mathcal{A}_{\ubar{p}} $, if there is an $ \alpha $-competitive online algorithm, then this algorithm must accept all the agents so that the total resource utilization is $ \omega $. We thus complete the proof of \textbf{NC1}. 
	
	(\textbf{Proof of NC2}) Our proof of \textbf{NC2} is based on constructing a special arrival instance $ \mathcal{A}_{p} $ such that any $ \alpha $-competitive online algorithm must satisfy the inequality in Eq. \eqref{ODI_psi} in order to achieve at least  $ \frac{1}{\alpha} $ of the offline optimal social welfare in hindsight. Specifically,  for any $ p \in  [\ubar{p}, \bar{p}] $, $ \mathcal{A}_{p} $ is constructed as follows.  \textit{We assume that $ \mathcal{A}_p $ consists of two groups of arrivals.  First, there is a group of identical agents whose  valuation densities are $ \ubar{p} $ and the total resource requirements of these identical agents are $ \omega $. In the following, we refer to the identical agents in this group by \textbf{Group-}$ (\ubar{p},\omega) $. Second, let us assume for each $\eta \in  [\ubar{p}, p] $, there is a group of agents parameterized by $ \eta $, where all the agents in group $ \eta $ are identical and has a total requirement of $ h'(\eta) $ (i.e., each customer's requirement is infinitely small and the total of them is $ h'(\eta) $). Meanwhile, the valuation densities of all the agents in group $ \eta $ are the same value of $ \eta $, namely, the total valuation of all the agents in this group is given by $ v_{\eta} = \eta  h'(\eta) $. We refer to the agents in the second continuum of groups by \textbf{Groups}-$ (\ubar{p},p) $}. Recall that based on Eq. \eqref{dual_bar_f_star}, $ h' $ is given by
	\begin{align}\label{h_prime_appendix}
	h'\left(\eta\right) =
	\begin{cases}
	f'^{-1}\left(\eta\right) &\text{if } \eta\in [\ubar{c},\bar{c}],\\
	1 &\text{if } \eta\in (\bar{c},+\infty),
	\end{cases}
	\end{align}
	Therefore, $ h'(\eta) $ is the maximum units of resource that can be supplied when the marginal cost is $ \eta $ per unit. 
	
	For a given arrival instance $ \mathcal{A}_{p} $ with any $ p\in [\ubar{p}, \bar{p}] $, the social welfares achieved by the optimal offline algorithm and the $ \alpha $-competitive online algorithm are given as follows:
	\begin{itemize}
		\item \textbf{Offline}: the optimal offline result in hindsight is to allocate $ h'(p) $ units of resources to the requests in the last group, i.e., group $ p $ in \textbf{Groups}-$ (\ubar{p},p) $, and none to all the previous requests. The optimal social welfare is thus
		\begin{align*}
		S_{\textsf{offline}}(\mathcal{A}_p) = p  h'(p) - f\left(h'(p)\right) = h(p),
		\end{align*}
		where the second equality comes from the Fenchel duality of $ h $. In fact, if we substitute $ h' $ from Eq. \eqref{h_prime_appendix} into the above $ S_{\textsf{offline}}(\mathcal{A}_p) $, the equation follows.
		
		\item \textbf{Online}: for the $ \alpha $-competitive online algorithm, let $ y = \psi(\eta) $ denote the total resource utilization after processing the customers in group  $ \eta \in  [\ubar{p}, p] $, and thus $ \psi(\eta) $ represents the resources allocated to all the groups of customers in $ [\ubar{p},\eta] $.
		\textit{According to \textbf{NC1}, all the requests in \textbf{Group-}$ (\ubar{p},\omega) $ must be satisfied. Therefore, we have $ \psi(\ubar{p}) = \omega $, meaning that when $ \eta = \ubar{p} $, the total resource utilization is $ \omega $, which equals the total resource requirements of all the requests in \textbf{Group-}$ (\ubar{p},\omega) $}.
		Meanwhile, $ \psi(\eta) $ is monotonically non-decreasing in $ \eta\in [\ubar{p}, p] $ since the resource allocations are irrevocable. The social welfare achieved by this online algorithm is thus the total valuation minus the total cost, namely,
		\begin{align*}
		S_{\textsf{online}}(\mathcal{A}_{\ubar{p}})
		= \ubar{p}\omega + \int_{\psi(\ubar{p})}^{\psi(p)} \eta d(\psi(\eta)) -   
		f\big(\psi(p)\big)  = \ubar{p}\omega + \int_{\ubar{p}}^{p} \eta \psi'(\eta)d\eta - f\big(\psi(p)\big).
		\end{align*}
	\end{itemize}
	
	Based on $ S_{\textsf{offline}}(\mathcal{A}_p) $ and $ S_{\textsf{online}}(\mathcal{A}_p) $, the online algorithm is $ \alpha $-competitive means that the following  inequality
	\begin{align}\label{integral_appendix}
	\ubar{p}\omega + \int_{\ubar{p}}^{p} \eta \psi'(\eta)d\eta - f\big(\psi(p)\big)  \geq \frac{1}{\alpha} h(p)
	\end{align}
	holds for all $ p \in [\ubar{c},\bar{p}] $. According to the definition of $ \psi $ and $ \bar{\rho} $, we have  $ \psi(p)\leq \bar{\rho} $, $ \forall p\in [\ubar{c},\bar{p}] $, and thus $ \psi(\bar{p})\leq \bar{\rho} $ holds as well. Therefore, if there exists an $ \alpha $-competitive online algorithm, then there must exist a non-decreasing function $ \psi(p) $ that satisfies Eq. \eqref{ODI_psi}.
\end{proof}

\subsection{Existence of Strictly-Increasing Allocation Functions and Pricing Functions}
Based on \textbf{NC2} in Theorem \ref{necessity_alpha_competitive},  Theorem \ref{strictly_increasing_psi} below shows that it is necessary to have a strictly-increasing segment $ [\omega,\bar{\rho}] $  for $ \phi $.
\begin{theorem}\label{strictly_increasing_psi}
	Given a convex setup $ \mathcal{S} $, if there exists an $ \alpha $-competitive online algorithm, then there must exist a positive value  $ \omega\in [F_{\ubar{p}}^{-1}\big(\frac{1}{\alpha} h(\ubar{p})\big),\ubar{\rho}] $ such that the following claims hold: 
	\begin{itemize}
		\item There exists a strictly-increasing function $ \psi(p) $ that satisfies
		\begin{align}\label{BVP_psi_p}
		\begin{cases}
		\psi'(p) = \frac{1}{\alpha}\cdot \frac{h'(p)}{p - f'(\psi(p))}, p\in (\ubar{p},\bar{p}),\\
		\psi(\ubar{p}) = \omega, \psi(\bar{p})\leq \bar{\rho}.
		\end{cases}
		\end{align}
		\item There exists a strictly-increasing function $ \varphi(y) $ that satisfies
		\begin{align}\label{BVP_necessary_appendix} 
		\begin{cases}
		\varphi'(y) =  \alpha \cdot\frac{\varphi(y) - f'(y)}{h'\left(\varphi(y)\right)}, y\in (\omega,\bar{\rho}),\\
		\varphi(\omega) = \ubar{p}, \varphi(\bar{\rho})\geq \bar{p}.
		\end{cases}
		\end{align}
		\item  $ \psi(p) $ and $ \varphi(y) $ are inverse to each other, i.e.,  $ \psi = \varphi^{-1} $ or $ \varphi = \psi^{-1} $.
	\end{itemize}
\end{theorem}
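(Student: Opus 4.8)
The plan is to upgrade the weak conclusion of Theorem~\ref{necessity_alpha_competitive} --- a merely non-decreasing $\psi$ satisfying the \emph{integral} inequality \eqref{ODI_psi} --- into the strong conclusion asserted here: a strictly-increasing $\psi$, and a strictly-increasing $\varphi$, satisfying the two boundary-value problems \eqref{BVP_psi_p}--\eqref{BVP_necessary_appendix} with equality. I would first observe that the inverse relationship is automatic, so it suffices to produce one strictly-increasing $\varphi$ solving $\textsf{BVP}(\omega,\alpha)$ for a suitable $\omega$: if $\varphi$ is strictly increasing and solves \eqref{BVP_necessary_appendix}, then $\psi:=\varphi^{-1}$ is strictly increasing, and differentiating $\varphi(\psi(p))=p$ yields $\psi'(p)=1/\varphi'(\psi(p))=\tfrac1\alpha\,h'(p)/(p-f'(\psi(p)))$ together with $\psi(\ubar p)=\omega$ and $\psi(\bar p)\le\bar\rho$, which is exactly \eqref{BVP_psi_p}.

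I would fix $\omega:=F_{\ubar p}^{-1}\big(\tfrac1\alpha h(\ubar p)\big)$, the left endpoint of the admissible interval; since $F_{\ubar p}$ is strictly increasing from $0$ to $F_{\ubar p}(\ubar\rho)=h(\ubar p)$ on $[0,\ubar\rho]$, this $\omega$ is positive and lies in $[F_{\ubar p}^{-1}(\tfrac1\alpha h(\ubar p)),\ubar\rho]$, with $\omega<\ubar\rho$ whenever $\alpha>1$ (the case $\alpha=1$ is trivial, forcing $\ubar p=\bar p$ and an empty interval). Then I would let $\varphi_*$ solve the initial-value problem $\varphi_*'(y)=\alpha\,(\varphi_*(y)-f'(y))/h'(\varphi_*(y))$, $\varphi_*(\omega)=\ubar p$, and check two properties that come for free from the convex setup. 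First, $\varphi_*$ is strictly increasing on its interval of existence: $\varphi_*(\omega)-f'(\omega)=\ubar p-f'(\omega)>0$ because $\omega<\ubar\rho$, and if $\varphi_*-f'$ ever returned to $0$ from above at some $y_1$ then $\varphi_*'(y_1)=0$ while $f''(y_1)>0$ by strict convexity, so $\varphi_*<f'$ just before $y_1$ --- a contradiction. Second, $\varphi_*$ cannot blow up on $[\omega,\bar\rho]$: since $h$ is convex, $h'(\varphi_*)\ge h'(\ubar p)=\ubar\rho>0$, so $\varphi_*'\le(\alpha/\ubar\rho)\,\varphi_*$, and Gr\"onwall's inequality bounds $\varphi_*$ on every bounded interval.

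The crux is the terminal condition $\varphi_*(\bar\rho)\ge\bar p$, i.e.\ that $\varphi_*$ reaches level $\bar p$ no later than $y=\bar\rho$; this is exactly where the hypothesis ``an $\alpha$-competitive algorithm exists'' is used. From Theorem~\ref{necessity_alpha_competitive} we have a non-decreasing $\psi_0:[\ubar p,\bar p]\to[\omega_0,\bar\rho]$ with $\psi_0(\ubar p)=\omega_0\ge\omega$ obeying \eqref{ODI_psi}. Its generalized inverse $\varphi_0$ --- continuous and non-decreasing after the harmless removal of flat pieces of $\psi_0$, which only tighten \eqref{ODI_psi} --- satisfies the rewritten inequality $\ubar p\,\omega_0+\int_{\omega_0}^{y}\varphi_0(z)\,dz-f(y)\ge\tfrac1\alpha h(\varphi_0(y))$, i.e.\ the integral form \eqref{integral_version_expanding} for $\varphi_0$, and is therefore a \emph{subsolution} of the same ODE, $\varphi_0'\le\alpha(\varphi_0-f')/h'(\varphi_0)$, with $\varphi_0(\omega_0)=\ubar p$ and $\varphi_0(y_{\bar p})=\bar p$ at $y_{\bar p}=\psi_0(\bar p)\le\bar\rho$. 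Comparing the solution $\varphi_*$ --- which starts no later than $\varphi_0$, at $\omega\le\omega_0$, with $\varphi_*(\omega_0)\ge\ubar p=\varphi_0(\omega_0)$ --- against the subsolution $\varphi_0$ then gives $\varphi_*(y)\ge\varphi_0(y)$ on $[\omega_0,y_{\bar p}]$, so $\varphi_*(y_{\bar p})\ge\bar p$ and hence $\varphi_*(\bar\rho)\ge\bar p$. Thus $\varphi_*$ is a strictly-increasing solution of $\textsf{BVP}(\omega,\alpha)$, and $\psi_*:=\varphi_*^{-1}$ delivers all three bullets.

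I expect the comparison step of the previous paragraph to be the main obstacle, because the right-hand side $\alpha(u-f'(y))/h'(u)$ of the $\varphi$-ODE need not be monotone in $u$, so the textbook comparison theorem does not apply to it directly. The clean remedy --- and the reason the theorem carries the $\psi$-formulation alongside the $\varphi$-one --- is to run the comparison on the $\psi$-side, where the right-hand side $\tfrac1\alpha h'(p)/(p-f'(\psi))$ \emph{is} increasing in $\psi$, its $\psi$-derivative being $\tfrac1\alpha h'(p)f''(\psi)/(p-f'(\psi))^2>0$, so a standard monotonicity/Gr\"onwall comparison applies. One must also split according to whether the price argument lies below or above $\bar c$, since $h'$ changes from $f'^{-1}$ to the constant $1$ there; this split is precisely the origin of the three-case decomposition (\textbf{Case-1}/\textbf{2}/\textbf{3}) used in the proofs that follow.
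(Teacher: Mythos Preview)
Your approach is correct and reaches the same conclusion, but by a genuinely different route from the paper. The paper stays on the $\psi$-side throughout: it takes the $\omega$ produced by \textbf{NC1}, defines $\psi_L$ as the pointwise \emph{infimum} of all non-decreasing functions satisfying the Gr\"onwall-type integral inequality~\eqref{gronwell_initial} with $\psi(\ubar p)=\omega$, argues that this infimum must saturate the inequality everywhere and hence satisfies the integral \emph{equation}~\eqref{equality_psi_L}, and then invokes Picard--Lindel\"of to identify $\psi_L$ with the unique IVP solution~\eqref{IVP_psi_L}, which is automatically strictly increasing; only at the end does it invert to obtain $\varphi$. You instead fix $\omega$ at the left endpoint $F_{\ubar p}^{-1}(h(\ubar p)/\alpha)$, solve the $\varphi$-IVP forward from $(\omega,\ubar p)$, establish strict monotonicity and non-blow-up directly, and then verify the terminal bound $\varphi_*(\bar\rho)\ge\bar p$ by a comparison argument against a subsolution manufactured from the non-decreasing $\psi_0$ of \textbf{NC2}.

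Your route is more constructive and uses only off-the-shelf ODE tools (Picard--Lindel\"of, Gr\"onwall, monotone comparison); the paper's is more variational (minimal element of a Gr\"onwall family). Both face the same soft spot --- the passage from a merely non-decreasing, possibly non-smooth $\psi_0$ to something comparable with a classical solution --- and neither treats it with full rigor: the paper does not justify why the pointwise infimum exists and achieves equality, while your ``harmless removal of flat pieces'' needs a genuine regularization step before the differential comparison applies. Your own observation that the comparison should be run on the $\psi$-side, where $\partial_\psi\big[\tfrac1\alpha\,h'(p)/(p-f'(\psi))\big]=\tfrac1\alpha\,h'(p)f''(\psi)/(p-f'(\psi))^2>0$, is exactly right and is implicitly what makes the paper's minimal-solution argument work as well. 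One small simplification: rather than fixing $\omega$ at the left endpoint and then comparing across two different initial values ($\omega$ versus $\omega_0$), you could take $\omega=\omega_0$ from \textbf{NC1} directly, which collapses the comparison to a single step and matches the paper's choice.
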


\begin{proof}
	We organize our proofs of Eq. \eqref{BVP_psi_p} and Eq. \eqref{BVP_necessary_appendix} as follows.
	
	{\color{black}	 
		(\textbf{Proof of Eq. \eqref{BVP_psi_p}}) We first prove the first bullet of Theorem \ref{strictly_increasing_psi}, namely, there exists a strictly-increasing function $ \psi(p) $ that satisfies Eq. \eqref{BVP_psi_p} as long as there exists an $ \alpha $-competitive online algorithm. Note that when $ \psi(\ubar{p}) = \omega$, we can derive based on the left-hand-side of Eq. \eqref{integral_appendix} that
		\begin{align*} 
		\ubar{p}\omega+
		\int_{\ubar{p}}^{p}\eta \psi'(\eta) d\eta - f\big(\psi(p)\big) = p\psi(p) - \int_{\ubar{p}}^{p} \psi(\eta)d\eta - f\big(\psi(p)\big). 
		\end{align*}
		Based on Theorem \ref{necessity_alpha_competitive} and the above equivalent transformation, we know that there always exists a non-decreasing function $ \psi(p) $ that satisfies the following inequality
		\begin{align}\label{gronwell_initial}
		\psi(p) \geq \frac{h(p) +\alpha f(\psi(\ubar{p}))}{\alpha p} +  \frac{1}{p}\int_{\ubar{p}}^{p}\Big(\psi(\eta)+f'\big(\psi(\eta)\big)\Big) d\eta 
		\end{align}
		and the boundary conditions: $ \{\psi\big(\ubar{p}\big) = \omega, \psi\big(\bar{p}\big)\leq \bar{\rho}\} $. For ease of exposition, let us write the right-hand-side of Eq. \eqref{gronwell_initial} in a more compact form as follows:
		\begin{align}\label{gronwall_inequality}
		\psi(p) \geq A(p) +  B(p)\int_{\ubar{p}}^{p} Q\big(\psi(\eta)\big) d\eta, p\in [\ubar{p},\bar{p}],
		\end{align}
		where $ A(p) $, $ B(p) $, and $ Q(\psi(\eta)) $ are given by
		\begin{align*}
		& A(p) = \frac{h(p) +\alpha f(\psi(\ubar{p}))}{\alpha p},\\
		& B(p) = \frac{1}{p},\\
		& Q(\psi(\eta)) = \psi(\eta)+f'(\psi(\eta)).
		\end{align*}
		
		It can be observed that Eq. \eqref{gronwall_inequality} is a nonlinear generalization of the standard Gronwell inequality \cite{Gronwell}. Thus, there must exist a tightest lower bound function $ \psi_L(p) $ such that
		\begin{align*}
		\psi(p) \geq \psi_L(p), p\in [\ubar{p},\bar{p}],
		\end{align*}
		namely, $ \psi_L $ is defined to be the tightest lower bound of all the $ \psi $'s that satisfy Eq. \eqref{gronwall_inequality} with $ \{\psi\big(\ubar{p}\big) = \omega, \psi\big(\bar{p}\big)\leq \bar{\rho}\} $. We emphasize that the function $ \psi_L $ is well-defined as there exists at least one non-decreasing function $ \psi(p) $  that satisfies Eq. \eqref{gronwall_inequality} with $ \{\psi\big(\ubar{p}\big) = \omega, \psi\big(\bar{p}\big)\leq \bar{\rho}\} $. Meanwhile, when $ Q $ is in some special forms (e.g., affine), $ \psi_L $ can be analytically given based on $ A, B $, and $ Q $ \cite{Gronwell}. Here, the nonlinearity of $ Q $ makes it difficult (if not impossible) to derive the exact form of $ \psi_L $. To aid our following proof, let us  define $ \Omega_\psi $ by
		\begin{align*}
		\Omega_\psi \triangleq \big\{\psi \big| \psi \text{ is a strictly-increasing function that satisfies Eq. \eqref{BVP_psi_p}}\big\}.
		\end{align*}
		The first bullet in Theorem \ref{strictly_increasing_psi} argues that $ \Omega_\psi $ is well-defined and non-empty. To prove this, in what follows  we show that $ \psi_L \in \Omega_\psi$, namely, $ \psi_L $ is a strictly-increasing function that satisfies Eq. \eqref{gronwall_inequality} by equality and $ \{\psi_L\big(\ubar{p}\big) = \omega, \psi_L\big(\bar{p}\big)\leq \bar{\rho}\} $. 
		
		Recall that $ \psi_L $ is defined to be the tightest lower bound of all the $ \psi $'s that satisfy Eq. \eqref{gronwall_inequality} with $ \{\psi\big(\ubar{p}\big) = \omega, \psi\big(\bar{p}\big)\leq \bar{\rho}\} $, and thus $ \psi_L(\bar{p}) \leq \bar{\rho} $ always holds. On the other hand, if $ \psi_L(\bar{p}) \leq \bar{\rho} $ does not hold, then it is obvious that there will be no solution to Eq. \eqref{ODI_psi} since $ \psi(p) \geq \psi_L(p) > \bar{\rho} $ holds for all $ p\in [\ubar{p},\bar{p}] $, which thus contradicts the necessary condition \textbf{NC2} given by Theorem \ref{necessity_alpha_competitive}. 
		For this reason, we only need to prove that $ \psi_L(p) $ is strictly-increasing in $ p\in [\ubar{p}, \bar{p}] $ and $ \psi_L(\ubar{p}) = \omega $. In fact, the definition of $ \psi_L $ indicates that the following equation of $ \psi_L(p) $ holds for all $ p\in [\ubar{p}, \bar{p}] $:
		\begin{align}\label{equality_psi_L} 
		\psi_L(p) = A(p) +  B(p)\int_{\ubar{p}}^{p} Q\Big(\psi_L(\eta)\Big) d\eta.
		\end{align}
		This is because,  if there exists any $ \hat{p}\in[\ubar{p}, \bar{p}] $ such that Eq. \eqref{equality_psi_L} does not hold at $ p = \hat{p} $, then we can always lower the value of $ \psi_L(\hat{p}) $ to reach the equality and make the smaller value as the new output of $ \psi_L(\hat{p}) $. Based on Eq. \eqref{equality_psi_L}, we can prove  that $ \psi_L(p) $ is strictly-increasing in $ p\in [\ubar{p}, \bar{p}] $. Our proof is based on contradiction and the details are as follows. As illustrated by the dashed line in Fig. \ref{fig_construction_psi}, if we assume that $  \psi_L $ is not strictly monotone in $ p\in [\ubar{p}, \bar{p}]  $, then there must exist a point, say $ p_0 \in [\ubar{p}, \bar{p}] $, so that $ \psi_L'(p_0) = 0 $. Recall that $ h'(p)>0 $ holds for all $ p \in [\ubar{p}, \bar{p}] $ as the conjugate function $ h(p) $ is strictly-convex in $ p\in [\ubar{p}, \bar{p}] $. Thus, we can derive from Eq. \eqref{equality_psi_L} that
		\begin{align*}
		\psi_L'(p) = \frac{1}{\alpha}\cdot \frac{h'(p)}{p - f'(\psi_L(p))} \neq 0,  p \in [\ubar{p}, \bar{p}],
		\end{align*}
		which contradicts $ \psi_L'(p_0) = 0 $. Therefore, $ \psi_L $ is either strictly-increasing or strictly-decreasing. Note that our design of $ \omega\in [F_{\ubar{p}}^{-1}\big(\frac{1}{\alpha} h(\ubar{p})\big),\ubar{\rho}] $ guarantees that $ \ubar{p} - f'(\omega) > 0$. Therefore, based on the Picard–Lindel\"{o}f theorem  \cite{ODE1973, Perko2001}, there always exists a unique strictly-increasing function $ \psi_L $ that satisfies the following first-order initial value problem (IVP)\footnote{\color{black}We give a brief overview of first-order IVPs in Appendix \ref{background_BVP_IVP} and Appendix \ref{basic_lemmas}.}:
		\begin{align}\label{IVP_psi_L}
		\begin{cases}
		\psi_L'(p) = \frac{1}{\alpha}\cdot \frac{h'(p)}{p - f'(\psi_L(p))} \neq 0,  p \in (\ubar{p}, \bar{p}),\\
		\psi_L(\ubar{p}) =\omega.
		\end{cases}
		\end{align}
		As a result, $ \psi_L $ must be strictly-increasing.  Recall that $ \psi_L(\bar{p})\leq \bar{\rho} $ always holds, together with Eq. \eqref{IVP_psi_L},  we can conclude that $ \psi_L\in \Omega_\psi $, which means that $ \Omega_\psi $ is well-defined and non-empty. Thus, we complete the proof of Eq. \eqref{BVP_psi_p}.
		
		\begin{figure}
			\centering
			\includegraphics[height=6cm]{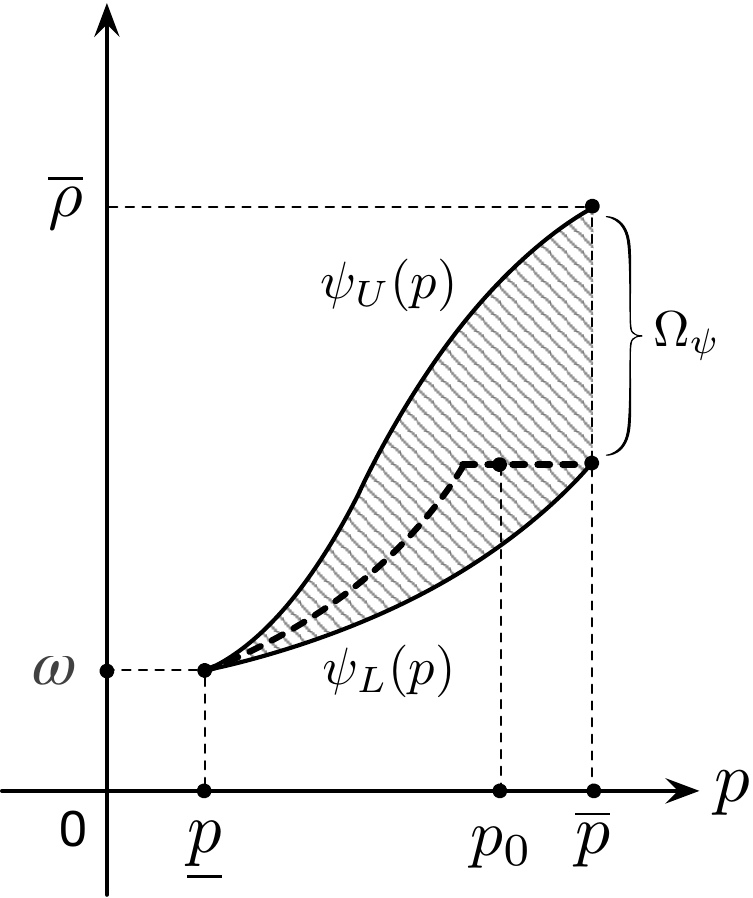}
			\caption{\color{black} Illustration of $ \Omega_\psi$ for given $ \ubar{p} $, $ \bar{p} $, $ \omega $, and $ \ubar{\rho} $. In the figure, $ \psi_L $ and $ \psi_U $ denote the tightest lower and upper boundaries of $ \Omega_\psi $, respectively. }
			\label{fig_construction_psi}
		\end{figure}
		
		\begin{remark}\color{black}
			Our above proof shows that the tightest lower bound function $ \psi_L(p) $ is strictly-increasing in $ p\in [\ubar{p},\bar{p}] $ and satisfies Eq. \eqref{BVP_psi_p}. If we define similarly that $ \psi_U $ is the tightest upper bound of all the $ \psi $'s that satisfy Eq. \eqref{gronwall_inequality} with $ \{\psi\big(\ubar{p}\big) = \omega, \psi\big(\bar{p}\big)\leq \bar{\rho}\} $, then we can prove a stronger result: for any given $ \alpha\geq 1 $, if there exists an $ \alpha $-competitive online algorithm,  then there exist infinitely-many strictly-increasing functions that satisfy Eq. \eqref{BVP_psi_p}, unless $ \psi_L = \psi_U $ since in this case $ \Omega_\psi $ is a singleton, i.e., $ \Omega_\psi = \{\psi_L\} $.  
			Geometrically, when $ \alpha $ is larger, the shaded area in Fig. \ref{fig_construction_psi} will be expanded as $ \psi_U(\bar{p}) -  \psi_L(\bar{p}) $ becomes larger. In contrast, a smaller $ \alpha $ indicates that $ \psi_U $ and $ \psi_L $ are closer to each other, i.e., $ \psi_U(\bar{p}) -  \psi_L(\bar{p}) $ is smaller, and in the extreme case when $ \psi_U = \psi_L $, the corresponding value of $ \alpha $ is the minimum competitive ratio that any online algorithm could possibly achieve.  For the detailed discussion of the optimal/minimum competitive ratio, please refer to the proof of Theorem \ref{major_results} in Section \ref{proof_of_major_results}. 
		\end{remark}
	}
	
	(\textbf{Proof of Eq. \eqref{BVP_necessary_appendix}}) Now we are ready to prove Eq. \eqref{BVP_necessary_appendix}.     
	Since $ \psi(p) $ is strictly-increasing, the inverse of $ \psi(p) $ is well-defined. Let us denote the inverse of $y =  \psi(p) $  by $ p = \psi^{-1}(y) \triangleq \varphi(y) $. The derivative of $ \psi(p) = \varphi^{-1}(p) $ can be written as 
	\begin{align}\label{psi_p_prime}
	\psi'(p) = \frac{1}{\varphi'\left(\varphi^{-1}(p)\right)} = \frac{1}{\varphi'(y)}.
	\end{align} 
	Substituting Eq. \eqref{psi_p_prime} into Eq. \eqref{BVP_psi_p} leads to the following ODE
	\begin{align*}
	\frac{1}{\varphi'(y)}
	=  \frac{1}{\alpha}\cdot \frac{h'(\varphi(y))}{\varphi(y)-f'(y)}.
	\end{align*}
	Therefore, we have the following ODE in terms of $ \varphi $:
	\begin{align*}
	\varphi'(y) =  \alpha \cdot\frac{\varphi(y) - f'(y)}{h'\big(\varphi(y)\big)}, y\in (\omega,\bar{\rho}),
	\end{align*}
	which is exactly the ODE in Eq. \eqref{BVP_necessary_appendix}. 
	Based on the two boundary conditions of $ \psi(p) $ given in Eq. \eqref{ODI_psi}, the BVP in Eq. \eqref{BVP_necessary_appendix} directly follows. Based on the above proofs of Eq. \eqref{BVP_psi_p} and Eq. \eqref{BVP_necessary_appendix}, we thus complete the proof of Theorem \ref{strictly_increasing_psi}. 
\end{proof}

Based on \textbf{NC1} in Theorem \ref{necessity_alpha_competitive} and the two BVPs in Theorem \ref{strictly_increasing_psi}, Theorem \ref{necessity} follows.

\section{Proof of Corollary \ref{major_results_log}}\label{proof_of_major_results_appendix}
The proof is simple and intuitive by following our previous analysis of the ODE with boundary conditions. 

(\textbf{Zero Cost}: $ q = 0 $) When $ f(y) = 0 $, i.e., there is no supply cost, the extended cost function is given as follows:
\begin{align*}
\bar{f}(y) = 
\begin{cases}
0 & \text{if } y\in [0, 1],\\
+\infty  &  \text{if } y\in (1,+\infty),
\end{cases}
\end{align*}
Recall that the conjugate function $ h(p) $ is given by 
$ h(p) = \max_{y \geq 0}\ p y    - \bar{f}(y) = \max_{y\geq 0}\ F_p(y) $,
which indicates that $ h(p) 
= p, \forall p\in [0,+\infty) $.
Therefore, the BVP given by Eq. \eqref{BVP_necessary_varphi} can be written as follows:
\begin{align*}
\begin{cases}
\varphi'(y) =  \alpha\cdot \varphi(y), y\in (\omega,\bar{\rho}),\\
\varphi(\omega) = \ubar{p}, \varphi(1)\geq \bar{p}.
\end{cases}
\end{align*}
Based on Eq. \eqref{flat_necessary_omega}, we know that $ \omega\geq \frac{1}{\alpha} $.  
Solving the above ODE leads to the following solution
\begin{align*}
\varphi(y) = \delta \cdot \exp\left(\alpha y\right),
\end{align*}
where $ \delta $ can be any real scalar. Based on the two boundary conditions, to minimize $ \alpha $, we set
\begin{align*}
\delta\exp\left(\alpha \omega\right) = \ubar{p}, \delta \exp\left(\alpha \right) = \bar{p}, \alpha = \frac{1}{\omega}. 
\end{align*}
Solving the above equations leads to $ \delta = \ubar{p}/e $, where $ e $ denotes the base of the natural logarithm. Therefore, the final optimal pricing function is
\begin{align*}
\phi_*(y) = 
\begin{cases}
\ubar{p} &\text{if } y\in [0,\omega_*),\\
\ubar{p}\cdot\exp\left(y/\omega_* - 1\right)  &\text{if } y\in [\omega_*, 1],\\
+\infty &\text{if } y\in (1,+\infty),
\end{cases}
\end{align*}
where $ \omega_* $ and $ \alpha_* $ are given by
\begin{align*}
\alpha_* = 1 + \ln\left(\frac{\bar{p}}{\ubar{p}}\right), \omega_* = \frac{1}{1 + \ln\left(\bar{p}/\ubar{p}\right)}.
\end{align*}

(\textbf{Linear Cost}: $ q> 0$) 
When $ f'(y) = q $ and $ q>0 $, namely $ f(y) = qy $ when $ y\in [0,1] $. The conjugate function $ h(p) $ is given by
\begin{align*}
h(p) =
\begin{cases}
0  &\text{if } p\in [0,q],\\
p -q   &\text{if } p\in (q,+\infty).
\end{cases}
\end{align*}

Based on Eq. \eqref{flat_necessary_omega},  we  have $ \ubar{p} \omega -f(\omega) \geq \frac{1}{\alpha} \big(\ubar{p}-f(1)\big) $, 
which thus indicates that $ \omega\geq \frac{1}{\alpha} $. The differential equation of interest is thus  given by
\begin{align*}
\begin{cases}
\varphi(y) - f'(y) 
\geq \frac{1}{\alpha} \cdot   \varphi'(y)\cdot h'\left(\varphi(y)\right), \forall y\in (\omega,1),\\
\varphi(\omega) = \ubar{p},\varphi(1) = \bar{p},
\end{cases}
\end{align*}
which leads to the following general solution: $ \varphi(y) = \delta \cdot\exp\left(\alpha y\right)+q $. Substituting the two boundary conditions into the general solution
leads to $ \varphi(\omega) = \delta\cdot \exp(\alpha \omega) +q  = \ubar{p}, \varphi(1) = \delta e^{\alpha} + q  = \bar{p} $.
To minimize $ \alpha $, it suffices to set  $ \omega =  1/\alpha  $ and thus we have
\begin{align*}
\phi_*(y) = 
\begin{cases}
\ubar{p} &\text{if } y\in [0,\omega_*),\\
(\ubar{p}-q)\cdot\exp\left(y/\omega_* - 1\right) + q &\text{if } y\in [\omega_*, 1],\\
+\infty &\text{if } y\in (1,+\infty),
\end{cases}
\end{align*}
where $ \omega_* $ and $ \alpha_* $ are given by
\begin{align*}
\alpha_* = 1+ \ln\left(\frac{\bar{p}-q}{\ubar{p}-q}\right),\omega_* = \frac{1}{1+ \ln\left(\frac{\bar{p}-q}{\ubar{p}-q}\right)}.
\end{align*}

Combining the results regarding $ q = 0 $ and $ q>0 $, we complete the proof of Corollary \ref{major_results_log}.

\section{Proof of Corollary \ref{main_result_any}}\label{proof_of_main_result_any}
Based on Theorem \ref{sufficiency} and Eq. \eqref{alpha_larger_than_a_ratio}, the worst-case competitive ratio of $ \textsf{PM}_\phi $ satisfies:
\begin{itemize}
	\item When the final resource utilization $ \rho\in [0, \omega] $, the worst-case competitive ratio is
	\begin{align*}
	\alpha_1 =     \frac{h\big(\ubar{p}\big)}{F_{\ubar{p}}(\omega)}.  
	\end{align*}
	This corresponds to the arrival instance that all agents have the minimum valuation density $ \ubar{p} $. 
	\item When the final resource utilization $\rho =  \bar{\rho}_\varphi $, the worst-case competitive ratio is
	\begin{align*}
	\alpha_2 =   \frac{h(\bar{p})}{\ubar{p}\omega+\int_{\omega}^{\bar{\rho}_\varphi}\varphi(y) dy - f\left(\bar{\rho}_{\varphi}\right)}.
	\end{align*}
	This corresponds to the worst-case scenario when the first group of agents  are all satisfied by $ \textsf{PM}_\phi $ and all of them obtain zero utilities (i.e., the valuation equals the payment), while the offline optimal social welfare is to satisfy all the agents in the second group whose valuation densities are all $ \bar{p} $. 
	
	\item When the final resource utilization $ \rho\in [\omega,\bar{\rho}_\varphi) $, the worst-case competitive ratio is
	\begin{align*}
	\alpha_3 = \max_{\rho\in [\omega,\bar{\rho}_\varphi)}\  \frac{h(\varphi(\rho))}{\ubar{p}\omega+\int_{\omega}^{\rho}\varphi(y) dy - f\left(\rho\right)},
	\end{align*}
	which comes from Eq. \eqref{alpha_larger_than_a_ratio}. The rationality of $ \alpha_3 $ is based on analyzing the competitive ratio of  $ \textsf{PM}_\phi $ under a series of worst-case scenarios constructed in Section \ref{section_worst_case}.
\end{itemize}

Combining the above three cases, we have
\begin{align*}
\alpha(\omega,\varphi) =\ & \max\{\alpha_1,\alpha_2,\alpha_3\} \\
=\ & \max\Bigg\{\frac{h\big(\ubar{p}\big)}{F_{\ubar{p}}(\omega)}, \frac{h(\bar{p})}{\ubar{p}\omega+\int_{\omega}^{\bar{\rho}_\varphi}\varphi(y) dy -  f\left(\bar{\rho}_{\varphi}\right)},  \max_{\rho\in [\omega,\bar{\rho}_{\varphi}]}\  \frac{h\big(\varphi(\rho)\big)}{\ubar{p}\omega+\int_{\omega}^{\rho}\varphi(y) dy - f\left(\rho\right)} \Bigg\}.
\end{align*}
We thus complete the proof of Corollary \ref{main_result_any}. Note that when $ \varphi(1)\geq \bar{p} $, the second term (i.e., $ \alpha_2 $) is not needed as it is contained by  $ \alpha_3 $.

\section{Proof of Proposition \ref{two_lower_bounds_case_1}}\label{proof_of_two_lower_bounds_case_1}
\subsection{Background of First-Order Two-Point BVPs}\label{background_BVP_IVP}
For a better reference, we revisit the two BVPs in Corollary \ref{theorem_case_1_two_BVP} as follows:
\begin{align}\label{BVP_1_case_1_appendix}
\normalfont\textsf{BVP}_1(\omega,u,\alpha) 
\begin{cases}
\varphi_1'(y) = \alpha\cdot \frac{\varphi_1(y)  - f'(y) }{f'^{-1}\left(\varphi_1(y)\right)}, y\in(\omega,u);\\
\varphi_1(\omega) = \ubar{p}, \varphi_1(u) =\bar{c},	
\end{cases}
\end{align}
\begin{align}\label{BVP_2_case_1_appendix}
\normalfont\textsf{BVP}_2(u,\alpha)
\begin{cases}
\varphi_2'(y) = \alpha\cdot \left(\varphi_2(y) - f'(y) \right),  y\in (u,1);\\
\varphi_2(u) = \bar{c},  \varphi_2(1) \geq \bar{p}.
\end{cases}
\end{align}
A solution to the above first-order two-point BVPs is a function $ \varphi_1(y) $ or $ \varphi_2(y) $ that satisfies the corresponding ODE and also the two boundary conditions.

In the field of differential equations, when there is only one additional condition other than the ODE, the resulting problem is an initial value problem (IVP). For example, the following IVP is derived from $ \textsf{BVP}_1(\omega,u,\alpha) $ by excluding the first boundary condition:
\begin{align}\label{IVP_1_case_1_appendix}
\normalfont\textsf{IVP}_1(\omega,u,\alpha) 
\begin{cases}
\varphi_1'(y) = \alpha\cdot \frac{\varphi_1(y)  - f'(y) }{f'^{-1}\left(\varphi_1(y)\right)}, y\in(\omega,u);\\
\varphi_1(u) =\bar{c},	
\end{cases}
\end{align}
where $ \varphi_1(u) =\bar{c} $ is usually termed as the initial condition. We denote the solution to the above IVP by $ \varphi_{\mathrm{ivp}}\left(y; \omega, u, \alpha\right)  $ (if it exists). Meanwhile, we denote the first derivative of  $ \varphi_{\mathrm{ivp}}\left(y; \omega, u, \alpha\right)  $ w.r.t $ y\in (\omega,u) $ by  $ \varphi_{\mathrm{ivp}}'\left(y; \omega, u, \alpha\right)  $.

The key to the analysis of IVPs and BVPs is the existence and uniqueness of solutions. For first-order IVPs, the existence and uniqueness theorem is well understood. In particular,  the Picard–Lindel\"{o}f theorem  \cite{ODE1973, Perko2001} guarantees the unique exsitence of solutions as long as the ODE satisfies a certain Lipschitz continuity conditions. Meanwhile, there are plenty of iterative  methods off-the-shelf that can solve IVPs numerically.  However,  for BVPs, there is no general uniqueness and existence theorem. As argued by \cite{Perko2001}, it is even non-trivial to obtain numerical solutions for some BVPs in the most basic two-point case as Eq. \eqref{BVP_1_case_1_appendix} and Eq. \eqref{BVP_2_case_1_appendix}.  

Our optimal pricing function design in Theorem \ref{major_results} relies on the uniqueness and existence property of first-order two-point BVPs. On the one hand, based on the existence and uniqueness property, we can prove that there always exists a competitive pricing function as long as the competitive ratio $ \alpha $ satisfies a certain condition. On the other hand, we can prove that there exists no other online algorithms that  can outperform $ \textsf{PM}_{\phi} $ with a smaller competitive ratio since the corresponding BVP has no solution.

\subsection{Basic Lemmas}\label{basic_lemmas}
Our pricing function design is related to the existence, uniqueness  and monotonicity property of solutions to $ \textsf{IVP}_1(\omega, u,  \alpha) $.  Note that to simplify the analysis below, we temporarily assume that $ \omega\in [0,\ubar{\rho}] $ is a given parameter. Later in our proofs of the existence of $ \Gamma_1(\omega,u) $ and $ \Gamma_2(u) $, we will include the discussions of $ \omega $ as a design variable.

\begin{lemma}\label{existence_uniquess}
	For each $ (\alpha, u)\in [1,+\infty)\times (\omega,1] $,  $\normalfont\textsf{IVP}_1(\omega, u,  \alpha) $ has a unique solution  $ \varphi_{\mathrm{ivp}}\left(y; \omega, u, \alpha\right)  $ that is defined over $y\in [\omega,u] $.
\end{lemma}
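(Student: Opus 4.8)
The plan is to recognize that $\textsf{IVP}_1(\omega,u,\alpha)$ is a first-order initial value problem and invoke the Picard--Lindel\"{o}f theorem, after checking that the right-hand side of the ODE is well-behaved (continuous, and locally Lipschitz in the dependent variable) on the relevant region. Write the ODE as $\varphi_1'(y) = G(y,\varphi_1)$ where
\[
G(y,\varphi) = \alpha\cdot\frac{\varphi - f'(y)}{f'^{-1}(\varphi)}.
\]
The initial condition is posed at the right endpoint, $\varphi_1(u)=\bar{c}$, so I would solve backward from $y=u$ toward $y=\omega$ (equivalently, substitute $\tilde{y}=u-y$ to get a standard forward IVP). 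The main things to verify are: (i) the denominator $f'^{-1}(\varphi)$ stays bounded away from $0$ along the solution, so $G$ is well-defined and smooth there; (ii) the solution, once it exists locally, does not blow up or leave the admissible range before reaching $y=\omega$, so that it is in fact defined on all of $[\omega,u]$.

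First I would pin down the region. Since $\mathcal{S}$ is a convex setup in \textbf{Case-1}, $f$ is strictly convex and differentiable with $f'(0)=\ubar{c}>0$ (by the ``nice setup'' hypothesis $f'(0)<\ubar{p}$, but crucially $f'(0)=\ubar c$ and we are in the regime $\ubar c<\ubar p<\bar c$, so $\ubar c$ may be $0$; however $\varphi$ will be bounded below by $\ubar{p}>\ubar c\ge 0$ along the solution). The initial value is $\varphi_1(u)=\bar{c}=f'(1)$, so $f'^{-1}(\varphi_1(u))=1>0$. Because $f'$ is a strictly increasing continuous bijection from $[0,1]$ onto $[\ubar c,\bar c]$, on any compact subinterval $[\ubar p - \delta,\bar c]$ of $(\ubar c,\bar c]$ the map $f'^{-1}$ is continuous, strictly increasing, and bounded below by a positive constant; hence $G$ is continuous and (since $f'^{-1}$ and $f'$ are $C^1$ there, as $f$ is smooth enough) locally Lipschitz in $\varphi$ on this region. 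Picard--Lindel\"{o}f then gives a unique local solution emanating backward from $(u,\bar c)$.

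Next I would argue the solution extends to the full interval $[\omega,u]$. Observe from the ODE that at the initial point $\varphi_1'(u) = \alpha(\bar c - f'(u))/1 \ge 0$ with equality only if $u=1$; more generally, on the part where $\varphi_1(y)\ge f'(y)$ the derivative is nonnegative, so moving backward (decreasing $y$) the solution is non-increasing, hence stays below $\bar c$; and it stays above $\ubar p$ on $[\omega,u]$ precisely when the boundary value $\varphi_1(\omega)=\ubar p$ is attained---but here we only need it to stay in the region where $G$ is regular, namely $\varphi_1(y)>\ubar c$ (so that $f'^{-1}(\varphi_1)$ is defined and positive) and $\varphi_1(y)\le \bar c$. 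A monotonicity/comparison argument shows $\varphi_1$ is trapped in a compact region $[c_0,\bar c]$ with $c_0>\ubar c$ as long as it is defined on a subinterval of $[\omega,u]$, so by the standard continuation theorem for ODEs the solution cannot escape and is therefore defined on all of $[\omega,u]$. Uniqueness on $[\omega,u]$ follows from uniqueness of the local solution plus the extension being unique. I expect the only mildly delicate point to be confirming that $\varphi_1$ does not decrease down to (or below) $\ubar c$ before $y$ reaches $\omega$ when $\omega$ is small---this is handled by noting that along the solution $\varphi_1(y)-f'(y)$ controls the sign of $\varphi_1'$, so $\varphi_1$ can never drop below $\min_{y}f'(y)=f'(0)=\ubar c$ with a strict gap coming from $\varphi_1(u)=\bar c$ and the finiteness of the interval; combined with the fact that any admissible $\omega$ in our later use satisfies $\omega\le\ubar\rho$ where $f'(\omega)\le\ubar p$, the trajectory remains in the good region. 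Everything else is a routine application of the Picard--Lindel\"{o}f and continuation theorems cited in Appendix~\ref{background_BVP_IVP}.
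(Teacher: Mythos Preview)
Your approach is essentially the same as the paper's: invoke the Picard--Lindel\"{o}f theorem for local existence and uniqueness, then use the standard continuation/extension argument to push the solution across the whole interval $[\omega,u]$. The paper's own proof is in fact considerably terser than yours---it simply cites Picard--Lindel\"{o}f and asserts that the local solution ``extends to the whole region of $y\in[\omega,u]$'' without spelling out the Lipschitz verification or the trapping argument you sketch; the lower-bound $\varphi_1(y)\ge f'(y)$ you allude to is deferred to the subsequent Lemma~\ref{monotonicity_omega}.
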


Lemma \ref{existence_uniquess} follows  the  Picard–Lindel\"{o}f theorem for the existence and uniqueness of solutions to IVPs.  We refer the details to  \cite{uniqueness_book1993, ODE1973, Perko2001}. Basically the Picard–Lindel\"{o}f theorem guarantees that there always exists a unique solution to $ \textsf{IVP}_1(\omega, u,  \alpha)  $, defined on a small neighbourhood of the initial point $ \varphi(u) = \bar{c} $, as long as the ODE in $ \textsf{IVP}_1(\omega, u,  \alpha)  $ is Lipschitz continuous within that neighbourhood. Moreover, this unique solution extends to the whole region of $ y \in [\omega,u] $. Based on Lemma \ref{existence_uniquess}, we can prove the following Lemma \ref{monotonicity_omega} and Lemma \ref{monotonicity_alpha}.

\begin{lemma}\label{monotonicity_omega}
	Given $ \alpha\geq 1 $, $\varphi_{\mathrm{ivp}}\left(y; \omega, u, \alpha\right)$ is increasing in $ y\in[\omega,u] $ and lower bounded by $ f'(y) $ for all $ y \in  [\omega,u] $.
\end{lemma}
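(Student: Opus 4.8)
The plan is to analyze the sign of $\varphi_{\mathrm{ivp}}'(y;\omega,u,\alpha)$ directly from the ODE in $\textsf{IVP}_1(\omega,u,\alpha)$, namely $\varphi_1'(y) = \alpha\cdot\frac{\varphi_1(y)-f'(y)}{f'^{-1}(\varphi_1(y))}$. Since $\alpha\geq 1>0$ and $f'^{-1}(\varphi_1(y))>0$ (the argument $\varphi_1(y)$ lies in $[\ubar c,\bar c]$ on the relevant range, where $f'^{-1}$ takes values in $[0,1]$ and is positive away from $0$), the sign of $\varphi_1'(y)$ is exactly the sign of $\varphi_1(y)-f'(y)$. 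So it suffices to prove the single claim that $\varphi_{\mathrm{ivp}}(y;\omega,u,\alpha)\geq f'(y)$ for all $y\in[\omega,u]$; monotonicity (in fact strict increase wherever the inequality is strict) then follows immediately, and at the endpoint $y=u$ we have $\varphi_{\mathrm{ivp}}(u)=\bar c=f'(1)\geq f'(u)$ by convexity of $f$, consistent with the bound.

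First I would set $g(y) \triangleq \varphi_{\mathrm{ivp}}(y;\omega,u,\alpha) - f'(y)$ and argue by contradiction: suppose $g(y_0)<0$ for some $y_0\in[\omega,u)$. Since $g(u)=\bar c - f'(u)\geq 0$ (strict if $u<1$; if $u=1$ this is $0$), and $g$ is continuous, let $y_1 = \inf\{y\in[y_0,u] : g(y)\geq 0\}$, so $g(y_1)=0$ and $g(y)<0$ on $[y_0,y_1)$. On $[y_0,y_1)$ we have $\varphi_1'(y) = \alpha\, g(y)/f'^{-1}(\varphi_1(y)) < 0$, so $\varphi_{\mathrm{ivp}}$ is strictly decreasing there; meanwhile $f'$ is non-decreasing (convexity of $f$), hence $g = \varphi_{\mathrm{ivp}} - f'$ is strictly decreasing on $[y_0,y_1)$. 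But then $g(y_1) < g(y_0) < 0$, contradicting $g(y_1)=0$. Therefore $g(y)\geq 0$ on all of $[\omega,u]$, i.e. $\varphi_{\mathrm{ivp}}(y;\omega,u,\alpha)\geq f'(y)$; plugging back into the ODE gives $\varphi_1'(y)\geq 0$, so $\varphi_{\mathrm{ivp}}$ is increasing in $y$.

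The main obstacle I anticipate is a technical one at the boundary of the domain on which $f'^{-1}$ is the correct expression for $h'$: one must make sure that the solution $\varphi_{\mathrm{ivp}}$, which by Lemma \ref{existence_uniquess} is defined on all of $[\omega,u]$, actually stays in the range $[\ubar c,\bar c]$ where $f'^{-1}(\varphi_1(y))$ is well-defined and positive (recall $\ubar c = f'(0)$ could be $0$, making the denominator vanish only at the left end, and $y=0$ is excluded since $\omega>0$ in the relevant regime). Once $\varphi_{\mathrm{ivp}}\geq f'(y)\geq f'(\omega)>0$ — wait, more carefully: the lower bound $\varphi_{\mathrm{ivp}}\geq f'(y)$ combined with the initial condition $\varphi_{\mathrm{ivp}}(u)=\bar c$ and monotonicity pins $\varphi_{\mathrm{ivp}}$ into $[f'(\omega),\bar c]\subseteq[\ubar c,\bar c]$, so the expression $f'^{-1}(\varphi_{\mathrm{ivp}}(y))$ is legitimate throughout and bounded below by $f'^{-1}(f'(\omega))=\omega>0$, which retroactively justifies the sign manipulation. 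A clean write-up would introduce this range containment as a preliminary observation (or fold it into the contradiction argument by noting the solution cannot exit $[\ubar c,\bar c]$ without first crossing $f'(y)$ from above, which the argument above rules out), and then carry out the contradiction step as described.
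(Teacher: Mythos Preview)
Your proposal is correct and follows essentially the same approach as the paper: both argue by contradiction from the ODE, showing that once $\varphi_{\mathrm{ivp}}$ dips below $f'$ it is forced strictly further below (since $\varphi'<0$ while $f'$ increases), contradicting the eventual return to $\varphi_{\mathrm{ivp}}\ge f'$ at $y=u$. The only cosmetic difference is directional---the paper takes the rightmost crossing point $y_0$ and looks in a left neighborhood $(y_0-\epsilon,y_0)$, whereas you fix a violation point $y_0$ and look forward to the first recovery point $y_1$---and you are somewhat more explicit than the paper about the domain/positivity issue for $f'^{-1}(\varphi_1)$, which the paper leaves implicit.
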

\begin{proof}
	We first prove that $  \varphi_{\mathrm{ivp}}\left(y; \omega, u, \alpha\right) \geq  f'(y) $ holds for all $ y\in [\omega,u] $. Note that when $ y = u\in (0,1) $, we have $  \varphi_{\mathrm{ivp}}(u;\omega,u, \alpha) >  f'(u) $ and $ \varphi_{\mathrm{ivp}}'(u;\omega,u, \alpha) >0 $. Therefore, if $  \varphi_{\mathrm{ivp}}\left(y; \omega, u, \alpha\right)\geq f'(y) $ does not hold for all $ y\in [\omega,u] $, then there must exists at least one point within $ (\omega,u) $, say $ y_0\in (\omega,u) $, such that $ \varphi_{\mathrm{ivp}}\left(y; \omega, u, \alpha\right) >  f'(y) $ for all $ y\in (y_0,u] $, $ \varphi_{\mathrm{ivp}}(y_0; \omega, u, \alpha) =   f'(y_0) $, and $ \varphi_{\mathrm{ivp}}\left(y; \omega, u, \alpha\right) < f'(y) $ for all $ \omega\in (y_0-\epsilon,y_0) $, where $ \epsilon $ is a small positive value. However, when $ \varphi_{\mathrm{ivp}}\left(y; \omega, u, \alpha\right) < f'(y) $, $ \varphi_{\mathrm{ivp}}'(y;\omega,u,\alpha) $ is negative according to the ODE, and thus $ \varphi_{\mathrm{ivp}}\left(y; \omega, u, \alpha\right) $ is decreasing in $ (y_0-\epsilon,y_0) $. This means that $ \varphi_{\mathrm{ivp}}\left(y; \omega, u, \alpha\right) > \varphi_{\mathrm{ivp}}(y_0;\omega,u,\alpha) = f'(y_0)>f'(y) $ for all $y\in (y_0-\epsilon,y_0)  $, leading to a contradiction. Therefore, $  \varphi_{\mathrm{ivp}}\left(y; \omega, u, \alpha\right) \geq  f'(y) $ always holds for all $ y\in [\omega,u] $, and the monotonicity directly follows. 
\end{proof}

Lemma \ref{monotonicity_omega} guarantees that for any $ \alpha\geq 1 $ and $ u\in (\omega,1) $, the unique solution to $ \textsf{IVP}_1(\omega, u,  \alpha)  $ is a monotone pricing function. Below we give Lemma \ref{monotonicity_alpha}, which states that $\varphi_{\mathrm{ivp}}\left(y; \omega, u, \alpha\right)  $ is also monotonic in $ (u,\alpha)\in (\omega,1)\times [1,+\infty) $.

\begin{lemma}\label{monotonicity_alpha}
	$\varphi_{\mathrm{ivp}}\left(y; \omega, u, \alpha\right) $ is  continuous and non-increasing   in $ (u,\alpha)\in (\omega,1)\times [1,+\infty) $.
\end{lemma}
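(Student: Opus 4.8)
The plan is to establish continuity and monotonicity separately, treating $u$ and $\alpha$ as parameters in the initial value problem $\textsf{IVP}_1(\omega,u,\alpha)$ whose unique solution $\varphi_{\mathrm{ivp}}(y;\omega,u,\alpha)$ is guaranteed by Lemma \ref{existence_uniquess}. For continuity, I would invoke the standard theorem on continuous dependence of IVP solutions on parameters and initial data (see \cite{ODE1973, Perko2001}): since the right-hand side $g(y,\varphi)=\alpha\cdot\frac{\varphi-f'(y)}{f'^{-1}(\varphi)}$ is jointly continuous and locally Lipschitz in $\varphi$ on the relevant domain (here I use that $f$ is strictly convex and differentiable, so $f'^{-1}$ is continuous and positive, and that by Lemma \ref{monotonicity_omega} the solution stays in the region $\varphi\geq f'(y)>0$ where no singularities occur), the map $(y,u,\alpha)\mapsto\varphi_{\mathrm{ivp}}(y;\omega,u,\alpha)$ is continuous. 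A small subtlety: the initial point $y=u$ itself varies, so I would reparametrize — e.g., shift the interval $[\omega,u]$ to a fixed interval $[0,1]$ via $y=\omega+(u-\omega)t$ — so that the dependence on $u$ becomes dependence of the ODE's coefficients on a parameter, putting it in the form covered by the classical theorem.

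For monotonicity in $\alpha$, I would use a comparison/differential-inequality argument. Fix $u$ and take $\alpha_1<\alpha_2$; let $\varphi_i(y)=\varphi_{\mathrm{ivp}}(y;\omega,u,\alpha_i)$, both satisfying $\varphi_i(u)=\bar c$. Integrating backward from $y=u$, at $y$ slightly below $u$ the solution with larger $\alpha$ decreases faster (since $\varphi_i'(y)=\alpha_i\cdot\frac{\varphi_i(y)-f'(y)}{f'^{-1}(\varphi_i(y))}>0$ and the factor $\frac{\varphi-f'(y)}{f'^{-1}(\varphi)}$ is positive there), so going backward it rises more steeply, giving $\varphi_2(y)>\varphi_1(y)$ just below $u$; wait — I must be careful about direction. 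The clean way: suppose for contradiction the curves cross at some $y_0<u$ with $\varphi_1(y_0)=\varphi_2(y_0)=:c$ and $\varphi_2>\varphi_1$ on $(y_0,u)$... actually since both hit $\bar c$ at $u$, the right statement is $\varphi_{\mathrm{ivp}}(y;\omega,u,\alpha)$ is non-increasing in $\alpha$ for $y<u$. I would prove this by noting that at any crossing point $y_0\in(\omega,u)$ of $\varphi_1$ and $\varphi_2$ we have $\varphi_2'(y_0)-\varphi_1'(y_0)=(\alpha_2-\alpha_1)\cdot\frac{\varphi(y_0)-f'(y_0)}{f'^{-1}(\varphi(y_0))}>0$ (using Lemma \ref{monotonicity_omega}), so $\varphi_2-\varphi_1$ is strictly increasing through every zero; combined with $\varphi_2(u)-\varphi_1(u)=0$, this forces $\varphi_2-\varphi_1\leq 0$ on $[\omega,u]$, i.e. $\varphi_2(y)\leq\varphi_1(y)$.

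For monotonicity in $u$, I would argue similarly: increasing $u$ means the initial condition $\varphi=\bar c$ is imposed further to the right, and since the solution is increasing (Lemma \ref{monotonicity_omega}), extending the integration interval to the right and pinning $\bar c$ at the new, larger right endpoint pushes the whole curve down on the common interval. Concretely, fix $\alpha$ and $u_1<u_2$; on $[\omega,u_1]$ both $\varphi_{\mathrm{ivp}}(\cdot;\omega,u_1,\alpha)$ and $\varphi_{\mathrm{ivp}}(\cdot;\omega,u_2,\alpha)$ solve the same ODE, and at $y=u_1$ we have $\varphi_{\mathrm{ivp}}(u_1;\omega,u_1,\alpha)=\bar c>\varphi_{\mathrm{ivp}}(u_1;\omega,u_2,\alpha)$ because the latter solution is increasing and equals $\bar c$ only at the larger point $u_2$. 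By uniqueness of IVP solutions two distinct solutions of the same ODE never cross, so this ordering propagates to all of $[\omega,u_1]$, giving the claimed non-increasing dependence on $u$. The main obstacle I anticipate is handling the shifting initial point $u$ cleanly — making sure the comparison arguments and the continuous-dependence citation are applied to a correctly reparametrized problem — and double-checking the direction of all the inequalities, since the integration runs backward from the initial point $y=u$; once the reparametrization is fixed and Lemma \ref{monotonicity_omega} supplies positivity of $\varphi-f'$, everything else is routine.
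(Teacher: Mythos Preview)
Your proposal is correct and follows essentially the same approach as the paper: both argue monotonicity in $u$ via uniqueness (no crossings) combined with Lemma~\ref{monotonicity_omega}, and monotonicity in $\alpha$ via a crossing argument that compares derivatives at any putative intersection point. Your treatment is in fact slightly tidier---your handling of continuity (invoking continuous dependence on parameters with a reparametrization for the moving initial point) is more explicit than the paper's one-line justification, and your observation that $\varphi_2'-\varphi_1'>0$ at every zero of $\varphi_2-\varphi_1$ gives the ordering directly, whereas the paper phrases the same idea as a contradiction obtained by restarting the IVP at the crossing point.
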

\begin{proof}
	The continuity directly follows since $\varphi_{\mathrm{ivp}}\left(y; \omega, u, \alpha\right)$ is well defined for all  $ (u,\alpha)\in (\omega,1)\times [1,+\infty) $.
	
	We first prove the monotonicity in $ u\in(\omega,1)$  by contradiction. Suppose we have $ u_1\in (\omega,1) $ and $ u_2\in (\omega,1) $, and assume w.l.o.g. that $ u_1>u_2 $, we can prove that  $ \varphi_{\mathrm{ivp}}(y;\omega,u_1,\alpha) <  \varphi_{\mathrm{ivp}}(y;\omega,u_2,\alpha) $ holds for all $ y\in (\omega,u_2) $. The idea is that these two functions cannot have any intersection point, since otherwise the IVP with the same ODE as  $ \textsf{IVP}_1(\omega, u,  \alpha)  $ but with the initial condition defined at the intersection point will have at least two solutions, namely  $ \varphi_{\mathrm{ivp}}(y;\omega,u_1,\alpha) $ and $ \varphi_{\mathrm{ivp}}(y;\omega,u_2,\alpha) $, which is impossible due to the uniqueness property given by Lemma \ref{existence_uniquess}. Note that it is also impossible to have $ \varphi_{\mathrm{ivp}}(y;\omega,u_1,\alpha) > \varphi_{\mathrm{ivp}}(y;\omega,u_2,\alpha) $ since if this is the case, then $ \varphi_{\mathrm{ivp}}(y;\omega,u_1,\alpha)  $ is not monotonic in $ y\in (\omega,u_1) $. Therefore, when $ u_1>u_2 $, we always have $ \varphi_{\mathrm{ivp}}(y;\omega,u_1,\alpha) < \varphi_{\mathrm{ivp}}(y;\omega,u_2,\alpha) $. 
	
	We now prove the monotonicity in $ \alpha \in [1,+\infty) $. Suppose we have $ \alpha_1 $ and $ \alpha_2 $, and assume w.l.o.g. that $ \alpha_1>\alpha_2 $. We need to prove that $ \varphi_{\mathrm{ivp}}(y;\omega,u,\alpha_1) <  \varphi_{\mathrm{ivp}}(y;\omega,u,\alpha_2) $ for all $ y \in (\omega,u] $.  Based on the ODE of \eqref{IVP_1_case_1_appendix}, when $ \alpha _1>\alpha_2 $, the derivative of $ \varphi $ at $ y = u $ satisfies
	\begin{align*}
	\varphi_{\mathrm{ivp}}'(u;\omega, u,\alpha_1) > \varphi_{\mathrm{ivp}}'(u;\omega, u,\alpha_2).  
	\end{align*}
	Therefore, there must exist a small interval on the left-hand-side of $ u $, say $ [u-\epsilon,u] $, where $ \epsilon $ is a small positive value, such that $ \varphi_{\mathrm{ivp}}(y;\omega, u, \alpha_1)< \varphi_{\mathrm{ivp}}(y;\omega, u, \alpha_1)$ for all $ y \in  [u-\epsilon,u] $. This can be easily proved based on the definition of the derivative, which is omitted for brevity.
	
	Now suppose $ \varphi_{\mathrm{ivp}}(y;\omega,u,\alpha_1) <  \varphi_{\mathrm{ivp}}(y;\omega,u,\alpha_2) $ does not hold for all $ y\in [\omega,u] $, then there must exist an intersection point $ u_0 $, such that $ \varphi_{\mathrm{ivp}}(y;\omega,u,\alpha_1)< \varphi_{\mathrm{ivp}}(y;\omega,u,\alpha_2)$ when $ y\in (u_0,u] $, and $ \varphi_{\mathrm{ivp}}(y;\omega,u,\alpha_1)\geq  \varphi_{\mathrm{ivp}}(y;\omega,u,\alpha_2)$ when $ y \in (u_0-\epsilon,u_0] $, where $ \epsilon $ is a very small positive value. Now let us consider two new IVPs with the same initial condition defined at point $ y = u_0 $, and denote the unique solutions to these two new IVPs by $ \varphi_{\mathrm{new}}(y;\omega,u_0,\alpha_1) $ and $ \varphi_{\mathrm{new}}(y;\omega,u_0,\alpha_2) $, according to the uniqueness property, we must have
	\begin{align*}
	\varphi_{\mathrm{new}}(y;\omega,u_0,\alpha_1) = \varphi_{\mathrm{ivp}}(y;\omega,u,\alpha_1),\forall y\in (\omega,u_0),\\
	\varphi_{\mathrm{new}}(y;\omega,u_0,\alpha_2)= \varphi_{\mathrm{ivp}}(y;\omega,u,\alpha_2), \forall y\in (\omega,u_0).
	\end{align*}
	Since $ \varphi_{\mathrm{ivp}}(y;\omega,u,\alpha_1)\geq  \varphi_{\mathrm{ivp}}(y;\omega,u,\alpha_2)$ when $ y \in (u_0-\epsilon,u_0] $, which means that we cannot find a small interval on the left-hand-side of $ u_0 $, say $ [u_0-\hat{\epsilon},u_0] $, such that $ \varphi_{\mathrm{new}}(y;\omega,u_0,\alpha_1)< \varphi_{\mathrm{new}}(y;\omega,u_0,\alpha_2)$. However, this contradicts with the fact that
	\begin{align*}
	\varphi_{\mathrm{new}}'(u_0;\omega,u_0,\alpha_1) > \varphi_{\mathrm{new}}'(u_0;\omega,u_0, \alpha_2).  
	\end{align*}
	Therefore, we have $ \varphi_{\mathrm{ivp}}(y;\omega,u,\alpha_1) <  \varphi_{\mathrm{ivp}}(y;\omega,u,\alpha_2) $ for all $ y\in (\omega,u] $. We thus complete the proof.
\end{proof}

\subsection{Proof of the Existence of $ \Gamma_1(\omega,u) $}
We now prove the unique existence of a well-defined function $  \Gamma_1(\omega,u) $ in the first bullet of Proposition \ref{two_lower_bounds_case_1}.

Our proof is based on the idea of ``shooting method" in ODE \cite{ODE1973, Perko2001}.  Here we briefly explain the intuition. 
For each given $ \omega\in [0,\ubar{\rho}] $ and $ u\in (\omega, 1) $, based on the ODE in Eq. \eqref{IVP_1_case_1_appendix}, when $ \alpha $ approaches infinity, $ \varphi_1(y) $ must be very close to $ f(y) $ when $ y $ approaches $\omega $ from the right, namely, $ \varphi_1(\omega) $ approaches $ f'(\omega) $. In this case, $ \varphi_1(\omega)\leq \ubar{p} $.  Similarly, when $ \alpha $ is very small, say approaches zero (our definition of $ \alpha $ requires that $ \alpha\geq 1 $, here we purely focus on the numerical analysis of the IVP given by Eq. \eqref{IVP_1_case_1_appendix}), then $ \varphi_1(\omega) $ must be very close to $ \bar{c} $ since $ \varphi_1'(y) $ will be close to 0 and thus $ \varphi_1(y) $ will be almost horizontal. In this case, $ \varphi_1(\omega)\geq \ubar{p} $. The monotonicity property in Lemma \ref{monotonicity_alpha} indicates that there must exist a unique $ \alpha $ in between so that $ \varphi(\omega) $ is exactly equal to $ \ubar{p} $. In the ``shooting method", this indicates the value setting of $ \alpha $ hits the target of $ \varphi(\omega) = \ubar{p} $. In summary, for a given $ \omega\in [0,\ubar{\rho}] $ and $ u\in (\omega, 1) $, there exists a unique $ \alpha $ so that the unique solution to the IVP in Eq. \eqref{IVP_1_case_1_appendix} is the unique solution to the BVP in Eq. \eqref{BVP_1_case_1_appendix}. We thus complete the proof of the existence of $ \alpha = \Gamma_1(\omega,u) $.

\subsection{Proof of the Existence of  $ \Gamma_2(u) $}
Proving the unique existence of $ \alpha = \Gamma_2(u) $ is more straightforward as the ODE in Eq. \eqref{BVP_2_case_1_appendix} can be solved in analytical forms. To be more specific, the general solution can be written as follows:
\begin{align*}
\varphi_2(y) = \exp(\alpha y)\cdot \left( \int^{y} \frac{\alpha f'(\eta)}{\exp(\alpha\eta)} d\eta + C\right),
\end{align*}
where $ C $ is any real constant. Substituting the two boundary conditions into the above general solution leads to the following
\begin{align*}
\varphi_2(u) = \exp(\alpha\cdot u)\cdot \left( \int^{u} \frac{\alpha f'(\eta)}{\exp(\alpha\eta)} d\eta + C\right) = \bar{c},\\
\varphi_2(1) = \exp(\alpha\cdot 1)\cdot \left( \int^{1} \frac{\alpha f'(\eta)}{\exp(\alpha\eta)} d\eta + C\right) \geq \bar{p},
\end{align*}
which thus indicates that
\begin{align}\label{inequality_u_alpha_Gamma_2}
\int_u^1\frac{\alpha f'(\eta)}{\exp(\alpha\eta)}d\eta \geq \frac{\bar{p}}{\exp(\alpha)} - \frac{\bar{c}}{\exp(u\alpha)}
\end{align}

We can prove that for each given $ u\in (0,1) $ , there exists a unique $ \alpha $ that satisfies Eq. \eqref{inequality_u_alpha_Gamma_2} by equality as long as $ \bar{p}>\bar{c} $. The proof is similar to that of $ \alpha = \Gamma_1(\omega,u) $ as any given setup of $ \bar{p}>\bar{c} $ will guarantee the existence of a monotone function $ \varphi_2(y) $ in the interval $ [u,1] $ with an appropriate setting of $ \alpha $. Moreover, this $ \alpha $ must be unique as otherwise the uniqueness property in Lemma \ref{existence_uniquess} would be contradicted. Therefore, if we denote this unique $ \alpha $ by $ \alpha = \Gamma_2(u) $, then we complete the proof of the existence of $ \Gamma_2(u) $. However, note that in the second bullet of Proposition \ref{two_lower_bounds_case_1}, we argue that for any $ \alpha\geq \Gamma_2(u) $, the BVP in Eq. \eqref{BVP_2_case_1_appendix} will have a unique monotone solution. This can be proved by the monotonicity of $ \varphi_2(y) $ w.r.t. $ \alpha $, similar to our proof of Lemma \ref{monotonicity_alpha}, and thus we skip the details for brevity. In summary, for each given $ u\in (0,1) $ , there exists a unique solution to $ \textsf{BVP}_2(u,\alpha) $ as long as  $ \alpha \geq  \Gamma_2(u)$. We thus complete the proof of Proposition \ref{two_lower_bounds_case_1}.

\begin{figure}
	\centering
	\includegraphics[width= 6 cm]{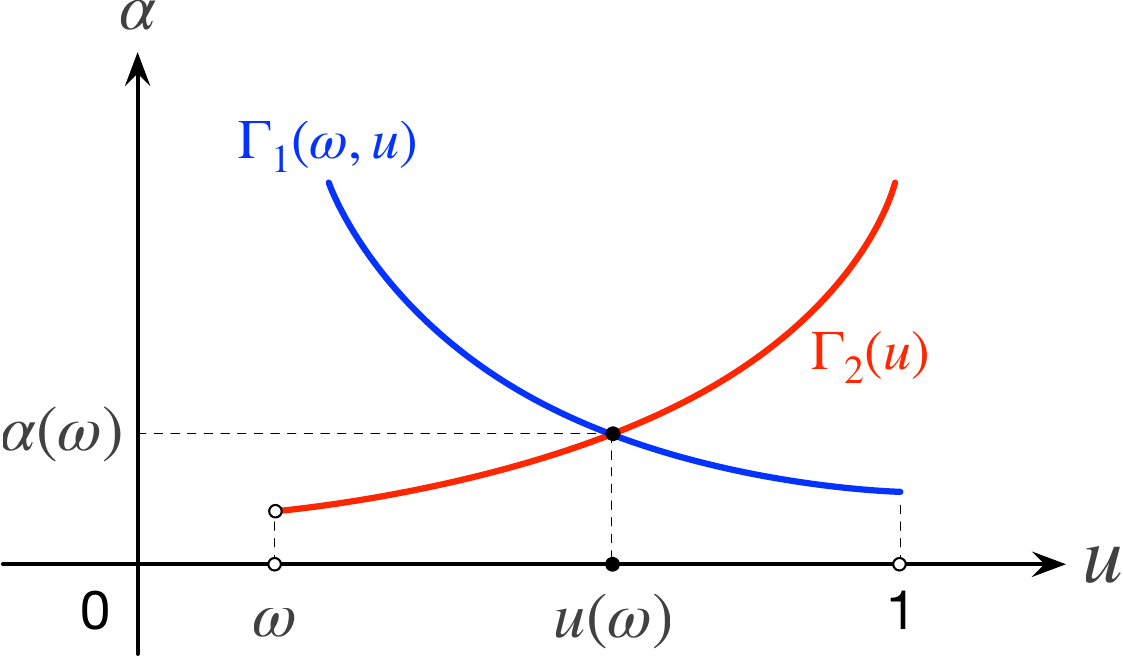}
	\caption{Illustration of the unique existence of the intersection point $ u(\omega) $ between $ \Gamma_1(\omega,u) $ and $ \Gamma_2(u) $ for a given $ \omega\in (0,\ubar{\rho}) $.}
	\label{two_alpha_intersection}
\end{figure}

\section{Proof of Theorem \ref{existence_uniqueness_opt_case_1}}\label{proof_of_existence_uniqueness_opt_case_1}
The proof of Theorem \ref{existence_uniqueness_opt_case_1} heavily relies on the monotonicity properties of $ \phi $ w.r.t $ \omega, u $, and $ \alpha $. To achieve the minimum $ \alpha $, we make the following claims:
\begin{itemize}
	\item \textbf{Claim 1}: $ \omega $ must be as small as possible. Since a larger $ \omega $ will shrink the interval of $ [\omega, 1] $, and the value of $ \alpha $ must be increased as the overall slope the pricing function will be steeper. For any $ \alpha $, to minimize the choice of $ \omega $, we set
	\begin{align*}
	\alpha = \frac{h(\ubar{p})}{F_{\ubar{p}}(\omega)} \text{ or } \omega = F_{\ubar{p}}^{-1}\left(\frac{1}{\alpha} h\big(\ubar{p}\big)\right).
	\end{align*}
	\item \textbf{Claim 2}: for each given $ \omega\in [F_{\ubar{p}}^{-1}\big(\frac{1}{\alpha} h\big(\ubar{p}\big)\big),\ubar{\rho}] $, $ \Gamma_1(\omega,u) $ is strictly decreasing in $ u\in (\omega,1)$. Meanwhile, $ \Gamma_2(u) $ is strictly increasing in $ u\in (\omega,1) $.  The proof of the monotonicity of $ \Gamma_1(\omega,u) $ and $ \Gamma_2(u) $ is elementary based on our analysis in Appendix \ref{proof_of_two_lower_bounds_case_1} above.  To visualize the intuition,  Fig. \ref{two_alpha_intersection} illustrates $ \Gamma_1(\omega,u) $ and $ \Gamma_2(u) $ w.r.t. $ u\in (\omega,1) $ for a given $ \omega $.  We can prove that for any $ \omega \in (0,\bar{\rho}] $, there exists a unique  $ u(\omega) $ such that $ \Gamma_1(\omega,u(\omega)) = \Gamma_2(u(\omega)) $, as illustrated by the intersection point in Fig. \ref{two_alpha_intersection}. The proof is straightforward by evaluating the values of $ \Gamma_1(\omega,u) $ and $ \Gamma_2(u) $ at their two boundaries, e.g., $ \Gamma_1(\omega,u)$ and $ \Gamma_2(u) $ approaches infinity if $ u $ approaches $ \omega  $ and $ 1 $, respectively, Therefore, the monotonicity of  $ \Gamma_1(\omega,u)$ and $ \Gamma_2(u) $ indicates that there must exist a unique intersection point $ u(\omega) $ in the interval $ [\omega,1] $.  To minimize $ \alpha $, we simply choose $ u= u(\omega) $ so that $ \alpha = \alpha(\omega) $ is minimized for a given $ \omega\in [\omega,\ubar{\rho}] $. 
\end{itemize}


Based on the above two claims, to minimize $ \alpha(\omega) $, we must set $ \omega =  F_{\ubar{p}}^{-1}\left(\frac{1}{\alpha} h\big(\ubar{p}\big)\right)$. In this case the unique intersection point between $ \Gamma_1(\omega,u) $ and $ \Gamma_2(u) $ is the optimal design of $ u $, and the corresponding $ \alpha $ is the minimum competitive ratio, i.e., $ \alpha_*(\mathcal{S}) $. In summary, the optimal design of $ \omega $ and $ u $ is the unique solution to the following system of equations
\begin{align*}
\frac{h\big(\ubar{p}\big)}{F_{\ubar{p}}(\omega_*)}  =  \Gamma_1(\omega_*,u_*) = \Gamma_2(u_*).
\end{align*}
We thus complete the proof of Theorem \ref{existence_uniqueness_opt_case_1}.

\section{Remaining Proofs of Theorem \ref{major_results} in Case-2 and Case-3} \label{missing_proofs_case_2_case_3}
In this section, we provide the remaining proofs of Theorem \ref{major_results} in \textbf{Case-2} and \textbf{Case-3}.

\subsection{Case-2: $ \protect \underline{c} < \protect\overline{c} \leq \protect\underline{p} \leq \protect\overline{p}$}\label{section_Case_III}

In this case, $  \bar{c}\leq \ubar{p}\leq \bar{p} $ indicates that $ \ubar{\rho} =  \bar{\rho}=1 $. Meanwhile, there exist  no such a resource utilization threshold $ u $ so that $ \phi(u) = \bar{c} $, as illustrated in Fig. \ref{fig_rationality_sufficiency}(c). Based on Theorem \ref{sufficiency} and Theorem \ref{necessity}, we give the following Corollary  \ref{theorem_case_2_two_BVP} which summarizes the sufficient and necessary conditions in \textbf{Case-2}. 

\begin{corollary}\label{theorem_case_2_two_BVP}
	Given a convex setup $ \mathcal{S} $ in \textbf{Case-2}, $ \textsf{PM}_\phi  $ is IC and $ \alpha $-competitive if there exists a  critical threshold $ \omega \in [F_{\ubar{p}}^{-1}\big(\frac{1}{\alpha} h\big(\ubar{p}\big)\big),1] $ such that $ \phi $ is given by
	\begin{align}\label{phi_case_2}
	\phi(y) = 
	\begin{cases}
	\ubar{p},          &\text{if } y\in [0,\omega],\\
	\hat{\varphi}_2(y), &\text{if } y\in [\omega,1],\\
	+\infty, &\text{if } y\in (1,+\infty),
	\end{cases}
	\end{align}
	where $ \hat{\varphi}_2(y) $ is a solution to the following BVP:
	\begin{align}\label{BVP_case_2}
	\normalfont\widehat{\textsf{BVP}}_2(\omega,\alpha) 
	\begin{cases}
	\hat{\varphi}_2'(y) = \alpha\cdot \big(\hat{\varphi}_2(y) - f'(y) \big),  y\in (\omega,1);\\
	\hat{\varphi}_2(\omega) = \ubar{p},  \hat{\varphi}_2(1) \geq \bar{p}.
	\end{cases}
	\end{align}
	On the other hand, if there exists an $ \alpha $-competitive online algorithm, then there must exist  a strictly-increasing solution  to  $ \normalfont\widehat{\textsf{BVP}}_2(\omega,\alpha)  $ for some $ \omega \in [F_{\ubar{p}}^{-1}\big(\frac{1}{\alpha} h\big(\ubar{p}\big)\big),1] $. 
\end{corollary}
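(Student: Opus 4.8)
The plan is to obtain Corollary~\ref{theorem_case_2_two_BVP} as a direct specialization of the general sufficient conditions in Theorem~\ref{sufficiency} and the general necessary conditions in Theorem~\ref{necessity} to the regime $\ubar{c}<\bar{c}\leq\ubar{p}\leq\bar{p}$. The decisive structural feature of \textbf{Case-2} is twofold: first, $\bar{c}\leq\ubar{p}\leq\bar{p}$ forces $\ubar{\rho}=\bar{\rho}=1$ via the closed-form expressions for $\ubar{\rho},\bar{\rho}$ stated just after Eq.~\eqref{def_of_rho}; second, the increasing segment of any admissible pricing function takes values in $[\ubar{p},\bar{p}]\subseteq[\bar{c},+\infty)$, so by Eq.~\eqref{h_prime} (equivalently Eq.~\eqref{dual_bar_f_star}) one has $h'(\varphi(y))=1$ on all of $[\omega,1]$, using that the two branches of $h'$ in Eq.~\eqref{h_prime} agree at $\bar{c}$ since $f'^{-1}(\bar{c})=1$. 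This collapses the ODE of $\textsf{BVP}(\omega,\alpha)$ to the linear ODE $\varphi'(y)=\alpha(\varphi(y)-f'(y))$ that defines $\widehat{\textsf{BVP}}_2(\omega,\alpha)$.

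For the sufficiency half, I would take $\phi$ as in Eq.~\eqref{phi_case_2} with $\hat{\varphi}_2$ any solution of $\widehat{\textsf{BVP}}_2(\omega,\alpha)$ and verify the hypotheses of Theorem~\ref{sufficiency}. Since $\bar{\rho}=1$, Eq.~\eqref{phi_case_2} is exactly of the form Eq.~\eqref{sufficient_phi}. The threshold condition Eq.~\eqref{flat_sufficiency_omega} holds because $F_{\ubar{p}}$ is strictly increasing on $[0,\ubar{\rho}]=[0,1]$ (cf.\ the footnote after Corollary~\ref{theorem_case_1_two_BVP}), so $\omega\geq F_{\ubar{p}}^{-1}\big(\frac{1}{\alpha}h(\ubar{p})\big)$ is equivalent to $F_{\ubar{p}}(\omega)\geq\frac{1}{\alpha}h(\ubar{p})$, and $\omega\leq 1=\ubar{\rho}$. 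For Eq.~\eqref{ODI_principle_sufficiency} I would first establish that $\hat{\varphi}_2$ is increasing by rerunning the barrier argument of Lemma~\ref{monotonicity_omega} for this linear ODE: since $\hat{\varphi}_2(\omega)=\ubar{p}\geq\bar{c}\geq f'(\omega)$, if $\hat{\varphi}_2$ ever fell below $f'$ on $[\omega,1]$ its derivative would be negative there, making it impossible to reach the terminal value $\hat{\varphi}_2(1)\geq\bar{p}>f'(1)$; hence $\hat{\varphi}_2\geq f'$, $\hat{\varphi}_2$ is nondecreasing, and in particular $\hat{\varphi}_2\geq\ubar{p}\geq\bar{c}$ so that $h'(\hat{\varphi}_2(y))=1$. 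Consequently the ODE of $\widehat{\textsf{BVP}}_2$ is precisely the equality case of the differential inequality in Eq.~\eqref{ODI_principle_sufficiency}, with boundary data $\hat{\varphi}_2(\omega)=\ubar{p}$ and $\hat{\varphi}_2(\bar{\rho})=\hat{\varphi}_2(1)\geq\bar{p}$. Theorem~\ref{sufficiency} then yields that $\textsf{PM}_\phi$ is IC and $\alpha$-competitive.

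For the necessity half, suppose an $\alpha$-competitive online algorithm exists. Theorem~\ref{necessity} provides a critical threshold $\omega$ satisfying Eq.~\eqref{flat_sufficiency_omega}---hence $\omega\in[F_{\ubar{p}}^{-1}\big(\frac{1}{\alpha}h(\ubar{p})\big),1]$ by the same monotonicity of $F_{\ubar{p}}$---together with a strictly-increasing $\varphi$ solving $\textsf{BVP}(\omega,\alpha)$, which in \textbf{Case-2} means on $(\omega,1)$ with $\varphi(\omega)=\ubar{p}$ and $\varphi(1)\geq\bar{p}$. Strict monotonicity and $\varphi(\omega)=\ubar{p}\geq\bar{c}$ give $\varphi(y)\geq\bar{c}$ throughout $[\omega,1]$, so $h'(\varphi(y))=1$ and the ODE of $\textsf{BVP}(\omega,\alpha)$ reduces exactly to that of $\widehat{\textsf{BVP}}_2(\omega,\alpha)$; thus $\varphi$ is the desired strictly-increasing solution, and the necessity claim follows. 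Combining the two halves proves the corollary.

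The only step I expect to require genuine care is the monotonicity of $\hat{\varphi}_2$ in the sufficiency direction: a function merely satisfying $\widehat{\textsf{BVP}}_2(\omega,\alpha)$ is not \emph{a priori} increasing, and Theorem~\ref{sufficiency} demands an increasing $\varphi$, so one must reproduce the comparison/barrier argument of Lemma~\ref{monotonicity_omega} for the linear ODE $\varphi'=\alpha(\varphi-f')$ to conclude $\varphi\geq f'$ on $[\omega,1]$. Everything else---identifying $\ubar{\rho}=\bar{\rho}=1$, the reduction $h'\equiv 1$, and matching the two boundary conditions---is routine bookkeeping once these observations are in place.
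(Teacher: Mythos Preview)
Your proposal is correct and follows essentially the same approach as the paper: the paper's own proof simply says ``The proof is similar to our previous analysis regarding Corollary~\ref{theorem_case_1_two_BVP} in \textbf{Case-1},'' i.e., specialize Theorem~\ref{sufficiency} and Theorem~\ref{necessity} using $\ubar{\rho}=\bar{\rho}=1$ and $h'(\varphi)\equiv 1$ on $[\bar{c},+\infty)$. Your write-up is in fact more explicit than the paper about the monotonicity of $\hat{\varphi}_2$ (needed to invoke Theorem~\ref{sufficiency} and to justify $h'(\hat{\varphi}_2)=1$), which the paper handles only implicitly via Lemma~\ref{monotonicity_omega}; your barrier argument there is the right idea.
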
 
\begin{proof}
	The proof is similar to our previous analysis regarding Corollary \ref{theorem_case_1_two_BVP} in \textbf{Case-1}. 
\end{proof}

Similar to Proposition \ref{two_lower_bounds_case_1}, we give the following Proposition \ref{def_Gamma_2_hat} to show the existence of function $ \hat{\Gamma}_2(\omega) $.
\begin{proposition}\label{def_Gamma_2_hat}
	For each given $ \omega \in [0,1)$, $\normalfont \widehat{\textsf{BVP}}_2(\omega,\alpha)  $ has a unique strictly-increasing solution if and only if $ \alpha \geq  \widehat{\Gamma}_2(\omega) $, where $ \widehat{\Gamma}_2(\omega) $ satisfies the following equation in variable $ \widehat{\Gamma}_2 $:
	\begin{align}\label{Gamma_2_hat}
	\int_\omega^{1}\frac{\widehat{\Gamma}_2 f'(y)}{ \exp\big(y \widehat{\Gamma}_2\big) } dy = \frac{\bar{p}}{\exp\big(\widehat{\Gamma}_2\big)} - \frac{\ubar{p}}{\exp\big(\omega\widehat{\Gamma}_2\big)}.
	\end{align}
\end{proposition}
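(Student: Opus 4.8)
The plan is to use the fact that in \textbf{Case-2} the differential equation in $\widehat{\textsf{BVP}}_2(\omega,\alpha)$ is \emph{linear} in $\hat{\varphi}_2$, so everything can be made explicit and the proposition reduces to a one-parameter monotonicity analysis. First I would drop the right boundary condition and look at the initial value problem given by $\hat{\varphi}_2'(y)=\alpha\big(\hat{\varphi}_2(y)-f'(y)\big)$ together with $\hat{\varphi}_2(\omega)=\ubar{p}$. Because the right-hand side is affine in $\hat{\varphi}_2$ with continuous coefficients, multiplying by the integrating factor $e^{-\alpha y}$ yields a unique solution on the whole interval $[\omega,1]$, which after one integration by parts can be written as
\[
\hat{\varphi}_2(y;\omega,\alpha)=f'(y)+\big(\ubar{p}-f'(\omega)\big)e^{\alpha(y-\omega)}-\int_\omega^{y} f''(\eta)\,e^{\alpha(y-\eta)}\,d\eta .
\]
Any solution of $\widehat{\textsf{BVP}}_2(\omega,\alpha)$ also solves this IVP, so for each $\alpha$ there is at most one candidate, and it is an admissible BVP solution exactly when its terminal value satisfies $\hat{\varphi}_2(1;\omega,\alpha)\ge\bar{p}$.

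Second I would verify that this candidate is strictly increasing on $[\omega,1]$, reusing the comparison trick behind Lemma~\ref{monotonicity_omega}. In \textbf{Case-2} we have $\ubar{p}\ge\bar{c}=f'(1)>f'(\omega)$ for $\omega<1$, so $\hat{\varphi}_2(\omega)>f'(\omega)$ and $\hat{\varphi}_2'(\omega)>0$; if $\hat{\varphi}_2$ ever met $f'$ at an interior point it would have been strictly decreasing just to the left (the ODE forces $\hat{\varphi}_2'<0$ wherever $\hat{\varphi}_2<f'$), contradicting $\hat{\varphi}_2\ge f'$ to the right. Hence $\hat{\varphi}_2(y)\ge f'(y)$ throughout and $\hat{\varphi}_2'=\alpha(\hat{\varphi}_2-f')\ge 0$, with strict inequality on $(\omega,1)$ since $\hat{\varphi}_2-f'$ cannot touch zero before $y=1$. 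Thus the only remaining issue is the dependence of $\hat{\varphi}_2(1;\omega,\alpha)$ on $\alpha$.

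Third I would establish that $\alpha\mapsto \hat{\varphi}_2(1;\omega,\alpha)$ is continuous and strictly increasing, with $\hat{\varphi}_2(1;\omega,\alpha)\to\ubar{p}$ as $\alpha\downarrow 0$ and $\hat{\varphi}_2(1;\omega,\alpha)\to+\infty$ as $\alpha\to+\infty$. The first limit is read off the closed form; the second holds because the term $(\ubar{p}-f'(\omega))e^{\alpha(1-\omega)}$ has a strictly positive coefficient and dominates the correction integral, which is $O\!\big(e^{\alpha(1-\omega)}/\alpha\big)$. Strict monotonicity in $\alpha$ is the analogue of Lemma~\ref{monotonicity_alpha}: two solutions for $\alpha_1>\alpha_2$ that agree at $\omega$ cannot cross, since at a crossing point one would have $\hat{\varphi}_2>f'$ for both, so the $\alpha_1$-solution has the strictly larger slope there, which forbids a crossing from above; and immediately to the right of $\omega$ the $\alpha_1$-solution is already above because $\ubar{p}-f'(\omega)>0$. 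Since $\bar{p}\ge\ubar{p}$ (the case $\bar{p}=\ubar{p}$ being degenerate), the intermediate value theorem and strict monotonicity give a unique $\widehat{\Gamma}_2(\omega)$ with $\hat{\varphi}_2(1;\omega,\widehat{\Gamma}_2(\omega))=\bar{p}$. For $\alpha\ge\widehat{\Gamma}_2(\omega)$ the IVP solution meets both boundary requirements and is strictly increasing, hence is the unique strictly-increasing solution of $\widehat{\textsf{BVP}}_2(\omega,\alpha)$; for $\alpha<\widehat{\Gamma}_2(\omega)$ its terminal value is below $\bar{p}$, so the BVP has no solution at all, which gives the ``if and only if''. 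Finally, substituting $y=1$, $\hat{\varphi}_2(\omega)=\ubar{p}$ and $\hat{\varphi}_2(1)=\bar{p}$ into the once-integrated identity $\big[e^{-\alpha y}\hat{\varphi}_2(y)\big]_\omega^{1}=-\alpha\int_\omega^{1} f'(y)e^{-\alpha y}\,dy$ and rearranging yields the transcendental relation \eqref{Gamma_2_hat} defining $\widehat{\Gamma}_2(\omega)$, and shows the threshold solution satisfies the right boundary condition with equality.

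The only genuinely delicate points are the two $\alpha$-limits and the strict-monotonicity comparison; everything else is routine linear-ODE bookkeeping. In particular this proposition is substantially softer than its \textbf{Case-1} counterpart for $\Gamma_1(\omega,u)$, which requires a real shooting argument because the corresponding ODE is nonlinear — here the closed form of $\hat{\varphi}_2$ makes the comparison arguments almost immediate. I expect the main obstacle, if any, to be making the $\alpha\to+\infty$ estimate of the correction integral fully rigorous for a general strictly convex $f$, and disposing cleanly of the degenerate sub-case $\bar{p}=\ubar{p}$.
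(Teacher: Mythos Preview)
Your proposal is correct and follows essentially the same approach as the paper. The paper's own proof of Proposition~\ref{def_Gamma_2_hat} consists of a single line deferring to the analogous argument for $\Gamma_2(u)$ in Proposition~\ref{two_lower_bounds_case_1}; that argument writes down the general solution of the linear ODE $\varphi_2'=\alpha(\varphi_2-f')$, substitutes the two boundary data to obtain the transcendental relation, and then appeals to the monotonicity lemmas (Lemma~\ref{existence_uniquess}--Lemma~\ref{monotonicity_alpha}) for existence and uniqueness of the threshold --- which is precisely your IVP-plus-shooting route, only you have spelled out the limits in $\alpha$ and the strict-increase verification that the paper leaves implicit.
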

\begin{proof}
	Eq. \eqref{Gamma_2_hat} is similar to Eq. \eqref{Gamma_2} and thus we omit the proof for brevity.
\end{proof}

Based on Proposition \ref{def_Gamma_2_hat}, we give the following Theorem \ref{existence_uniqueness_opt_case_2} to show the unique existence of $ \omega_* $ as well as the calculation of $ \alpha_*(\mathcal{S}) $ in \textbf{Case-2}. 

\begin{theorem}\label{existence_uniqueness_opt_case_2}
	Given a convex setup $ \mathcal{S} $ in \textbf{Case-2}, there exists a unique $ \omega_*\in (0,1)$ such that the optimal competitive ratio $ \alpha_*(\mathcal{S}) $ is given by
	\begin{align}\label{system_of_equations_case_2}
	\alpha_*({\mathcal{S}}) = \frac{h\big(\ubar{p}\big)}{F_{\ubar{p}}(\omega_*)} = \widehat{\Gamma}_2(\omega_*),
	\end{align}
	where $ \omega_*\in [0,1) $ is the unique root that satisfies Eq. \eqref{system_of_equations_case_2}. 
	Meanwhile,  $ \textsf{PM}_\phi  $ is $ \alpha_*({\mathcal{S}}) $-competitive if and only if $ \phi $ is given by Eq. \eqref{phi_case_2} with $(\omega,\alpha) = (\omega_*,\alpha_*(\mathcal{S}))  $.  
\end{theorem}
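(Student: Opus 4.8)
The plan is to mirror the structure of the proof of Theorem \ref{existence_uniqueness_opt_case_1} in \textbf{Case-1}, but with a simpler one-BVP structure since in \textbf{Case-2} the entire increasing segment lies above $\bar{c}$ and there is no intermediate threshold $u$. The two key monotonicity facts I would establish first are: (i) the critical threshold $\omega$ should be made as small as possible, which by Eq. \eqref{flat_sufficiency_omega} means taking it exactly at the lower boundary, i.e.\ $\omega = F_{\ubar{p}}^{-1}\big(\tfrac{1}{\alpha}h(\ubar{p})\big)$, equivalently $\alpha = h(\ubar{p})/F_{\ubar{p}}(\omega)$; and (ii) the function $\widehat{\Gamma}_2(\omega)$ from Proposition \ref{def_Gamma_2_hat} is \emph{strictly increasing} in $\omega$. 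Fact (i) follows because enlarging $\omega$ shrinks the interval $[\omega,1]$ over which the pricing function must climb from $\ubar{p}$ to at least $\bar{p}$, forcing a steeper ODE solution and hence a larger $\alpha$; this is the same reasoning as \textbf{Claim 1} in Appendix \ref{proof_of_existence_uniqueness_opt_case_1}. Fact (ii) can be read off from Eq. \eqref{Gamma_2_hat} by an analogue of Lemma \ref{monotonicity_alpha}: larger $\omega$ shrinks the admissible interval, so a larger slope parameter is needed to still reach $\bar{p}$ at $y=1$.

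Next I would set up the fixed-point characterization. Define $g(\omega) \triangleq h(\ubar{p})/F_{\ubar{p}}(\omega)$, the competitive ratio dictated by the flat-segment constraint, and recall $\widehat{\Gamma}_2(\omega)$, the minimum competitive ratio for which $\widehat{\textsf{BVP}}_2(\omega,\alpha)$ admits a strictly-increasing solution. By Corollary \ref{theorem_case_2_two_BVP}, any $\alpha$-competitive online algorithm requires both $g(\omega) \le \alpha$ (rearranged from Eq. \eqref{flat_sufficiency_omega}) and $\widehat{\Gamma}_2(\omega) \le \alpha$ for some admissible $\omega$; conversely $\textsf{PM}_\phi$ with $\phi$ from Eq. \eqref{phi_case_2} achieves $\alpha = \max\{g(\omega), \widehat{\Gamma}_2(\omega)\}$. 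So the optimal competitive ratio is
\begin{equation*}
\alpha_*(\mathcal{S}) = \min_{\omega \in (0,1)} \max\big\{g(\omega),\ \widehat{\Gamma}_2(\omega)\big\}.
\end{equation*}
Now observe that $F_{\ubar{p}}(\omega) = \ubar{p}\omega - f(\omega)$ is strictly increasing on $(0,\ubar\rho] = (0,1]$ (since $f'(y) \le \bar c \le \ubar p$ in this case), so $g(\omega)$ is strictly \emph{decreasing} in $\omega$, with $g(\omega)\to\infty$ as $\omega\to 0^+$. Meanwhile $\widehat{\Gamma}_2(\omega)$ is strictly \emph{increasing} in $\omega$ by fact (ii), and $\widehat{\Gamma}_2(\omega)\to\infty$ as $\omega\to 1^-$ (the interval degenerates). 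A strictly-decreasing function crossing a strictly-increasing function, both blowing up at opposite ends of $(0,1)$, intersect at exactly one point $\omega_*$; at that point the $\max$ of the two is minimized, giving $\alpha_*(\mathcal{S}) = g(\omega_*) = \widehat{\Gamma}_2(\omega_*)$, which is Eq. \eqref{system_of_equations_case_2}.

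Finally, for the ``if and only if'' on the optimal pricing function: once $\omega_*$ and $\alpha_*(\mathcal{S})$ are pinned down, Proposition \ref{def_Gamma_2_hat} gives a \emph{unique} strictly-increasing solution $\hat\varphi_2$ to $\widehat{\textsf{BVP}}_2(\omega_*,\alpha_*(\mathcal{S}))$, and the flat-segment value $\ubar p$ on $[0,\omega_*]$ is forced by continuity and by Theorem \ref{sufficiency}/\ref{necessity}; any deviation in $\omega$ makes $\max\{g(\omega),\widehat\Gamma_2(\omega)\} > \alpha_*(\mathcal{S})$, and any other increasing segment either fails the incremental inequality or fails to be a BVP solution. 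Uniqueness of $\phi$ thus follows from uniqueness of $\omega_*$ plus uniqueness of the BVP solution. \textbf{The main obstacle} I anticipate is rigorously justifying the strict monotonicity of $\widehat{\Gamma}_2(\omega)$ in $\omega$ and its blow-up behavior at the endpoints — this is where the shooting-method / comparison-of-IVP-solutions argument from the proof of Proposition \ref{two_lower_bounds_case_1} and Lemma \ref{monotonicity_alpha} must be adapted; everything else is a routine transcription of the \textbf{Case-1} argument with the $u$-variable removed.
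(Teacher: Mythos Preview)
Your proposal is correct and takes essentially the same approach as the paper: the paper's own proof of Theorem \ref{existence_uniqueness_opt_case_2} is simply the one-line remark that ``the result follows similarly as Theorem \ref{existence_uniqueness_opt_case_1},'' and what you have written is precisely the natural adaptation of that \textbf{Case-1} argument with the intermediate threshold $u$ removed. Your identification of the main technical point---the strict monotonicity and boundary behavior of $\widehat{\Gamma}_2(\omega)$---is also accurate, and the paper handles it by the same shooting/comparison reasoning you cite from Lemma \ref{monotonicity_alpha} and Proposition \ref{two_lower_bounds_case_1}.
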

\begin{proof}
	The result follows similarly as Theorem \ref{existence_uniqueness_opt_case_1}. 
\end{proof}

Note that in \textbf{Case-2}, the calculation of the unique $ \omega_* $ is equivalent to finding the unique root to the following equation
\begin{align*}
\int_{\omega_*}^{1} \frac{h\big(\ubar{p}\big)}{F_{\ubar{p}}(\omega_*)}\cdot \exp\left(- \frac{h\big(\ubar{p}\big)}{F_{\ubar{p}}(\omega_*)}\cdot  y\right)\cdot f'(y)\   dy = \bar{p}\cdot \exp\left(-\frac{h\big(\ubar{p}\big)}{F_{\ubar{p}}(\omega_*)}\right) - \ubar{p}\cdot\exp\left(-\frac{h\big(\ubar{p}\big)}{F_{\ubar{p}}(\omega_*)}\cdot \omega_*\right).
\end{align*}

\subsection{Case-3: $ \protect\underline{c}< \protect\underline{p}\leq \protect\overline{p}\leq \protect\overline{c}$}\label{section_Case_V}
In this case, $  \ubar{c}<\ubar{p}\leq \bar{p}\leq \bar{c} $ indicates that $ \ubar{\rho}\leq \bar{\rho} \leq 1 $. Meanwhile, there exist  no such a threshold $ u\in (0,1) $ so that $ \phi(u) = \bar{c} $, as illustrated in Fig. \ref{three_cases}(d). Based on Theorem \ref{sufficiency} and Theorem \ref{necessity}, we give the following Corollary  \ref{theorem_case_3_two_BVP} which summarizes the sufficient and necessary conditions in \textbf{Case-3}. 

\begin{corollary}\label{theorem_case_3_two_BVP}
	Given a convex setup $ \mathcal{S} $ in \textbf{Case-3}, $ \textsf{PM}_\phi  $ is IC and $ \alpha $-competitive if there exists a pair of resource utilization thresholds $ (\omega,\rho) \in [F_{\ubar{p}}^{-1}\big(\frac{1}{\alpha} h\big(\ubar{p}\big)\big),\ubar{\rho}]\times [\omega,\bar{\rho}] $ such that $ \phi $ is given by
	\begin{align}\label{phi_case_3}
	\phi(y) = 
	\begin{cases}
	\ubar{p},          &\text{if } z\in [0,\omega],\\
	\hat{\varphi}_1(y), &\text{if } z\in [\omega,\rho],\\
	+\infty,      &\text{if } z\in (\rho,1],
	\end{cases}
	\end{align}
	where  $ \hat{\varphi}_1(y) $ is a solution to the following BVP:
	\begin{align}\label{BVP_case_3}
	\normalfont\widehat{\textsf{BVP}}_1(\omega,\rho,\alpha) 
	\begin{cases}
	\hat{\varphi}_1'(y) = \alpha\cdot \frac{\hat{\varphi}_1(y)  - f'(y) }{f'^{-1}\big(\hat{\varphi}_1(y)\big)}, y\in(\omega,\rho);\\
	\hat{\varphi}_1(\omega) = \ubar{p}, \hat{\varphi}_1(\rho) \geq  \bar{p}.
	\end{cases}
	\end{align}
	On the other hand, if there exists an $ \alpha $-competitive online algorithm, then there must exist  a strictly-increasing solution  to $ \normalfont\widehat{\textsf{BVP}}_1(\omega,\rho,\alpha) $ for some pair of $ (\omega,\rho) \in [F_{\ubar{p}}^{-1}\big(\frac{1}{\alpha} h\big(\ubar{p}\big)\big),\ubar{\rho}]\times [\omega,\bar{\rho}] $.  
\end{corollary}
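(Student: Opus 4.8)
\textbf{Proof proposal for Corollary \ref{theorem_case_3_two_BVP}.}

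The plan is to derive this corollary as a direct specialization of Theorem \ref{sufficiency} and Theorem \ref{necessity} to the regime of \textbf{Case-3}, exactly mirroring the proofs of Corollary \ref{theorem_case_1_two_BVP} and Corollary \ref{theorem_case_2_two_BVP}. The distinguishing feature of \textbf{Case-3} is that $\ubar{c}<\ubar{p}\le\bar{p}\le\bar{c}$, which forces $\ubar{\rho}\le\bar{\rho}\le 1$ and, more importantly, $\bar{\rho}<1$. Consequently the pricing function never needs to climb above $\bar{c}$: there is no threshold $u$ with $\phi(u)=\bar{c}$, and the increasing-segment of $\phi$ lives entirely in the regime $\varphi(y)\in[\ubar{p},\bar{p}]\subseteq[\ubar{c},\bar{c}]$. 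This is the observation that collapses the two-branch structure of $h'$ in Eq. \eqref{h_prime} to the single branch $h'(\varphi(y))=f'^{-1}(\varphi(y))$, and hence $\textsf{BVP}(\omega,\alpha)$ in Eq. \eqref{BVP_necessary_varphi} becomes the single BVP $\widehat{\textsf{BVP}}_1(\omega,\rho,\alpha)$ displayed in Eq. \eqref{BVP_case_3}, with the upper endpoint of the increasing-segment being $\rho\in[\omega,\bar{\rho}]$ rather than a fixed interior point. The infinite-segment of $\phi$ therefore begins at $\rho$ (capped at $\bar{\rho}\le 1$), not at $1$.

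First I would establish the sufficiency direction. Given $(\omega,\rho)$ in the stated product set and a solution $\hat{\varphi}_1$ to $\widehat{\textsf{BVP}}_1(\omega,\rho,\alpha)$, I would define $\phi$ by Eq. \eqref{phi_case_3} and verify the hypotheses of Theorem \ref{sufficiency}. The flat-segment condition Eq. \eqref{flat_sufficiency_omega} is exactly the requirement $\omega\in[F_{\ubar{p}}^{-1}(\tfrac{1}{\alpha}h(\ubar{p})),\ubar{\rho}]$, since $F_{\ubar{p}}$ is strictly increasing on $[0,\ubar{\rho}]$ (as noted in the footnote to Corollary \ref{theorem_case_1_two_BVP}). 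For the increasing-segment condition Eq. \eqref{ODI_principle_sufficiency}, I would substitute $h'(\hat{\varphi}_1(y))=f'^{-1}(\hat{\varphi}_1(y))$ — valid because $\hat{\varphi}_1(y)\in[\ubar{p},\bar{p}]\subseteq[\ubar{c},\bar{c}]$ throughout $[\omega,\rho]$ — which turns the differential inequality in Eq. \eqref{ODI_principle_sufficiency} into the ODE in Eq. \eqref{BVP_case_3} with equality, and the boundary conditions $\hat{\varphi}_1(\omega)=\ubar{p}$, $\hat{\varphi}_1(\rho)\ge\bar{p}$ match (using that $\phi(\bar{\rho})\ge\bar{p}$ is implied once $\phi$ reaches $\bar{p}$ at $\rho\le\bar{\rho}$). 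IC and $\alpha$-competitiveness then follow verbatim from Theorem \ref{sufficiency}.

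For the necessity direction I would invoke Theorem \ref{necessity}: the existence of an $\alpha$-competitive online algorithm yields a critical threshold $\omega$ satisfying Eq. \eqref{flat_sufficiency_omega} together with a strictly-increasing solution $\varphi$ to $\textsf{BVP}(\omega,\alpha)$. In \textbf{Case-3} this $\varphi$ maps $[\omega,\bar{\rho}]$ monotonically onto a subset of $[\ubar{p},\bar{p}]$, so the image never exceeds $\bar{c}$; substituting $h'(\varphi(y))=f'^{-1}(\varphi(y))$ reduces $\textsf{BVP}(\omega,\alpha)$ precisely to $\widehat{\textsf{BVP}}_1(\omega,\rho,\alpha)$ for an appropriate $\rho\in[\omega,\bar{\rho}]$ (namely the point where $\varphi$ first attains $\bar{p}$, or $\bar{\rho}$ itself). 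I expect the only mildly delicate point to be bookkeeping around the upper boundary — confirming that the ``$\ge\bar{p}$'' boundary condition is the right one and that $\rho$ can be taken in the closed interval $[\omega,\bar{\rho}]$ — but this is entirely parallel to the handling in \textbf{Case-1} and \textbf{Case-2}, so the proof reduces to citing Corollary \ref{theorem_case_1_two_BVP}'s argument with the simplification that the single-branch form of $h'$ eliminates the need for a second sub-BVP. I would therefore state the proof as: ``The proof is similar to our previous analysis regarding Corollary \ref{theorem_case_1_two_BVP} and Corollary \ref{theorem_case_2_two_BVP}; the only difference is that $\bar{\rho}<1$ and $\varphi(y)\le\bar{p}\le\bar{c}$ throughout, so $h'(\varphi(y))=f'^{-1}(\varphi(y))$ and $\textsf{BVP}(\omega,\alpha)$ reduces to the single $\widehat{\textsf{BVP}}_1(\omega,\rho,\alpha)$.''
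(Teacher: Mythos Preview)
Your proposal is correct and takes essentially the same approach as the paper, which simply states that the proof is similar to the analysis in \textbf{Case-1}; you have in fact spelled out the specialization (that $\varphi(y)\le\bar{p}\le\bar{c}$ forces $h'(\varphi(y))=f'^{-1}(\varphi(y))$ and collapses $\textsf{BVP}(\omega,\alpha)$ to the single $\widehat{\textsf{BVP}}_1$) in more detail than the paper does. One minor correction: you should write $\bar{\rho}\le 1$ rather than $\bar{\rho}<1$, since equality holds on the boundary $\bar{p}=\bar{c}$, though this does not affect your argument.
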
 
\begin{proof}
	The proof is similar to our analysis of Theorem \ref{existence_uniqueness_opt_case_1}. 
\end{proof}

Similar to Proposition \ref{two_lower_bounds_case_1}, we give the following Proposition \ref{lower_bounds_case_3} to show the existence of function $ \hat{\Gamma}_1(\omega,\rho) $.

\begin{proposition}\label{lower_bounds_case_3}
	Given a convex setup $ \mathcal{S} $ in \textbf{Case-3}, for each given pair of $ (\omega,\rho) \in [0,\ubar{\rho})\times (\omega,\bar{\rho}]$, there exists a well-defined function $ \widehat{\Gamma}_1(\omega,\rho) $ so that $ \normalfont\widehat{\textsf{BVP}}_1(\omega,\rho,\alpha) $ has a unique strictly-increasing  solution if and only if $ \alpha \geq  \widehat{\Gamma}_1(\omega,\rho)$.
\end{proposition}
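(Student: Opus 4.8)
\textbf{Proof proposal for Proposition \ref{lower_bounds_case_3}.}

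The plan is to mirror the argument used for the existence of $\Gamma_1(\omega,u)$ in the proof of Proposition \ref{two_lower_bounds_case_1}, applied to $\widehat{\textsf{BVP}}_1(\omega,\rho,\alpha)$ instead of $\textsf{BVP}_1(\omega,u,\alpha)$. The ODE here is
$\hat{\varphi}_1'(y) = \alpha\cdot\frac{\hat{\varphi}_1(y)-f'(y)}{f'^{-1}(\hat{\varphi}_1(y))}$
on $(\omega,\rho)$, with boundary data $\hat{\varphi}_1(\omega)=\ubar{p}$ and $\hat{\varphi}_1(\rho)\geq\bar{p}$; the only structural difference from Case-1 is that the right endpoint condition is an inequality $\hat{\varphi}_1(\rho)\geq\bar{p}$ rather than the equality $\varphi_1(u)=\bar{c}$, and $\rho$ ranges in $(\omega,\bar{\rho}]$ with $\bar{\rho}\leq 1$. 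First I would fix $(\omega,\rho)$ and set up the associated first-order initial value problem obtained by dropping the left boundary condition and imposing the terminal condition $\hat{\varphi}_1(\rho)=\bar{p}$, i.e.
\begin{equation*}
\begin{cases}
\hat{\varphi}_1'(y) = \alpha\cdot\frac{\hat{\varphi}_1(y)-f'(y)}{f'^{-1}(\hat{\varphi}_1(y))},\ y\in(\omega,\rho),\\
\hat{\varphi}_1(\rho)=\bar{p}.
\end{cases}
\end{equation*}
Since on the relevant range $\hat{\varphi}_1\in[\ubar{p},\bar{p}]\subseteq(\ubar{c},\bar{c})$ the map $f'^{-1}$ is well-defined, strictly positive, and smooth (because $f$ is strictly convex and differentiable in a convex setup), the right-hand side is Lipschitz in $\hat{\varphi}_1$ on a neighborhood of the initial point, so the Picard–Lindel\"{o}f theorem (Lemma \ref{existence_uniquess}, applied verbatim in this setting) gives a unique solution $\varphi_{\mathrm{ivp}}(y;\omega,\rho,\alpha)$ extending over all of $[\omega,\rho]$.

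Next I would establish the two monotonicity facts that drive the shooting argument, exactly as in Lemma \ref{monotonicity_omega} and Lemma \ref{monotonicity_alpha}. The analogue of Lemma \ref{monotonicity_omega} shows $\varphi_{\mathrm{ivp}}(y;\omega,\rho,\alpha)\geq f'(y)$ on $[\omega,\rho]$ — at $y=\rho$ we have $\bar{p}>f'(\rho)$ since $\rho\leq\bar{\rho}$ forces $f'(\rho)\leq\bar{p}$, and a first-crossing argument shows the solution cannot dip below $f'$ — hence $\varphi_{\mathrm{ivp}}$ is strictly increasing in $y$. The analogue of Lemma \ref{monotonicity_alpha} shows $\varphi_{\mathrm{ivp}}(y;\omega,\rho,\alpha)$ is continuous and strictly decreasing in $\alpha$ (two solutions for different $\alpha$ cannot cross, by uniqueness of the IVP defined at any putative crossing point, and the sign of $\varphi'$ at $\rho$ forces the ordering near the right endpoint). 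Evaluating the shooting target $\varphi_{\mathrm{ivp}}(\omega;\omega,\rho,\alpha)$ at the two extremes: as $\alpha\to\infty$ the solution hugs $f'$ backward from $\rho$, so $\varphi_{\mathrm{ivp}}(\omega;\cdot)\to f'(\omega)<\ubar{p}$ (using $\omega\leq\ubar{\rho}$); as $\alpha\downarrow 1$ (or, purely numerically, $\alpha\downarrow 0$) the solution is nearly flat at height $\bar{p}$, so $\varphi_{\mathrm{ivp}}(\omega;\cdot)\to\bar{p}>\ubar{p}$. By continuity and strict monotonicity in $\alpha$ there is a unique $\alpha$ with $\varphi_{\mathrm{ivp}}(\omega;\omega,\rho,\alpha)=\ubar{p}$; I would name this $\widehat{\Gamma}_1(\omega,\rho)$, which is therefore well-defined for each $(\omega,\rho)\in[0,\ubar{\rho})\times(\omega,\bar{\rho}]$, and at $\alpha=\widehat{\Gamma}_1(\omega,\rho)$ the unique IVP solution is precisely the unique strictly-increasing solution of $\widehat{\textsf{BVP}}_1(\omega,\rho,\alpha)$ with $\hat{\varphi}_1(\rho)=\bar{p}$.

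Finally, to get the ``if and only if $\alpha\geq\widehat{\Gamma}_1(\omega,\rho)$'' statement (rather than equality), I would use the strict-decrease in $\alpha$ together with the inequality form of the right boundary condition: for $\alpha>\widehat{\Gamma}_1(\omega,\rho)$, run the IVP forward from the left condition $\hat{\varphi}_1(\omega)=\ubar{p}$; since increasing $\alpha$ pushes the solution value down pointwise, the value at $\rho$ is then strictly less than $\bar{p}$ — which would violate $\hat{\varphi}_1(\rho)\geq\bar{p}$ — so instead I would re-anchor and argue from the terminal side: solving the IVP backward from $\hat{\varphi}_1(\rho)=\bar{p}$ at parameter $\alpha>\widehat{\Gamma}_1(\omega,\rho)$ yields $\hat{\varphi}_1(\omega)<\ubar{p}$, and one may then continuously raise the terminal value above $\bar{p}$ until the solution hits $\ubar{p}$ at $\omega$, producing a strictly-increasing solution of $\widehat{\textsf{BVP}}_1(\omega,\rho,\alpha)$ with $\hat{\varphi}_1(\rho)>\bar{p}$; conversely, for $\alpha<\widehat{\Gamma}_1(\omega,\rho)$ monotonicity forces any solution through $(\omega,\ubar{p})$ to exceed $\bar{p}$ before reaching $\rho$ while still increasing, which is incompatible with the ODE sign structure, so no admissible solution exists. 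I expect the main obstacle to be this last step — carefully handling the inequality boundary condition and the freedom in the terminal value $\hat{\varphi}_1(\rho)\in[\bar{p},+\infty)$ so as to convert the ``unique $\alpha$ for equality'' into ``all $\alpha\geq\widehat{\Gamma}_1(\omega,\rho)$ admit a (unique strictly-increasing) solution'' — but it follows the same monotonicity-in-$\alpha$ template already used for $\Gamma_2(u)$ in the second bullet of Proposition \ref{two_lower_bounds_case_1}, so the details, though delicate, are routine.
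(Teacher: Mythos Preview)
Your overall strategy is exactly what the paper does: its entire proof is the one-line remark that ``the result follows the first bullet in Proposition \ref{two_lower_bounds_case_1},'' i.e.\ the same shooting argument (Picard--Lindel\"of existence/uniqueness, the $\varphi\geq f'$ lower bound giving strict monotonicity in $y$, and monotonicity in $\alpha$) transplanted to $\widehat{\textsf{BVP}}_1$. So the skeleton is correct and on-target.

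There is, however, a genuine sign error in your final paragraph that garbles both directions of the ``if and only if.'' You correctly state (via the Lemma \ref{monotonicity_alpha} analogue) that the \emph{backward} IVP from $(\rho,\bar{p})$ is decreasing in $\alpha$. But when you switch to the \emph{forward} IVP from $(\omega,\ubar{p})$ you write ``increasing $\alpha$ pushes the solution value down pointwise''; this is false. Since the right-hand side $\alpha\cdot\frac{\varphi-f'(y)}{f'^{-1}(\varphi)}$ is positive and scales with $\alpha$, the forward solution through $(\omega,\ubar{p})$ is \emph{increasing} in $\alpha$ at every $y>\omega$. With the correct sign the argument is immediate and much cleaner than your re-anchoring workaround: at $\alpha=\widehat{\Gamma}_1(\omega,\rho)$ the forward solution hits $\bar{p}$ exactly at $\rho$; for $\alpha>\widehat{\Gamma}_1(\omega,\rho)$ it overshoots, giving $\hat{\varphi}_1(\rho)>\bar{p}$, which satisfies the inequality boundary condition; for $\alpha<\widehat{\Gamma}_1(\omega,\rho)$ it undershoots, giving $\hat{\varphi}_1(\rho)<\bar{p}$, so no admissible solution exists (uniqueness of the IVP from $(\omega,\ubar{p})$ rules out any other candidate). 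Your stated converse --- that for $\alpha<\widehat{\Gamma}_1$ the solution ``exceeds $\bar{p}$ before reaching $\rho$'' --- is therefore backwards as well. Fix the monotonicity direction and the last paragraph collapses to two lines, exactly parallel to how the paper handles the inequality boundary condition for $\Gamma_2(u)$ in the second bullet of Proposition \ref{two_lower_bounds_case_1}.
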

\begin{proof}
	The result follows the first bullet in Proposition \ref{two_lower_bounds_case_1}.
\end{proof}

Based on Proposition \ref{lower_bounds_case_3}, we give the following Theorem \ref{existence_uniqueness_opt_case_3} to show the calculation of $ \alpha_*(\mathcal{S}) $ in \textbf{Case-3}. 

\begin{theorem}\label{existence_uniqueness_opt_case_3}
	Given a convex setup $ \mathcal{S} $ in \textbf{Case-3}, there exists a unique $ \omega_*\in [0,\ubar{\rho}]$ such that the optimal competitive ratio $ \alpha_*(\mathcal{S}) $ is given by
	\begin{align}\label{system_of_equations_case_3}
	\alpha_*(\mathcal{S}) = \frac{h\big(\ubar{p}\big)}{F_{\ubar{p}}(\omega_*)} = \widehat{\Gamma}_1\left(\omega_*,\bar{\rho}\right),
	\end{align}
	where $ \omega_* $ is the unique root that satisfies Eq. \eqref{system_of_equations_case_3}. 
	Meanwhile,  $ \textsf{PM}_\phi  $ is $ \alpha_*({\mathcal{S}}) $-competitive if and only if $ \phi $ is given by Eq. \eqref{phi_case_3} with $(\omega,\rho,\alpha) = (\omega_*,\bar{\rho}, \alpha_*(\mathcal{S}))  $.  
\end{theorem}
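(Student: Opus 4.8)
The plan is to mirror the argument of Theorem \ref{existence_uniqueness_opt_case_1}, transplanting it into the simpler geometry of \textbf{Case-3}, where $\ubar{p}\le\bar{p}\le\bar{c}$ so that the pricing function never reaches the kink $\bar{c}$ of $h$ and only one BVP, namely $\widehat{\textsf{BVP}}_1(\omega,\rho,\alpha)$, is involved. First I would invoke Corollary \ref{theorem_case_3_two_BVP}: any $\alpha$-competitive online algorithm forces a pair $(\omega,\rho)\in[F_{\ubar{p}}^{-1}(\tfrac1\alpha h(\ubar p)),\ubar\rho]\times[\omega,\bar\rho]$ for which $\widehat{\textsf{BVP}}_1(\omega,\rho,\alpha)$ has a strictly-increasing solution, and conversely $\textsf{PM}_{\phi}$ with $\phi$ given by Eq. \eqref{phi_case_3} is $\alpha$-competitive whenever such a solution exists. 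So characterizing the optimal $\alpha$ reduces to finding the smallest $\alpha$ admitting a feasible $(\omega,\rho)$. By Proposition \ref{lower_bounds_case_3}, for each $(\omega,\rho)$ the BVP is solvable iff $\alpha\ge\widehat\Gamma_1(\omega,\rho)$, so the optimal ratio equals $\inf_{(\omega,\rho)}\widehat\Gamma_1(\omega,\rho)$ over the feasible region, which itself depends on $\alpha$ through the lower endpoint $F_{\ubar{p}}^{-1}(\tfrac1\alpha h(\ubar p))$ for $\omega$.

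Next I would run the same two monotonicity claims used in the Case-1 proof. The first claim is that $\omega$ should be taken as small as the initial-inequality constraint allows, i.e. $\omega=F_{\ubar{p}}^{-1}(\tfrac1\alpha h(\ubar p))$, equivalently $\alpha=h(\ubar p)/F_{\ubar{p}}(\omega)$; enlarging $\omega$ only shrinks the interval $[\omega,\rho]$ on which $\hat\varphi_1$ must climb from $\ubar p$ to at least $\bar p$, forcing a steeper curve and hence a larger $\widehat\Gamma_1$. The second claim is that, with $\rho$ free to be chosen, one wants $\rho$ as large as possible so as to flatten the required slope; since $\bar p\le\bar c$ means $\bar\rho=f'^{-1}(\bar p)\le 1$ and the second boundary condition is $\hat\varphi_1(\rho)\ge\bar p$, the extremal (slope-minimizing) choice is $\rho=\bar\rho$, at which the boundary condition is met with equality $\hat\varphi_1(\bar\rho)=\bar p$. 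I would justify both claims by the monotonicity of $\widehat\Gamma_1$ in its arguments — decreasing in $\rho$, increasing in $\omega$ — which follows from the IVP-comparison lemmas (the analogues of Lemma \ref{monotonicity_omega} and Lemma \ref{monotonicity_alpha}) applied to the shooting problem $\widehat{\textsf{IVP}}_1$ obtained by dropping the first boundary condition; solutions for different $\alpha$ or different anchor points cannot cross, by the uniqueness half of Picard–Lindelöf.

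Combining the two claims, the optimal design collapses to a single scalar equation: substitute $\rho=\bar\rho$, set $\omega_*=F_{\ubar{p}}^{-1}(h(\ubar p)/\widehat\Gamma_1(\omega_*,\bar\rho))$, and observe that the competitive ratio must simultaneously satisfy $\alpha_*(\mathcal S)=h(\ubar p)/F_{\ubar{p}}(\omega_*)$ and $\alpha_*(\mathcal S)=\widehat\Gamma_1(\omega_*,\bar\rho)$, i.e. Eq. \eqref{system_of_equations_case_3}. Existence and uniqueness of the root $\omega_*$ then follow from a standard intermediate-value-plus-monotonicity argument: as $\omega$ increases from $0$ to $\ubar\rho$, the map $\omega\mapsto h(\ubar p)/F_{\ubar{p}}(\omega)$ is strictly decreasing (since $F_{\ubar{p}}$ is strictly increasing on $[0,\ubar\rho]$, with $F_{\ubar p}(0)=0$ giving a $+\infty$ limit), while $\omega\mapsto\widehat\Gamma_1(\omega,\bar\rho)$ is strictly increasing and finite (for $\omega$ near $\ubar\rho$ it stays bounded, whereas near $0$ it is small), so the two curves cross exactly once. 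Feasibility of this $\omega_*$ — i.e. that the resulting $\omega_*$ indeed lies in $[F_{\ubar{p}}^{-1}(\tfrac1{\alpha_*}h(\ubar p)),\ubar\rho]$ — is automatic because the first equation is exactly $\omega_*=F_{\ubar{p}}^{-1}(\tfrac1{\alpha_*}h(\ubar p))$. Finally, the ``if and only if'' for $\textsf{PM}_{\phi_*}$ follows: sufficiency is Theorem \ref{sufficiency} applied to $\phi_*$ from Eq. \eqref{phi_case_3} with $(\omega_*,\bar\rho,\alpha_*(\mathcal S))$; necessity is the uniqueness of the optimal $(\omega,\rho,\alpha)$ triple just established together with Corollary \ref{theorem_case_3_two_BVP}.

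The main obstacle I anticipate is rigorously pinning down the endpoint behaviour and monotonicity of $\widehat\Gamma_1(\omega,\rho)$ as a two-variable function — in particular establishing that $\widehat\Gamma_1(\omega,\bar\rho)\to$ a value strictly below $h(\ubar p)/F_{\ubar p}(\omega)$ at one end and strictly above at the other, so that the crossing genuinely exists inside $[0,\ubar\rho]$. This is where the shooting-method reasoning has to be made precise: one must argue that as $\alpha$ ranges over $[1,\infty)$ the solution of $\widehat{\textsf{IVP}}_1$ anchored at $\hat\varphi_1(\bar\rho)=\bar p$ sweeps its left endpoint value $\hat\varphi_1(\omega)$ continuously and monotonically from something $\ge\bar p$ down toward $f'(\omega)\le\ubar p$, and that $f'(\omega)<\ubar p$ on the relevant range (which holds because $\omega\le\ubar\rho=f'^{-1}(\ubar p)$ strictly when $\ubar p<\bar c$) so that the target value $\ubar p$ is actually attained — this is precisely the ``$\ubar p - f'(\omega)>0$'' condition exploited in the proof of Theorem \ref{strictly_increasing_psi}. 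Once that is in hand, everything else is a routine transcription of the Case-1 argument, and I would simply say ``the result follows similarly as Theorem \ref{existence_uniqueness_opt_case_1}'' for the remaining bookkeeping.
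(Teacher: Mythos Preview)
Your proposal is correct and follows essentially the same approach as the paper: the paper's own proof is literally the one-liner ``The proof is similar to our analysis of Theorem \ref{existence_uniqueness_opt_case_1},'' and what you have written is precisely the Case-3 transcription of that argument, with the single-BVP simplification and the choice $\rho=\bar\rho$ replacing the Case-1 intersection $\Gamma_1(\omega,u)=\Gamma_2(u)$. Your closing remark that you would ultimately say ``the result follows similarly as Theorem \ref{existence_uniqueness_opt_case_1}'' is exactly what the paper does.
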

\begin{proof}
	The proof is similar to our analysis of Theorem \ref{existence_uniqueness_opt_case_1}. 
\end{proof}

Based on Theorem \ref{existence_uniqueness_opt_case_1}, Theorem \ref{existence_uniqueness_opt_case_2}, and Theorem \ref{existence_uniqueness_opt_case_3}, Theorem \ref{major_results} follows. We thus complete the proof of Theorem \ref{major_results} in all the three cases.

\begin{remark}
	Based on Theorem \ref{existence_uniqueness_opt_case_1}, Theorem \ref{existence_uniqueness_opt_case_2}, and Theorem \ref{existence_uniqueness_opt_case_3}, we conclude that the upper bound of the optimal pricing function is always equal to the maximum valuation density, namely, $ \phi_*(\bar{\rho}) = \bar{p} $.  Note that compared to $ \phi_*(\bar{\rho}) > \bar{p} $, by setting $ \phi_*(\bar{\rho}) = \bar{p} $ we can have a higher resource utilization level while still guarantee the competitive ratio of $ \textsf{PM}_\phi $. The intuition here is that it is always beneficial to have a higher resource utilization level as long as the resources are sold at the right price. 
\end{remark}

\section{Proof of Corollary \ref{property_of_alpha_S}}
\label{proof_of_property_of_alpha_S}
The properties of $ \alpha_*(\mathcal{S}) $ directly follow the monotonicity of $ \Gamma_1(\omega,u) $ and $ \Gamma_2(u) $. We briefly explain them as follows:
\begin{itemize}
	\item For a given $ \bar{p}\in (\ubar{c},+\infty) $, we claim that $ \Gamma_1(\omega,u) $ is strictly decreasing in $ \ubar{p}\in (\ubar{c},\bar{p}] $. We use \textbf{Case-1} to explain the intuition. A larger $ \ubar{p} $ indicates a smaller value of $ \bar{c} - \ubar{p} $, and thus for a given $ \omega $ and $ u $, the pricing function $ \varphi_1 $ can increase from $ \ubar{p} $ to $ \bar{c} $ with a smaller rate. A smaller increasing rate of $ \varphi_1 $ indicates that the value of $ \alpha $ can be smaller as well. Therefore, $ \Gamma_1(\omega,u) $ is decreasing in $ \ubar{p}\in (\ubar{c},\bar{p}] $. As we can see from Fig. \ref{two_alpha_intersection}, a smaller $ \Gamma_1(\omega,u) $ indicates a smaller value of $ u(\omega)$ (note that $ \Gamma_2(u) $ will not change w.r.t. the increasing of $ \ubar{p} $), as well as a smaller value of $ \alpha(\omega) $, which thus leads to a smaller value of $ \alpha_*(\mathcal{S}) $ in the end. Therefore, $\alpha_*(\mathcal{S})  $ is strictly decreasing in $ \ubar{p}\in (\ubar{c},\bar{p}] $ for a given $ \bar{p}\in (\ubar{c},+\infty) $. 
	
	\item  We can prove similarly that $\alpha_*(\mathcal{S})  $ is strictly increasing in $ \bar{p}\in [\ubar{p},+\infty) $ for a given $ \ubar{p}\in (\ubar{c},+\infty) $. We skip the details for brevity.
	
	\item Note that  $ \ubar{p} = \bar{p} \in  (\ubar{c},+\infty)$ happens only in \textbf{Case-2} and \textbf{Case-3}. For instance, in \textbf{Case-2}, $ \ubar{p} = \bar{p} \in  (\bar{c},+\infty)$ indicates that $ \hat{\varphi}_2(\omega_*)  = \hat{\varphi}_2(1) = \ubar{p} = \bar{p} $, where $ \hat{\varphi}_2 $ is the unique solution to the BVP in Eq. \eqref{BVP_case_2}. We argue that such a solution occurs only when $ \omega_* = 1 $ and in this case the optimal competitive ratio $ \alpha_*(\mathcal{S}) = 1 $ since both the online and offline strategies are to satisfy all the request until reaching the total capacity limit. We can prove similarly that in \textbf{Case-3}, $ \omega_* = \ubar{\rho} $ and $ \alpha_*(\mathcal{S})=1 $ as long as $ \ubar{p} = \bar{p} \in (\ubar{c},\bar{c}] $. Combining the results in both cases, we have $ \alpha_*(\mathcal{S}) = 1 $ and $ \omega_* = \ubar{\rho} $ when $ \ubar{p} = \bar{p} \in (\ubar{c},+\infty) $.	
\end{itemize}

We thus complete the proof of Corollary \ref{property_of_alpha_S}.

\section{Proof of Theorem \ref{sufficiency_multiple_paper}}
\label{section_extension}
Before presenting the proof of Theorem \ref{sufficiency_multiple_paper}, we first give the extension of $ \textsf{PM}_\phi $ in Algorithm \ref{PriMe_multiple}.

\begin{algorithm}
	\caption{Posted Price Mechanism ($ \textsf{PM}_{\bm{\phi}} $)}	
	\begin{algorithmic}[1]
		\STATE \textbf{Inputs:} A given setup $ \mathcal{S} = \{f_t,\ubar{p}_t,\bar{p}_t\}_{\forall t\in\mathcal{T}} $ and $ \bm{\phi} = \{\phi_t\}_{\forall t\in\mathcal{T}}$. 
		
		\STATE \textbf{Initialize}: $ \hat{y}_t^{(0)} = 0 $ and $ \hat{p}_t^{(0)} = \phi(\hat{y}_t^{(0)})$ for all $ t\in\mathcal{T} $.  
		
		\WHILE{a new agent $ n $ arrives}
		
		\STATE Supplier publishes the price $ \hat{p}_t^{(n-1)} $ for $ t\in\mathcal{T} $. 
		
		\IF {$ v_n -  \sum_{t\in\mathcal{T}_n} r_n^t \cdot \phi_t\big(y_t^{(n-1)}\big) < 0 $}
		
		\STATE Agent $ n $ leaves (i.e., set $ \hat{x}_n =0 $)
		
		\ELSIF{$\hat{y}_t^{(n-1)}+ r_n^t >1$ for any $ t\in \mathcal{T} $}
		\STATE Request $ n $ is rejected (i.e., set $ \hat{x}_n =0 $)
		
		\ELSE 
		\STATE Request $ n $ is satisfied (i.e., set $ \hat{x}_n = 1 $)
		
		\STATE Set  the payment $ \hat{\pi}_n $ by $ \hat{\pi}_n =  \sum_{t\in\mathcal{T}_n} r_n^t \cdot \phi_t\Big(y_t^{(n-1)}\Big)  $. 
		
		\STATE Update the total resource utilization by  $ \hat{y}_t^{(n)} =  \hat{y}_t^{(n-1)} + r_n^t $, $ \forall t\in\mathcal{T} $. 
		
		\STATE Update the price by $
		\hat{p}_t^{(n)} = \phi_t\left(\hat{y}_t^{(n)}\right), \forall t\in\mathcal{T}$. 
		
		\ENDIF
		\ENDWHILE
	\end{algorithmic}
	\label{PriMe_multiple}
\end{algorithm}

Based on Algorithm \ref{PriMe_multiple}, the sufficient conditions of Theorem \ref{sufficiency_multiple_paper} can be derived based on Proposition \ref{OPD_principle} as follows.  We first prove that there indeed exists an index $ k\in\mathcal{N} $ such that the initial inequality hold. The major argument is that in the worst-case, each agent may only require resource for a single time slot, leading to the existence of a critical threshold $ \omega_t $ for each $ t\in\mathcal{T} $. Specifically, suppose $ k\in \mathcal{N} $ and after processing all the agents within $ \{1,2,\cdots,k\} $, we have
\begin{align*}
P_k =  \sum_{n=1}^k v_n -  \sum_{t\in\mathcal{T}}\bar{f}_t(\hat{y}_t^{(k)}) = \sum_{n=1}^k \left(\hat{\gamma}_n  + \sum_{t\in\mathcal{T}} r_n^t \cdot \phi_t(y_t^{(n-1)})\right)- \sum_{t\in\mathcal{T}}\bar{f}_t(\hat{y}_t^{(k)}).
\end{align*}
The dual objective after processing request $ k $ is given by
\begin{align*}
D_k = \sum_{n=1}^{k} \hat{\gamma}_n +   \sum_{t\in\mathcal{T}} h_t(\hat{p}_t^{(k)}).
\end{align*}
Based on the pricing function given by  Eq. \eqref{sufficient_phi_multiple_paper}, the initial inequality $ P_k\geq \frac{1}{\alpha_t}D_k $ leads to the following
\begin{align*}
\left(1 - \frac{1}{\alpha_t}\right)\sum_{n=1}^k \hat{\gamma}_n +   \sum_{t\in\mathcal{T}} \ubar{p}_t\cdot \left(\sum_{n=1}^kr_n^t\right)- \sum_{t\in\mathcal{T}}\bar{f}_t\left(\sum_{n=1}^k r_n^t\right)  \geq \frac{1}{\alpha_t} \sum_{t\in\mathcal{T}} h_t\big
(\ubar{p}_t\big).
\end{align*}
Since $ \alpha_t\geq 1 $ and $ \hat{\gamma}_n\geq 0 $, it suffices to have the following inequality for each time slot:
\begin{align}\label{p_j}
\ubar{p}_t\cdot \left(\sum_{n=1}^kr_n^t\right)- \bar{f}_t\left(\sum_{n=1}^k r_n^t\right)  \geq \frac{1}{\alpha_t} h_t\big(\ubar{p}_t\big).
\end{align}
The above inequality Eq. \eqref{p_j}
holds if the following one holds:
\begin{align*}
\ubar{p}_t \omega_t - \bar{f}_t\left(\omega_t\right) = F_{\ubar{p}_t}(\omega_t) \geq \frac{1}{\alpha_t}  h_t\big(\ubar{p}_t\big),
\end{align*}
where $\omega_t =  \sum_{n=1}^kr_n^t $. Meanwhile, for the same reason as the previous basic resource allocation model, it is obvious that $ \omega_t $ must be less than or equal to $ \ubar{\rho}_t $ because the selling price must be larger than or equal to the marginal cost, i.e,. $ \phi_t(\omega_t) = \ubar{p}_t \geq f_t'(\omega_t) $. Therefore, Eq. \eqref{flat_sufficiency_omega_multiple_paper} indicates that the initial inequality $ P_k\geq \frac{1}{\alpha_t}D_k $ holds for some $ k\in\mathclap{N} $. 

We next prove the  ordinary differential inequality in Eq. \eqref{ODI_principle_sufficiency_multiple_paper}.  The change in the primal objective is given as follows:
\begin{align*}
P_n - P_{n-1}
=\ & v_n - \sum_{t\in\mathcal{T}}\left(\bar{f}_t(\hat{y}_t^{(n)})-\bar{f}_t(\hat{y}_t^{(n-1)})\right)\\
=\ & \hat{\gamma}_n +\sum_{t\in\mathcal{T}}\phi_t(\hat{y}_t^{(n-1)})\left(\hat{y}_t^{(n)}-\hat{y}_t^{(n-1)}\right)-\sum_{t\in\mathcal{T}} \left(\bar{f}_t(\hat{y}_t^{(n)})-\bar{f}_t(\hat{y}_t^{(n-1)})\right).
\end{align*}
The change in the dual objective is given as follows:
\begin{align*}
D_n-D_{n-1} = \hat{\gamma}_n +  \sum_{t\in\mathcal{T}}\left( h_t(\hat{p}_t^{(n)})-h_t(\hat{p}_t^{(n-1)})\right).
\end{align*}
To guarantee that $ P_n- P_{n-1}\geq \frac{1}{\alpha}\left(D_n -D_{n-1}\right) $ holds, it is equivalent to having the following
\begin{align*}
\left(1-\frac{1}{\alpha_t}\right)\hat{\gamma}_n + \sum_{t\in\mathcal{T}}  \left(\phi_t(\hat{y}_t^{(n-1)})\big(\hat{y}_t^{(n)}-\hat{y}_t^{(n-1)}\big)- \big(\bar{f}_t(\hat{y}_t^{(n)})-\bar{f}_t(\hat{y}_t^{(n-1)})\big)\right)\\
\geq \frac{1}{\alpha_t}  \sum_{t\in\mathcal{T}}\left( h_t(\hat{p}_t^{(n)})-h_t(\hat{p}_t^{(n-1)})\right),
\end{align*}
where the first term $ \hat{\gamma}_n \geq 0$ always holds. Therefore, if the following inequality holds at each round, then the incremental inequality holds as well:
\begin{align}\label{phi_j}
\sum_{t\in\mathcal{T}}  \left(\phi_t(\hat{y}_t^{(n-1)})\big(\hat{y}_t^{(n)}-\hat{y}_t^{(n-1)}\big)- \big(\bar{f}_t(\hat{y}_t^{(n)})-\bar{f}_t(\hat{y}_t^{(n-1)})\big)\right)
\geq\frac{1}{\alpha_t} \sum_{t\in\mathcal{T}}\left( h_t(\hat{p}_t^{(n)})-h_t(\hat{p}_t^{(n-1)})\right).
\end{align}
To guarantee that Eq. \eqref{phi_j}
holds, it suffices to have the following per-slot inequality
\begin{align*}
\phi_t(\hat{y}_t^{(n-1)})\left(\hat{y}_t^{(n)}-\hat{y}_t^{(n-1)}\right)- \left(\bar{f}_t(\hat{y}_t^{(n)})-\bar{f}_t(\hat{y}_t^{(n-1)})\right)
\geq\frac{1}{\alpha_t}  \left( h_t(\hat{p}_t^{(n)})-h_t(\hat{p}_t^{(n-1)})\right),
\end{align*}
which is equivalent to the following:
\begin{align*}
\phi_t(\hat{y}_t^{(n-1)})-\frac{\bar{f}_t(\hat{y}_t^{(n)})-\bar{f}_t(\hat{y}_t^{(n-1)})}{\hat{y}_t^{(n)}-\hat{y}_t^{(n-1)}}
\geq \frac{1}{\alpha_t} \cdot \frac{h_t\left(\phi_t(\hat{y}_t^{(n)})\right)-h_t\left(\phi_t(\hat{y}_t^{(n-1)})\right)}{\phi_t(\hat{y}_t^{(n)})-\phi_t(\hat{y}_t^{(n-1)})}\cdot \frac{\phi_t(\hat{y}_t^{(n)})-\phi_t(\hat{y}_t^{(n-1)})}{\hat{y}_t^{(n)}-\hat{y}_t^{(n-1)}}.
\end{align*}
Therefore, we have the following inequality in the differential form:
\begin{align}\label{ODE_multiple_appendix}
\phi_t(y) - \bar{f}_t'(y) \geq \frac{1}{\alpha_t}\cdot h_t'\big(\phi_t(y)\big)\cdot\phi_t'(y), \forall y\in [\omega_t,1).
\end{align}
This means, if Eq. \eqref{ODE_multiple_appendix} holds for all $ y\in [\omega_t,1) $, then the incremental inequality holds at each round for all $ y\in [\omega_t,1) $. 

Similar to our previous analysis, we need to guarantee that the incremental inequality holds at each round for all $ y\in [\omega_t,1] $. Therefore, we need to consider the boundary condition of $ \phi_t(y) $ when $ y = 1 $.  First, it is obvious that $ \phi_t(\omega_t) = \ubar{p}_t $  as the pricing function $ \phi_t $ must be continuous.  Taking integration of both sides of Eq. \eqref{ODE_multiple_appendix} leads to
\begin{align*}
\int_{\omega_t}^{\rho_t}\phi_t(y)dy - \bar{f}_t(\rho_t) + \bar{f}_t(\omega_t) \geq  \frac{1}{\alpha_t}\Big(h_t\big(\phi_t(\rho_t)\big) - h_t\big(\ubar{p}_t\big)\Big),
\end{align*}
where $ \rho_t $ denotes the final utilization level at time slot $ t\in\mathcal{T} $.
Meanwhile, based on the flat-segment, we have 
$ \ubar{p}_t \omega_t - \bar{f}_t(\omega_t) \geq \frac{1}{\alpha_t} h_t(\ubar{p}_t) $. 
Thus, we have
\begin{align*}
\ubar{p}_t\omega_t + \int_{\omega_t}^{\rho_t}\phi_t(y)dy - \bar{f}_t(\rho_t) \geq  \frac{1}{\alpha_t}h_t\big(\phi_t(\rho_t)\big),
\end{align*}
which is the same as Eq. \eqref{integral_version_expanding_single_appendix}. 
When $ \rho_t = 1 $, the above equality also holds, and thus we have
\begin{align*}
\ubar{p}_t\omega_t + \int_{\omega_t}^{1}\phi_t(y)dy - \bar{f}_t(1) \geq  \frac{1}{\alpha_t}h_t\big(\phi_t(1)\big).
\end{align*}

On the other hand, $ \textsf{PM}_{\bm{\phi}} $ is $ \alpha_t $-competitive indicates that the following inequality must hold:
\begin{align}\label{boundary_p_bar_j}
\ubar{p}_t\omega_t + \int_{\omega_t}^{1}\phi_t(y)dy - \bar{f}_t(1) \geq  \frac{1}{\alpha_t} \sum_{t\in \mathcal{T}}h_t(\bar{p}_t),
\end{align}
where the right-hand-side represents the optimal social welfare in the offline setting when the resource at time $ t $ has been used up. \textit{Note that here the worst-case scenario is different from the single time slot case. The rationality of Eq. \eqref{boundary_p_bar_j} is as follows. When the resource at time $ t $ is used up by agents whose valuation equals the payment, let us assume that there is  a follow-up arrival instance  which consists of  two groups of agents as follows: i) the first group of agents ask for resource at time $ t $ with the maximum valuation density $ \bar{p}_t $, and ii) the second group of agents ask for resources at time $ t $ as well as at time $ \hat{t} $ with the maximum valuation density $ \bar{p}_{\hat{t}} $, $ \forall \hat{t}\in\mathcal{T}\backslash \{t\} $. Meanwhile, we assume that the number of agents in these two groups are large enough so that the total capacity of each time slot can be used up. In the online setting, the social welfare is given by the left-hand-side of Eq. \eqref{boundary_p_bar_j}, meaning that the valuation of the accepted agents are all  equal to their payments, while the follow-up agents are all rejected due to the capacity limit at time $ t $. In comparison, in the offline setting, the optimal social welfare is to allocate all the capacity at time $ t $ to those 
	agents in the follow-up instance whose valuation densities are $ \bar{p}_t $, $ \forall t\in\mathcal{T} $, as a result, the optimal optimal social welfare in hindsight is $ \sum_{t\in \mathcal{T}}h_t(\bar{p}_t) $, as given by the right-hand-side of Eq. \eqref{boundary_p_bar_j}.} Therefore, to guarantee that the above Eq. \eqref{boundary_p_bar_j} holds, it suffices to have the following boundary condition:
$ \phi_t(1) - f_t(1) \geq \sum_{t\in \mathcal{T}}h_t(\bar{p}_t) \Longrightarrow \phi_t(1) \geq \bar{p}_t + \sum_{\hat{t}\in \mathcal{T}\backslash\{t\}}h_{\hat{t}}(\bar{p}_{\hat{t}}) $. 

Summarizing our above proofs regarding the critical threshold, the differential inequality and the boundary conditions, we complete the proof of Theorem \ref{sufficiency_multiple_paper}.

(\textit{Impacts of Multiple Time Slots on the Competitive Ratio}) Based on the second boundary condition in Eq. \eqref{ODI_principle_sufficiency_multiple_paper}, $ \phi_t(1) $ follows roughly linearly w.r.t. the total number of time slots. For example, when we assume $ \bar{p}_t\geq \bar{c}_t $ for all $ t\in\mathcal{T} $, then the second boundary condition in Eq. \eqref{ODI_principle_sufficiency_multiple_paper} indicates that $\phi_t(1)\geq  \bar{p}_t + \sum_{\hat{t}\in \mathcal{T}\backslash\{t\}}h_{\hat{t}}(\bar{p}_{\hat{t}}) = \sum_{t\in \mathcal{T}} \bar{p}_{t} - \sum_{\hat{t}\in \mathcal{T}\backslash\{t\}} f_{\hat{t}}(1)$. 
Our previous analysis regarding the basic model shows that the calculation of the optimal competitive ratio $ \alpha_*(\mathcal{S}) $ depends on the boundary condition of $ \phi_* $ (i.e., the monotonicity of $ \alpha_*(\mathcal{S}) $ in Corollary \ref{property_of_alpha_S}). Therefore, in the case with multiple time slots, the final competitive ratio $ \alpha = \max_t\{\alpha_t\} $  depends on $ |\mathcal{T}| $ as well. As we mention in the conclusion of the paper, it is difficult to  analyze the property  of $ \alpha_*(\mathcal{S}) $ w.r.t. $ \ubar{p} $ and $ \bar{p} $. Similarly, here it is also difficult to analyze the dependency of $ \alpha $ on the total number of time slots. We leave this for future research. 

(\textit{Necessary Conditions}) The necessary conditions in Theorem \ref{necessity} can also be generalized to consider the case with multiple time slots. We skip the details for brevity as most of the proofs are similar to our previous analysis in the paper, except that the worst-case analysis regarding the boundary conditions should be performed in a different manner, as demonstrated by Eq. \eqref{boundary_p_bar_j}.

\section{Computation of $\omega_*$, $u_*$, and $\alpha_*(\mathcal{S}) $}
\label{appendix_compute_alpha}

In this section, we present the algorithms for computing $\omega_*$, $u_*$, and $\alpha_*(\mathcal{S}) $ in \textbf{Case-1}. We skip the computational details regarding \textbf{Case-2} and \textbf{Case-3} for brevity.

We give Algorithm \ref{Gamma_1_function} below to show the calculation of $ \Gamma_1(\omega,u) $ for a given $ \omega $ and $ u $. We also give Algorithm \ref{Gamma_2_function} to show the calculation of $ \Gamma_2(u) $ for a given $ u $. Based on $\texttt{\textbf{GAMMA}}_1(\omega,u) $ in Algorithm \ref{Gamma_1_function} and $\texttt{\textbf{GAMMA}}_2(u) $ in Algorithm \ref{Gamma_2_function}, Algorithm \ref{compute_alpha} shows the calculation of $\omega_*$, $u_*$, and $\alpha_*(\mathcal{S}) $ for a given convex setup $ \mathcal{S} $.

\begin{algorithm}
	\caption{$\texttt{\textbf{GAMMA}}_1(\omega,u) $} 	
	\begin{algorithmic}[1]
		\STATE \textbf{Inputs:} A given setup $ \mathcal{S} = \{f,\ubar{p},\bar{p}\} $, $ \omega$, $ u $, and $ \epsilon $.  Set $ \alpha_1 = 1 $, $ \alpha_2 =  R_1$, and $ \varphi_1(u) = R_2 $. \label{R_12}
		
		\WHILE{$ |\varphi_1(u) - \bar{c}| > \epsilon $}
		
		\STATE Compute $ \alpha_0 = \frac{\alpha_1+\alpha_2}{2} $.
		
		\STATE Solve the following IVP and obtain the solution $ \varphi_1(y) $: \label{ivp_solver}
		\begin{align*}
		\normalfont\textsf{IVP}(\omega,u,\alpha_0) 
		\begin{cases}
		\varphi_1'(y) = \alpha_0 \cdot \frac{\varphi_1(y)  - f'(y) }{f'^{-1}\left(\varphi_1(y)\right)}, y\in(\omega,u);\\
		\varphi_1(\omega) = \ubar{p}.
		\end{cases}
		\end{align*}
		
		\IF {$ \varphi_1(u) > \bar{c} $}
		\STATE $ \alpha_2 = \alpha_0 $.
		\ELSE
		\STATE $ \alpha_1 = \alpha_0 $.
		\ENDIF
		
		\ENDWHILE
		\STATE $ \Gamma_1(\omega,u)  = \alpha_0 $.
		\STATE \textbf{Outputs:} $ \Gamma_1(\omega,u)$.  
	\end{algorithmic}
	\label{Gamma_1_function}	
\end{algorithm}

We can see that Algorithm \ref{Gamma_1_function} performs a bisection search for the unique output of $ \Gamma_1(\omega,u) $.
Note that $ \epsilon $ is a constant that is sufficiently small. 
In both Algorithm \ref{Gamma_1_function} and Algorithm \ref{Gamma_2_function}, $ R_1 $ and $ R_2 $ are two large positive constants that can be initialized appropriately based on the setup $ \mathcal{S} $. In line \ref{ivp_solver} of Algorithm \ref{Gamma_1_function}, we need to solve the IVP parameterized by $ \omega, u $, and $ \alpha_0 $. We emphasize that solving $ \textsf{IVP}(\omega,u,\alpha_0) $ in this step is light-weight and  various off-the-shelf methods/solvers can be used. Algorithm \ref{compute_alpha} searches between $ [0,1] $ for the unique $ u_* $. Based on the unique $ u_* $, the optimal critical threshold $ \omega_* $ can be given by line \ref{omega_star} of Algorithm \ref{compute_alpha}. Recall that in Algorithm \ref{compute_alpha},  $ F_{\ubar{p}}^{-1} $ denotes the inverse of $ F_{\ubar{p}} $, which is defined by Eq. \eqref{profit_function}.

\begin{algorithm}
	\caption{$\texttt{\textbf{GAMMA}}_2(u) $} 	
	\begin{algorithmic}[1]
		\STATE \textbf{Inputs:} A given setup $ \mathcal{S} = \{f,\ubar{p},\bar{p}\} $, $ u$, and $ \epsilon $.  Set $ \alpha_1 = 1 $, $ \alpha_2 =  R_1$, and $ GAP  = R_2 $.
		
		\WHILE{$ |GAP| > \epsilon $}
		
		\STATE Compute $ \alpha_0 = \frac{\alpha_1 + \alpha_2}{2} $.
		
		\STATE Set $ GAP  = \int_u^{1}\frac{\alpha_0 f'(y)}{ \exp(y\alpha_0) } dy - \left(\frac{\bar{p}}{\exp(\alpha_0)} + \frac{\bar{c}}{\exp(u\alpha_0)}\right) $.
		
		\IF {$ GAP  > 0 $}
		\STATE $ \alpha_2 = \alpha_0 $.
		\ELSE
		\STATE $ \alpha_1 = \alpha_0 $.
		\ENDIF
		
		\ENDWHILE
		\STATE $ \Gamma_2(u) = \alpha_0 $. 
		\STATE \textbf{Outputs:} $ \Gamma_2(u) $. 
	\end{algorithmic}
	\label{Gamma_2_function}	
\end{algorithm}

\begin{algorithm}
	\caption{Calculation of $\omega_*$, $u_*$, and $\alpha_*(\mathcal{S}) $} 	
	\begin{algorithmic}[1]
		\STATE \textbf{Inputs}: A given setup $ \mathcal{S} = \{f,\ubar{p},\bar{p}\} $ and $ \epsilon $.  Set $ u_1 = 0 $, $ u_2 =  1$, and $ GAP = 1 $.
		
		\WHILE{$ |GAP| > \epsilon $}
		
		\STATE Compute $ u_* = \frac{u_1+u_2}{2} $.
		
		\STATE $ \Gamma_2(u_*) = \texttt{\textbf{GAMMA}}_2(u_*) $.
		
		\STATE Set $ \omega_*  = F_{\ubar{p}}^{-1}\big(\frac{h(\ubar{p})}{\Gamma_2(u_*)}\big) $. \label{omega_star}
		
		\STATE $ \Gamma_1(\omega_*,u_*) =  \texttt{\textbf{GAMMA}}_1(\omega_*,u_*) $.
		
		\STATE Set $ GAP =  \Gamma_2(u_*) - \Gamma_1(\omega_*,u_*) $.
		
		\IF {$ GAP > 0 $}
		\STATE $ u_2 = u_* $.
		\ELSE
		\STATE $ u_1 = u_* $.
		\ENDIF
		
		\ENDWHILE
		
		\STATE Set $ \alpha_*(\mathcal{S}) =  \texttt{\textbf{GAMMA}}_1(\omega_*,u_*) = \texttt{\textbf{GAMMA}}_2(u_*) $.
		
		\STATE \textbf{Outputs:} $ \omega_*, u_*, \alpha_*(\mathcal{S})$. 
	\end{algorithmic}
	\label{compute_alpha}	
\end{algorithm}

Before ending this section, we give the following remark regarding the implementation of $ \textsf{PM}_{\phi_*} $.
\begin{remark}
	It is worth pointing out that the calculation of $ \phi_* $ in a numerical manner should not be a concern for the online implementation of $ \textsf{PM}_{\phi_*} $. On the one hand, our above algorithms shows that the computation of $ \omega_*,u_* $ and $ \alpha_*(\mathcal{S}) $ can be performed efficiently via various numerical root-finding methods  such as  bisection searching. On the other hand (and perhaps more importantly), for a given convex setup $ \mathcal{S} $, the parameters $ \alpha_*(\mathcal{S}),  \omega_*, $ and $ u_* $ can be computed offline before the start of $ \textsf{PM}_{\phi_*} $. This means, we only need to solve either $ \textsf{BVP}_1(\omega_*,u_*,\alpha_*) $ or $ \textsf{BVP}_2(u_*,\alpha_*) $ ``on-the-fly", depending on which segment the total resource utilization level is located.  
\end{remark}

\end{document}